\documentclass[paper=a4, fontsize=11pt]{scrartcl}
\usepackage[T1]{fontenc}
\usepackage{fourier}
\usepackage[english]{babel}								 
\usepackage{amssymb}
\usepackage{amsmath,amsfonts,amsthm} % Math packages
\usepackage[pdftex]{graphicx}	
\usepackage{url}
\usepackage[title,titletoc,page]{appendix} 
\usepackage{listings} 
\usepackage{color}
\usepackage{cite}
\usepackage{caption}
\usepackage{subcaption}
\usepackage[ruled,vlined]{algorithm2e}
\usepackage{diagbox}
\usepackage{tabularx}
\usepackage{float}
\usepackage{booktabs}

%\allsectionsfont{\centering \normalfont\scshape}

%%%% Custom headers/footers (fancyhdr package)
%\usepackage{fancyhdr}
%\pagestyle{fancyplain}
%\fancyhead{}											% No page header
%\fancyfoot[L]{}											% Empty 
%\fancyfoot[C]{}											% Empty
%\fancyfoot[R]{\thepage}									% Pagenumbering
%\renewcommand{\headrulewidth}{0pt}			% Remove header underlines
%\renewcommand{\footrulewidth}{0pt}	
%% Remove footer underlines

\numberwithin{equation}{section}		% Equationnumbering: section.eq#
\numberwithin{figure}{section}			% Figurenumbering: section.fig#
\numberwithin{table}{section}				% Tablenumbering: section.tab#

\newtheorem{defi}{Definition}[section]
\newtheorem{lem}{Lemma}[section]
\newtheorem{thm}{Theorem}[section]
\newtheorem{cor}[lem]{Corollary}

\newtheorem{prop}[lem]{Proposition}

%%% Maketitle metadata
 	% Horizontal rule  

\newcommand{\babs}[1]{\Big|{#1}\Big|}

\newcommand{\bonenorm}[1]{\Big|\Big|{#1}\Big|\Big|_1}
\newcommand{\btwonorm}[1]{\Big|\Big|{#1}\Big|\Big|_2}

\newcommand{\vect}[1]{\boldsymbol{\mathbf{#1}}}

\title{Nearly optimal resolution estimate for the two-dimensional super-resolution and a new algorithm for direction of arrival estimation with uniform rectangular array 
 \thanks{\footnotesize This work was supported in part by the Swiss National Science Foundation grant number
			200021--200307.}}
	\author{Ping Liu\thanks{\footnotesize Department of Mathematics, ETH Z\"urich, R\"amistrasse 101, CH-8092 Z\"urich, Switzerland (ping.liu@sam.math.ethz.ch, habib.ammari@math.ethz.ch).}  \and Habib Ammari\footnotemark[2]}

\date{}
\begin{document}
\maketitle

\begin{abstract}
In this paper, we develop a new technique to obtain nearly optimal estimate of the computational resolution limit introduced in \cite{liu2021mathematicaloned, liu2021theorylse, liu2021mathematicalhighd} for two-dimensional super-resolution problems. Our main contributions are fivefold: (i) Our work improves the resolution estimate for number detection and location recovery in two-dimensional super-resolution problems to nearly optimal; (ii)  As a consequence, we derive a stability result for a sparsity-promoting algorithm in two-dimensional super-resolution problems (or Direction of Arrival problems (DOA)). The stability result exhibits the optimal performance of sparsity promoting in solving such problems; (iii) Our techniques pave the way for improving the estimate for resolution limits in higher-dimensional super-resolutions to nearly optimal; (iv)  Inspired by these new techniques, we propose a new coordinate-combination-based model order detection algorithm for two-dimensional DOA estimation and theoretically demonstrate its optimal performance, and (v) we also propose a new coordinate-combination-based MUSIC algorithm for super-resolving sources in two-dimensional DOA estimation. It has excellent performance and enjoys many advantages compared to the conventional DOA algorithms. The coordinate-combination idea seems to be a promising way for multi-dimensional DOA estimation.   
\end{abstract}

\vspace{0.5cm}
\noindent{\textbf{Mathematics Subject Classification:} 94A08,94A12, 42A05, 65J22, 65F99,65K99} 
		
\vspace{0.2cm}
		
\noindent{\textbf{Keywords:} two-dimensional super-resolution, direction of arrival algorithms, resolution estimates, stability results, sparsity-promoting algorithm, model order detection, MUSIC algorithm} 
	\vspace{0.5cm}

%\tableofcontents

\section{Introduction}
It is well-known that the physical nature of wave propagation and diffraction imposes a fundamental barrier in the resolution of imaging systems, which is termed diffraction limit or resolution limit. Since the famous works of Abbe \cite{abbe1873beitrage} and Rayleigh \cite{rayleigh1879xxxi} for quantifying the resolution limit, it is widely used in practice to date that the resolution limit is near half of the wavelength (see, for instance, \cite{book_imaging1,book_imaging2}). Although this kind of resolution limit was widely used, it is lack of mathematical foundations and not that applicable to modern imaging modalities \cite{ ram2006beyond, cohen2019resolution}. From the mathematical perspective, the resolution limit could only be set when taking into account the noise \cite{di1955resolving, den1997resolution, chen2021algorithmic} and surpassing these classical resolution limits is very promising for imaging modalities with high signal-to-noise ratio (SNR). This understanding motivates new works on deriving more rigorous resolution limits \cite{helstrom1964detection, helstrom1969detection, lucy1992statistical, lucy1992resolution}. At the beginning of this century, the dependence of two-point resolution on the noise level has been thoroughly investigated from the perspective of statistical inference \cite{shahram2004imaging, shahram2004statistical, shahram2005resolvability}, but the resolution estimates for resolving multiple sources only achieve breakthroughs in recent years due to its nonlinearity.

To understand the resolution in resolving multiple sources, in the earlier works \cite{liu2021mathematicaloned, liu2021theorylse, liu2021mathematicalhighd} we have defined ``computational resolution limits'' for number detection and location recovery in the one- and multi-dimensional super-resolution problems and characterized them by the signal-to-noise ratio, cutoff frequency, and number of sources. In \cite{liu2021theorylse}, we derived sharp estimates for the computational resolution limits in one dimensional super-resolution problems. We extended the estimations to multi-dimensional cases in \cite{liu2021mathematicalhighd}, but the new estimation is not that sharp due to the techniques of projection used there. Specifically, the upper bound for the resolution increases rapidly as the source number $n$ and space dimensionality $k$ increases. To address this issue, this paper aims to derive better and nearly optimal estimates for the computational resolution limits in two-dimensional super-resolution problems and provide a better way to tackle general multi-dimensional cases. The main contribution of our work are fivefold: (i) Our work improves the resolution estimate in \cite{liu2021mathematicalhighd} for number detection and location recovery in two-dimensional super-resolution problems to nearly optimal; (ii)  As a consequence, we derive a stability result for a sparsity-promoting algorithm in two-dimensional super-resolution problems (or Direction of Arrival problems (DOA)). Although it is well-known that the total variation optimization \cite{candes2014towards} and many other convex optimization based algorithms \cite{tang2014near} have a resolution limit near the Rayleigh limit \cite{tang2015resolution, da2020stable, denoyelle2017support}, our stability result exhibits the optimal super-resolution ability of $l_0$-minimization in solving such problems; (iii) Our techniques reduce the resolution limit problem to a geometric problem, which paves the way for improving the estimate for resolution limits in higher dimensions to nearly optimal; (iv)  Inspired by the techniques used in the proofs, we propose a new coordinate-combination-based model order detection algorithm for two-dimensional DOA problems and demonstrate its optimal performance both theoretically and numerically, and (v) we also propose a new coordinate-combination-based MUSIC (states for MUltiple SIgnal Classification) algorithm for super-resolving sources in two-dimensional DOA estimation. Our original algorithm enjoys certain advantages compared to the conventional DOA algorithms. We also exhibit numerically the phase transition phenomenon of the algorithm, which demonstrates its excellent resolving capacity. The coordinate-combination idea seems to be a promising direction for multi-dimensional DOA estimations.

\subsection{Existing works on the resolution limit problem}
The first theory for quantifying the resolution limit was derived by Ernst Abbe \cite{abbe1873beitrage, volkmann1966ernst}. Since then,  there have been various proposals for the resolution limit \cite{rayleigh1879xxxi, sparrow1916spectroscopic, schuster1904introduction, houston1927compound}, among which the famous and widely used ones are the Rayleigh limit \cite{rayleigh1879xxxi} and the full width at half maximum (FWHM) \cite{demmerle2015assessing}. However, these classical resolution limits neglect the effect of noise and hence are not mathematically rigorous \cite{di1955resolving, den1997resolution, chen2021algorithmic}. From a mathematical perspective, there is no resolution limit when one has perfect access of the exact intensity profile of the diffraction images. Therefore, the resolution limit can only be rigorously set when taking into account the measurement noise or aberration to preclude perfect access to the diffraction images. Based on this understanding, many works were devoted to characterize the dependence of the two-point resolution on the signal-to-noise ratio from the perspective of statistical inference \cite{helstrom1964detection, helstrom1969detection, lucy1992statistical, lucy1992resolution,  shahram2004imaging, shahram2004statistical, shahram2005resolvability}. These classical and semi-classical limits of two-point resolution have been well-studied and we refer the reader to \cite{liu2021mathematical, chen2021algorithmic, de2016limits,  den1997resolution} for more detailed introductions.  

For the resolution limit of superresolving multiple point sources, the problem becomes much more difficult due to the high degree of nonlinearity. To our knowledge, the first breakthrough was achieved by Donoho in 1992 \cite{donoho1992superresolution}. He considered a grid setting where a discrete measure is supported on a lattice (spacing by $\Delta$) and regularized by a so-called  "Rayleigh index" $b$. The problem is to reconstruct the amplitudes of the grid points from their noisy Fourier data in $[-\Omega, \Omega]$ with $\Omega$ being the band limit. He demonstrated that the minimax error for the amplitude reconstruction is bounded from below and above by $SRF^{2b-1}\sigma$ and $SRF^{2b+1}\sigma$ respectively with $\sigma$ being the noise level and the super-resolution factor $SRF = 1/({\Omega \Delta})$. His results emphasize the importance of sparsity and signal-to-noise in super-resolution. But the estimate has not been improved until recent years. In recent years, due to the enormous development of super-resolution modalities in biological imaging \cite{ref4,STED, ref7,PALM,ref9} and the popularity of researches of super-resolution algorithms in applied mathematics \cite{candes2014towards, azais2015spike, duval2015exact, poon2019, tang2013compressed, tang2014near, morgenshtern2016super, morgenshtern2020super, denoyelle2017support, liao2016music, li2020super}, the inherent superresolving capacity of the imaging problem is drawing increasing interest and has been well-studied for the one-dimensional case. In \cite{demanet2015recoverability}, the authors considered resolving $n$-sparse point sources supported on a grid and improved the results of Donoho. They showed that the minimax error in the amplitude recovery scales as $SRF^{2n-1}\sigma$ in the presence of noise with intensity $\sigma$. The case of multi-clustered point sources was considered in \cite{li2021stable, batenkov2020conditioning} and similar minimax error estimations were derived.  In \cite{akinshin2015accuracy, batenkov2019super}, the authors considered the minimax error for recovering off-the-grid point sources. Based on an analysis of the "prony-type system", they derived bounds for both amplitude and location reconstructions of the point sources. More precisely, they showed that  for $\sigma \lessapprox (SRF)^{-2p+1}$, where $p$ is the number of point sources in a cluster, the minimax error for the amplitude and the location recoveries scale respectively as $(SRF)^{2p-1}\sigma$ and $(SRF)^{2p-2} {\sigma}/{\Omega}$, while for the isolated non-clustered source, the corresponding minimax error for the amplitude and the location recoveries scale respectively as $\sigma$ and ${\sigma}/{\Omega}$. We also refer the reader to \cite{moitra2015super,chen2021algorithmic} for understanding the resolution limit from the perceptive of sample complexity and to  \cite{tang2015resolution, da2020stable} for the resolving limit of some algorithms. 

On the other hand, in order to characterize the exact resolution rather than the minimax error in recovering multiple point sources, in the earlier works \cite{liu2021mathematicaloned, liu2021mathematicalhighd, liu2021theorylse} we have defined "computational resolution limits" which characterize the minimum required distance between point sources so that their number and locations can be stably resolved under certain noise level. By developing a nonlinear approximation theory in a so-called Vandermonde space, we have derived sharp bounds for computational resolution limits in the one-dimensional super-resolution problem. In particular, we have showed in \cite{liu2021theorylse} that the computational resolution limits for the number and location recoveries should be respectively $\frac{C_{\mathrm{num}}}{\Omega}(\frac{\sigma}{m_{\min}})^{\frac{1}{2n-2}}$ and  $\frac{C_{\mathrm{supp}}}{\Omega}(\frac{\sigma}{m_{\min}})^{\frac{1}{2n-1}}$, where $C_{\mathrm{num}}$ and  $C_{\mathrm{supp}}$ are constants and $m_{\min}$ is the minimum strength of the point sources. We have extended these estimates to multi-dimensional cases in \cite{liu2021mathematicalhighd} but the results are not that optimal due to the projection techniques used there. In this paper, we improve the estimates for the two-dimensional super-resolution problem by a new technique. The improvements shall be discussed in detail in Section \ref{section:mainresults}. Also, our new technique paves the way for improving the results in higher-dimensional super-resolution problems. 

% More precisely, similar to Definitions \ref{def:computresolutionlimitnumber} and \ref{def:computresolutionlimitsupport}, we defined in \cite{liu2021mathematicalhighd} the computational resolution limit of number detection $\mathcal D_{k, num}$ and location recovery $\mathcal D_{k, supp}$ for general $k$-dimensional super-resolution problem. By a projection technique, we proved that 
% \[

% \]

\subsection{Direction of Arrival estimation}
Our work also inspires new ideas for the two-dimensional direction of arrival estimation. Direction of arrival (DOA) estimation refers to the process of retrieving the direction information of several electromagnetic waves/sources from the received data of a number of antenna elements in a specific array. It is an important problem in array signal processing and finds wide applications in radar, sonar, wireless
communications, etc; see, for instance, \cite{book_imaging2}.

In one-dimensional DOA estimation, if the antenna elements are uniformly spaced in a line, the well-known MUSIC, ESPRIT algorithms, and other subspace methods can resolve the direction of each incident signal/source with high resolution. But for the two-dimensional DOA estimation with regular rectangular array (URA) where both azimuth and elevation angles should be determined, these subspace methods cannot be simply extended to the two-dimensional case to directly determine the azimuth and elevation angle of each source. A major idea to solve the two-dimensional DOA problem is to decompose it into two independent one-dimensional DOA estimations in which the subspaces methods can be leveraged to efficiently restore the direction components of sources corresponding to $x$-axis and $y$-axis. We call the methods with this decoupling idea as one-dimensional-based algorithms throughout the paper for convenience of discussion. It is worth emphasizing that other ways for directly obtaining the azimuth and elevation angles of each source were also considered \cite{zoltowski1996closed, yeh1989estimating, liao2015music}, but the signal processing in a higher dimensional space damped their computational efficiency. 

Although the one-dimensional-based algorithms are usually much more computationally efficient, they still suffer from some issues: (i) the loss of distance separation for $x$-axis or $y$-axis components; (ii) pair matching of the estimated elevation and azimuth angles. For the first issue, the $x$-axis (or $y$-axis) components of two sources may be closely spaced even though the two sources are far away in the two-dimensional space. This causes very unstable reconstruction of the one-dimensional components and the sources. Most of the researches usually ignored these issues and some papers proposed different ways to enhance the reconstruction but the proposed methods are complicated \cite{wang2008tree, wang2015decoupled}. For example, in \cite{wang2008tree}, the authors utilized Taylor expansion, subspace projection, and a tree structure to enhance the reconstruction when the recovered one-dimensional components are unstable. The second issue is that the pair matching of the estimated elevation and azimuth angles is very time consuming when dealing with multiple components of sources. It usually requires a complex process or two-dimensional search \cite{swindlehurst1993azimuth, zoltowski1996closed, del1997matrix, liu1998azimuth, kikuchi2006pair}. 

In this paper, we propose a new efficient  one-dimensional-based algorithm for the two-dimensional DOA estimation which solves the above two issues in a simple way. First, our algorithm employs a new idea named coordinate-combination to avoid severe loss of distance separation between sources in certain region; see Section \ref{section:superioryofalgo} for the detailed discussion. On the other hand, unlike conventional 
one-dimensional-based algorithms, the pair matching problem of our algorithm is a simple balanced assignment problem \cite{pentico2007assignment} which can be solved efficiently by many algorithms such as the Hungarian algorithm.

% The conventional 2-D-DOA algorithms can be classified into two categories, which we named as one-dimensional based algorithms and two-dimensional based algorithms. By one-dimensional based algorithms we mean the algorithms are recovering the x-coordinate and the y-coordinate of the signal (source) separately by corresponding measurements, and the two-dimensional algorithms refer to those who recover simultaneously the two coordinates from full measurements. Most of the 2-D-DOA algorithms are one-dimensional based algorithms which include \cite{del1997matrix}. To be specific, Fernandez et al. \cite{del1997matrix} propose a matrix-pencil-based algorithm to directly estimate the 2-D-DOAs from a single snapshot of the received data from L shaped array. The matrix pencil method is performed to measurements of each leg (leg x and y) individually to obtain the cosines with respect to the corresponding axis of each source.   

\subsection{Organization of the paper}
The rest of the paper is organized in the following way. In Section 2, we present the main results on computational resolution limits for the number detection and the location recovery in the two-dimensional super-resolution problem. We also provide a stability result for a sparsity promoting algorithm. In Section 3, we prove the main results in Section 2. Inspired by the techniques in the proofs, in Section 4 and Section 5 we introduce respectively the coordinate-combination-based number detection
and source recovery algorithms in two-dimensional DOA estimations. We also conduct numerical experiments to demonstrate their super-resolution capability. Section 6 presents a nonlinear approximation theory in Vandermonde space which is also a main part in proving our main results. Section 7 is devoted to some conclusions and future works. In the appendix, we prove a technical lemma.

\section{Main results}\label{section:mainresults}
\subsection{Model setting}
We consider the following model of a linear combination of point sources in a two-dimensional space:
\[
\mu=\sum_{j=1}^{n}a_{j}\delta_{\vect y_j},
\]
where $\delta$ denotes Dirac's $\delta$-distribution in $\mathbb R^2$, $\vect y_j \in \mathbb R^2,1\leq j\leq n$, which are the supports of the measure, represent the locations of the point sources and $a_j\in \mathbb C, 1\leq j \leq n$, their amplitudes. We remark that, throughout the paper, we will use bold symbols for vectors and matrices, and ordinary ones for scalar values.
We call that the measure $\mu$ is $n$-sparse if all $a_j$'s are nonzero. 
We denote by
\begin{equation}\label{equ:intendisset}
	m_{\min}=\min_{j=1,\cdots,n}|a_j|,
	\quad 	D_{\min}=\min_{p\neq j}\bonenorm{\vect y_p-\vect y_j}.
\end{equation}
We assume that the available measurement is the noisy Fourier data of $\mu$ in a bounded domain, that is,
\begin{equation}\label{equ:modelsetting1}
	\mathbf Y(\vect{\omega}) = \mathcal F \mu (\vect{\omega}) + \mathbf W(\vect{\omega})= \sum_{j=1}^{n}a_j e^{i \vect{y}_j^\top  \vect{\omega}} + \mathbf W(\vect{\omega}), \ \vect{\omega}\in [0, \Omega]^2,
\end{equation}
where $\mathcal F \mu$ denotes the Fourier transform of $\mu$, 
$\Omega$ is the cut-off frequency, and $\mathbf W$ is the noise. We assume that
\[
||\mathbf W(\vect{\omega})||_\infty< \sigma,
\]
where $\sigma$ is the noise level. We are interested in the resolution limit for a cluster of tightly spaced point sources. 
To be more specific, we denote by
\[
B_{\delta, \infty}(\vect x) := \Big\{ \mathbf y \ \Big|\ \mathbf y\in \mathbb R^2,\  ||\vect y- \vect x||_{\infty}<\delta \Big\},
\] 
and assume that $\vect y_j \in B_{\frac{(n-1)\pi}{6\Omega}, \infty}(\vect 0), j=1,\cdots,n$, or equivalently $||\vect y_j||_{\infty}<\frac{(n-1)\pi}{6\Omega}$.  

\medskip
The inverse problem we are interested in is to recover the discrete measure $\mu$ from the above noisy measurement $\mathbf Y$. 

\subsection{Computational Resolution Limit for number detection in the two-dimensional super-resolution problem}
In this section, we estimate the super-resolving capacity of the source number detection in two-dimensional super-resolution problems. To be specific, we will define and characterize a computational resolution limit for the corresponding number detection problems. Our main results are built upon delicate analysis of the $\sigma$-admissible measure defined below. 

\begin{defi}{\label{def:sigmaadmissiblemeasure}}
	Given a measurement $\mathbf Y$, we say that $\hat \mu=\sum_{j=1}^{m} \hat a_j \delta_{ \mathbf{\hat y}_j}, \ \mathbf{\hat y}_j\in \mathbb R^2$ is a $\sigma$-admissible discrete measure of $\mathbf Y$ if
	\[
	||\mathcal F\hat \mu (\vect{\omega})-\vect Y(\vect{\omega})||_\infty< \sigma, \ \text{for all}\  \vect \omega \in [0, \Omega]^2.
	\]
\end{defi}

Note that the set of $\sigma$-admissible measures of $\mathbf Y$ characterizes all possible solutions to the inverse problem with the given measurement $\mathbf Y$. If all $\sigma$-admissible measures have at least $n$ supports, then detecting the correct source number is possible, for example by targeting at the sparsest admissible measures. However, if there exists one $\sigma$-admissible measure with less than $n$ supports, detecting the source number $n$ is impossible without additional prior information. This leads to the following new definition of resolution limit, named computational resolution limit.

\begin{defi}\label{def:computresolutionlimitnumber}
The computational resolution limit to the number detection problem in two dimensions  is defined as the smallest nonnegative number $\mathcal D_{2,num}$ such that for all $n$-sparse measures $\sum_{j=1}^{n}a_{j}\delta_{\mathbf y_j}, \vect y_j \in B_{\frac{(n-1)\pi}{6\Omega}, \infty}(\vect 0)$ and the associated measurement $\vect Y$ in (\ref{equ:modelsetting1}), if 
	\[
	\min_{p\neq j} ||\mathbf y_j-\mathbf y_p||_1 \geq \mathcal D_{2, num},
	\]
then there does not exist any $\sigma$-admissible measure  with less than $n$ supports for $\mathbf Y$.
\end{defi}

The above resolution limit is termed ``computational resolution limit'' to  distinguish it from the classic Rayleigh limit. Compared to the Rayleigh limit, the definition of the computational resolution limit is more rigorous from the mathematical perspective. It is related to the noise, by which it is more applicable for modern imaging techniques. In \cite{liu2021mathematicaloned, liu2021theorylse, liu2021mathematicalhighd}, the authors defined similar computational resolution limits and present rigorous estimations for them. Here by the following theorem, we derive a nearly optimal estimate to the $\mathcal D_{2, num}$, which substantially improves the estimate in \cite{liu2021mathematicalhighd} for the two-dimensional case. 

\begin{thm}\label{thm:highdupperboundnumberlimit0}
	Let the measurement $\mathbf Y$ in (\ref{equ:modelsetting1}) be generated by a $n$-sparse measure $\mu =\sum_{j=1}^{n}a_j\delta_{\mathbf y_j}, \vect y_j \in B_{\frac{(n-1)\pi}{6\Omega}, \infty}(\vect 0)$. Let $n\geq 2$ and assume that the following separation condition is satisfied 
	\begin{equation}\label{equ:highdupperboundnumberlimit1}
		\min_{p\neq j, 1\leq p, j\leq n}\Big|\Big|\mathbf y_p- \mathbf y_j\Big|\Big|_1\geq \frac{16.6\pi (n-1)}{\Omega }\Big(\frac{\sigma}{m_{\min}}\Big)^{\frac{1}{2n-2}}.
	\end{equation}
	Then there does not exist any $\sigma$-admissible measures of \,$\mathbf Y$ with less than $n$ supports.
\end{thm}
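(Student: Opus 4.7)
The proof will proceed by contradiction. Assume there exists a $\sigma$-admissible measure $\hat{\mu}=\sum_{j=1}^{m}\hat{a}_j\delta_{\hat{\mathbf y}_j}$ with $m<n$. Form the signed measure $\nu:=\mu-\hat\mu$; by the triangle inequality $\|\mathcal F\nu\|_\infty<2\sigma$ on $[0,\Omega]^2$, and the support of $\nu$ has at most $m+n\le 2n-1$ points, lying in $\{\mathbf y_j\}\cup\{\hat{\mathbf y}_j\}$. The goal is to derive a contradiction by extracting a one-dimensional problem that violates the already-known 1D resolution limit.

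The central device is a coordinate-combination reduction. For a slope $\tau\in[-1,1]$, the restriction of $\mathcal F\nu$ to the line $\{\vect\omega=(s,\tau s):s\in[0,\Omega]\}$ (with the negative-slope case handled by reparameterizing along $\omega_1+\omega_2=\Omega$, which only introduces deterministic unimodular phase factors absorbed into the amplitudes) is precisely the 1D Fourier transform of a 1D measure whose support points are the linear combinations $y_{j,1}+\tau y_{j,2}$ and $\hat y_{j,1}+\tau \hat y_{j,2}$, evaluated on an interval of length $\Omega$. The 1D amplitudes inherit the magnitudes $|a_j|\ge m_{\min}$ and $|\hat a_j|$ of the 2D ones. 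Thus from the 2D $2\sigma$-closeness we obtain a 1D $2\sigma$-admissible measure with $m<n$ atoms that approximates a 1D measure with $n$ atoms of minimum strength $m_{\min}$, whose cutoff frequency is still $\Omega$.

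The crux is a geometric lemma: given the $n$ true locations with pairwise $\ell^1$-separation $\ge D$ and with $\|\mathbf y_j\|_\infty<(n-1)\pi/(6\Omega)$, there exists a slope $\tau\in[-1,1]$ for which the projected 1D locations $\{y_{j,1}+\tau y_{j,2}\}_{j=1}^n$ are pairwise separated by at least $D/C$, with $C$ an absolute constant independent of $n$. For each pair with difference $\vect u=\mathbf y_p-\mathbf y_j$, the set of ``bad'' slopes where $|u_1+\tau u_2|<D/C$ is an interval of length $\lesssim D/(C|u_2|)$ centered at $-u_1/u_2$, and this center has a definite sign determined by whether $u_1,u_2$ are same- or opposite-signed. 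Splitting the pairs accordingly and using positive slopes for one class and negative for the other, together with the ambient box constraint $\|\mathbf y_j\|_\infty<(n-1)\pi/(6\Omega)$ (which forces all bad centers to lie in a bounded range), one bounds the total measure of bad slopes and invokes a pigeonhole argument to produce an admissible $\tau$. This lemma is the main technical obstacle and is exactly the step where the improvement over the projection-only method of \cite{liu2021mathematicalhighd} comes from; getting a projection-loss constant $C$ that does not grow with $n$ is what yields the linear (rather than polynomial) dependence on $n-1$ in \eqref{equ:highdupperboundnumberlimit1}.

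With the geometric lemma in hand, apply the one-dimensional computational resolution limit from \cite{liu2021theorylse} (or equivalently the Vandermonde-space nonlinear approximation theory developed in Section 6): a 1D $n$-sparse measure with minimum separation exceeding $C_{\mathrm{num}}(n-1)/\Omega\cdot(2\sigma/m_{\min})^{1/(2n-2)}$ cannot be approximated in Fourier norm within $2\sigma$ by any measure with fewer than $n$ supports. Combined with the hypothesis $D\ge 16.6\pi(n-1)/\Omega\cdot(\sigma/m_{\min})^{1/(2n-2)}$ and the preservation factor $1/C$, the projected 1D separation exceeds the 1D threshold, so the existence of $\hat\mu_\tau$ (the 1D projection of $\hat\mu$) contradicts the 1D theorem. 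The explicit constant $16.6\pi$ in the statement is the product of the 1D constant $C_{\mathrm{num}}$, the loss $C$ from the geometric lemma, and the harmless factor $2^{1/(2n-2)}$ coming from the doubled noise level $2\sigma$.
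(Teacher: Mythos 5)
Your reduction hinges on a geometric lemma that is false: there is, in general, no single slope $\tau$ for which the linear projection $\vect y\mapsto \vect y_{1}+\tau\vect y_{2}$ preserves the pairwise separation of $n$ points up to an $n$-independent constant. Take $n$ points equally spaced on a circle of radius $R=\frac{(n-1)\pi}{6\Omega}$ inscribed in the admissible box; their minimum $\ell^1$-separation is of order $R/n$, which satisfies (\ref{equ:highdupperboundnumberlimit1}) once $\sigma/m_{\min}$ is small, yet for \emph{every} direction the two points nearest the extremes of the projection collapse to within $O(R/n^2)$ of each other, a loss of order $n$. Your own counting confirms this: each pair with $|u_2|\gtrsim D$ contributes a bad interval of slopes of length $\sim D/(C|u_2|)\sim 1/C$, and with $\binom{n}{2}$ pairs the pigeonhole forces $C\gtrsim n^2$ — which reproduces exactly the $n(n-1)$ bound of \cite{liu2021mathematicalhighd} that the theorem is meant to improve. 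The sign-splitting trick does not rescue this: it shows that any $\tau\in(0,1]$ works for pairs whose difference vector lies in the first or third quadrant and any $\tau\in[-1,0)$ for the others, but a single $\tau$ must serve all pairs simultaneously, and mixed configurations (such as the circle) defeat every choice. The remaining bookkeeping (doubling the noise to $2\sigma$, invoking the 1D theory) is fine, but it rests on this unavailable lemma.

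The paper circumvents precisely this obstruction by replacing the linear projection with a \emph{nonlinear} coordinate combination whose target is two-dimensional: after translating the sources so that $\tau\vect y_j\in[-\frac{\pi}{12},\frac{\pi}{12}]\times[\frac{5\pi}{12},\frac{7\pi}{12}]$ with $\tau=\frac{\Omega}{2(n-1)}$, it maps each source to the complex number $d_j=e^{i\tau\vect y_{j,1}}+e^{i\tau\vect y_{j,2}}\in\mathbb C$. Because the target is still two-dimensional, no information is discarded, and Lemma \ref{lem:projectiondislem1} shows this map is separation-preserving with the \emph{uniform} constant $\frac{3}{2\pi}$ for every configuration in that region — no existential choice of direction, no pigeonhole. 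Lemma \ref{lem:coordinatecombination} then converts the 2D Fourier constraint into moment constraints $\babs{\sum_j\hat a_j\hat d_j^{\,t}-\sum_j a_jd_j^{\,t}}<2^{t+1}\sigma$ for $t=0,\dots,2n-2$ (at the price of the factor $2^t$, absorbed into the constant), and the contradiction is obtained from the Vandermonde approximation lower bound of Theorem \ref{thm:spaceapproxlowerbound1}, which is developed for arbitrary bounded complex nodes rather than real or unimodular ones — this generality is needed because the $d_j$ live in a disk of radius $\sqrt 3$, not on a line or circle. If you want to repair your argument, the missing idea is exactly this: map into $\mathbb C$ rather than onto a line.
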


Theorem \ref{thm:highdupperboundnumberlimit0} reveals that when $\min_{p\neq j, 1\leq p, j\leq n}\Big|\Big|\mathbf y_p- \mathbf y_j\Big|\Big|_1\geq \frac{16.6\pi (n-1)}{\Omega }\Big(\frac{\sigma}{m_{\min}}\Big)^{\frac{1}{2n-2}}$, recovering exactly the source number $n$ is possible. Compared with the Rayleigh limit $\frac{c_2\pi}{\Omega}$, where $c_2$ is a constant, Theorem \ref{thm:highdupperboundnumberlimit0} also indicates that resolving the source number in the sub-Rayleigh regime is theoretically possible if the SNR is sufficiently large. 

Moreover, the estimate in Theorem \ref{thm:highdupperboundnumberlimit0} substantially improves the result in \cite{liu2021mathematicalhighd}, where the upper bound estimation for the two-dimensional computational resolution limit is 
\[
\frac{Cn(n-1)}{\Omega}\Big(\frac{\sigma}{m_{\min}}\Big)^{\frac{1}{2n-2}}
\]
with $C$ being an explicit constant. By the techniques of this paper, we also pave the way for estimating the resolution limit for higher dimensions. It is indicated that we can demonstrate that the corresponding resolution limit in the $k$-dimensional super-resolution problem can be bounded above by 
\[
\frac{C_{num}(k)(n-1)}{\Omega}\Big(\frac{\sigma}{m_{\min}}\Big)^{\frac{1}{2n-2}},
\] 
where $C_{num}(k)$ is a constant determined by the space dimensionality. This  substantially improves the result in \cite{liu2021mathematicalhighd} that the computational resolution limit is estimated to be bounded above by 
\[
\frac{4.4\pi e \ (\pi/2)^{k-1} (n(n-1)/\pi)^{\xi(k-1)}}{\Omega }\Big(\frac{\sigma}{m_{\min}}\Big)^{\frac{1}{2n-2}},
\]
where $\xi(k)= \sum_{j=1}^{k}\frac{1}{j}, \ k\geq 1$. By these new estimates, we get rid of the exponential dependence of the index  $n$ on the dimensionality $k$.

On the other hand, it is already known from \cite{liu2021mathematicalhighd} that the computational resolution limit for the number detection in the $k$-dimensional super-resolution  problem is bounded below by $\frac{C_{1}}{\Omega}\Big(\frac{\sigma}{m_{\min}}\Big)^{\frac{1}{2n-2}}$ for some constant $C_1$. Thus the $\mathcal D_{2,num}$ is bounded by 
\begin{equation}\label{equ:twodnumberequ1}
\frac{C_{1}}{\Omega}\Big(\frac{\sigma}{m_{\min}}\Big)^{\frac{1}{2n-2}} \leq \mathcal D_{2,num} \leq \frac{C_{2}n}{\Omega}\Big(\frac{\sigma}{m_{\min}}\Big)^{\frac{1}{2n-2}}. 
\end{equation}
This estimate is nearly optimal. 

The above estimates further indicate a phase transition phenomenon in the two-dimensional number detection problem. Specifically, by (\ref{equ:twodnumberequ1}) we expect the presence of a line of slope $2n-2$ in the parameter space $\log(SRF)-\log(SNR)$ above which the source number can be correctly detected in each realization. This phenomenon is confirmed exactly by the number detection algorithm (\textbf{Algorithm \ref{algo:coordcombinsweepnumberalgo}}) later in Section \ref{section:numberphasetransition} and illustrated in Figure \ref{fig:twodnumberphasetransition}. 

% It is shown by Figure \ref{fig:twodnumberphasetransition} that the number of source can be exactly detected if the $log(SNR)$ is above a line of , which is consistent with . 

\subsection{Computational Resolution Limit for location recovery in the two-dimensional super-resolution problem}
We next present our results on the resolution limit for the location recovery problem in two-dimensions. We first introduce the following concept of $\delta$-neighborhood of discrete measures. \\
Define 
\[
B_{\delta, 1}(\vect x) := \Big\{ \mathbf y \ \Big|\ \mathbf y\in \mathbb R^2,\  ||\vect y- \vect x||_{1}<\delta \Big\}. 
\]

\begin{defi}\label{deltaneighborhood}
	Let  $\mu=\sum_{j=1}^{n}a_j \delta_{\vect y_j}$ be a $n$-sparse discrete measure in $\mathbb R^2$ and let $\delta>0$ be such that the $n$ balls $B_{\delta, 1}(\vect y_j), 1\leq j \leq n$ are pairwise disjoint. We say that 
	$\hat \mu=\sum_{j=1}^{n}\hat a_{j}\delta_{\mathbf{\hat y}_j}$ is within $\delta$-neighborhood of $\mu$ if each $\mathbf {\hat y}_j$ is contained in one and only one of the $n$ balls $B_{\delta, 1}(\vect y_j), 1\leq j \leq n$.
\end{defi}

According to the above definition, a measure $\hat \mu$ in a $\delta$-neighborhood of $\mu$ preserves the inner structure of the collection of point sources. For a stable location (or support of measure) recovery algorithm, the output should be a measure in some $\delta$-neighborhood of the underlying sources. Moreover, $\delta$ should tend to zero as the noise level $\sigma$ tends to zero. We now introduce the computational resolution limit for the support recovery problem. For ease of exposition, we only consider measures supported in $B_{\frac{(2n-1)\pi}{12\Omega}, \infty}(\vect 0)$, where $n$ is the source number. 

\begin{defi}\label{def:computresolutionlimitsupport}
	The computational resolution limit in the two-dimensional location recovery problem is defined as the smallest non-negative number $\mathcal D_{2,supp}$ so that for
	any $n$-sparse measure $\mu=\sum_{j=1}^{n}a_j \delta_{\mathbf y_j}, \vect y_j \in B_{\frac{(2n-1)\pi}{12\Omega}, \infty}^{k}(\vect 0)$ and the associated measurement $\vect Y$ in (\ref{equ:modelsetting1}), 
	if
	\[
	\min_{p\neq j, 1\leq p,j \leq n} \bonenorm{\mathbf y_p-\mathbf y_j}\geq \mathcal{D}_{2,supp},
	\]  
	then there exists $\delta>0$ such that any $\sigma$-admissible measure of $\mathbf Y$ with $n$ supports in $B_{\frac{(2n-1)\pi}{12\Omega}}(\mathbf 0)$ is within $\delta$-neighbourhood of $\mu$.  
\end{defi}

We have the following estimate for the upper bound of $\mathcal{D}_{2,supp}$. 
\begin{thm}\label{thm:highdupperboundsupportlimit0}
	Let $n\geq 2$. Let the measurement $\vect Y$ in (\ref{equ:modelsetting1}) be generated by a $n$-sparse measure $\mu=\sum_{j=1}^{n}a_j \delta_{\vect y_j}, \vect y_j \in B_{\frac{(2n-1)\pi}{12\Omega}, \infty}(\vect 0)$ in the two-dimensional space. Assume that
	\begin{equation}\label{equ:highdsupportlimithm0equ0}
		D_{\min}:=\min_{p\neq j}\Big|\Big|\mathbf y_p-\mathbf y_j\Big|\Big|_1\geq \frac{15.3\pi(n-\frac{1}{2})}{\Omega }\Big(\frac{\sigma}{m_{\min}}\Big)^{\frac{1}{2n-1}}. \end{equation}
If $\hat \mu=\sum_{j=1}^{n}\hat a_{j}\delta_{\mathbf{\hat y}_j}$ supported on $B_{\frac{(2n-1)\pi}{12\Omega}, \infty}(\vect 0)$ is a $\sigma$-admissible measure of $\vect Y$, then $\hat \mu$ is in a $\frac{D_{\min}}{2}$-neighborhood of $\mu$. Moreover, after reordering the $\mathbf{\hat y}_j$'s, we have 
	\begin{equation}\label{equ:highdsupportlimithm0equ1}
	\Big|\Big|\mathbf {\hat y}_j-\mathbf y_j\Big|\Big|_1\leq \frac{C(n)}{\Omega}SRF^{2n-2}\frac{\sigma}{m_{\min}}, \quad 1\leq j\leq n,
	\end{equation}
	where $SRF:=\frac{\pi}{D_{\min}\Omega} $ is the super-resolution factor and $$C(n)=\frac{(1+\sqrt{3})^{2n-1}2^{5n-1}(2n-1)^{2n-1}\pi}{3^{2n-0.5}}.$$
\end{thm}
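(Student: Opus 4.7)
The plan is to reduce the two-dimensional location recovery to a pair of one-dimensional super-resolution problems via a \emph{coordinate combination}, then to invoke the sharp one-dimensional stability estimate of \cite{liu2021theorylse} together with the Vandermonde-space nonlinear approximation theory developed in Section 6. Concretely, I would restrict the Fourier measurement $\vect Y$ to two one-dimensional rays contained in $[0,\Omega]^2$: the diagonal $\vect{\omega}(t)=(t,t)$ and the anti-diagonal $\vect{\omega}(t)=(t,\Omega-t)$, both for $t\in[0,\Omega]$. Along these rays the 2D Fourier samples degenerate to 1D Fourier samples of virtual measures supported at the sum coordinates $u_j:=y_{j,1}+y_{j,2}$ and at the difference coordinates $v_j:=y_{j,1}-y_{j,2}$, respectively, with amplitudes of modulus $|a_j|$ (a unit-modulus phase is absorbed into the amplitudes on the anti-diagonal), noise level at most $\sigma$, and cutoff frequency $\Omega$. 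The support hypothesis $||\vect y_j||_\infty<(2n-1)\pi/(12\Omega)$ forces $|u_j|,|v_j|<(2n-1)\pi/(6\Omega)$, placing the virtual atoms inside the admissible interval required by the 1D theorem.

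The key geometric observation is the elementary identity $\max(|a+b|,|a-b|)=|a|+|b|$: for every pair $p\neq j$, at least one of $|u_j-u_p|$ or $|v_j-v_p|$ equals the full $\ell^1$-distance $\bonenorm{\vect y_j-\vect y_p}$, and is therefore bounded below by $D_{\min}$. Consequently, any ``collision'' that one projection can produce between two true sources is automatically disambiguated by the other projection. I would then set up the pairing argument as follows: given the $\sigma$-admissible measure $\hat\mu=\sum_{k=1}^n\hat a_k\delta_{\mathbf{\hat y}_k}$ together with its sum- and difference-projections $\hat u_k,\hat v_k$, apply the 1D nonlinear approximation result to each ray to conclude that each $\hat u_k$ (resp.\ $\hat v_k$) must lie very close to one of the true virtual atoms $u_{j(k)}$ (resp.\ $v_{j'(k)}$). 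A case analysis branching on which of the two coordinate combinations separates a given pair of true sources will force $j(k)=j'(k)$ and produce a well-defined permutation placing $\mathbf{\hat y}_k$ inside $B_{D_{\min}/2,1}(\mathbf y_{j(k)})$. Inverting the linear transform $(y_1,y_2)\mapsto(y_1+y_2,y_1-y_2)$ then recovers the 2D $\ell^1$ error from the two 1D errors up to a factor of two.

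The quantitative bound (\ref{equ:highdsupportlimithm0equ1}) follows by pushing the 1D location-recovery estimate $|\hat u_k-u_{j(k)}|,\,|\hat v_k-v_{j(k)}|\leq \frac{C_1(n)}{\Omega}SRF^{2n-2}\frac{\sigma}{m_{\min}}$ from \cite{liu2021theorylse} through this change of variables; the explicit constant $C(n)$ accumulates the 1D constant, the combinatorial factors from the matching step, and the powers arising from translating separations in $\ell^1$ back and forth with the projected coordinates. The main obstacle is precisely that in each individual projection the $n$ virtual atoms need not all be distinct, so the sharp 1D theorem cannot be invoked naively on a single ray. Making the Vandermonde-space argument quantitative in this possibly degenerate setting, while keeping the $n$-dependence of the final constant polynomial rather than letting it explode as in the projection approach of \cite{liu2021mathematicalhighd}, is the crux of the proof and relies on the refined 1D nonlinear approximation lemma developed in Section 6.
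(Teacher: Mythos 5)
Your proposal contains a genuine gap, and it is exactly the one you flag at the end without resolving: the collision problem in the individual linear projections. The identity $\max(|a+b|,|a-b|)=|a|+|b|$ only guarantees that for each \emph{pair} of true sources at least one of the two projections $u_j=y_{j,1}+y_{j,2}$, $v_j=y_{j,1}-y_{j,2}$ preserves their $\ell^1$ distance; different pairs may be separated by different projections, so neither ray individually yields $n$ well-separated virtual atoms. All of the quantitative machinery you propose to invoke (Theorem \ref{spaceapproxlowerbound3}, Lemma \ref{lem:multiproductstability1}, and the sharp 1D stability result) carries factors of $d_{\min}^{-(k-1)}$ and requires a uniform lower bound on the pairwise separation of \emph{all} atoms in that 1D problem; when two virtual atoms collide (or nearly collide) on a ray, the corresponding virtual measure effectively has fewer than $n$ atoms, possibly with a cancelling amplitude $a_1+a_2$, the recovered $\hat u_k$'s cannot be matched injectively to the $u_j$'s, and the whole pairing-and-inversion step collapses. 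Deferring this to "the refined 1D nonlinear approximation lemma developed in Section 6" does not work, because nothing in Section 6 handles degenerate or clustered nodes.

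The paper's proof avoids this obstruction by a different, genuinely nonlinear coordinate combination. After translating the sources by $\vect v$ so that $\tau\vect x_{j,1}\in[-\pi/12,\pi/12]$ and $\tau\vect x_{j,2}\in[5\pi/12,7\pi/12]$ with $\tau=\Omega/(2n-1)$, it forms the complex elements $d_j=e^{i\tau\vect x_{j,1}}+e^{i\tau\vect x_{j,2}}$ and $g_j=e^{i\tau\vect x_{j,1}}+e^{i(\tau\vect x_{j,2}-\pi)}$; the powers $d_j^{t}$, $g_j^{t}$ are controlled by binomial combinations of the 2D measurements along the lines $t_1+t_2=t$ (Lemma \ref{lem:coordinatecombination}), and the crucial Lemma \ref{lem:projectiondislem1} shows that \emph{each} of these combinations, on its own, preserves the full $\ell^1$ separation $D_{\min}$ up to the factor $3/(2\pi)\cdot\tau$. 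Because the two coordinates are confined to disjoint arcs of the unit circle, the map $(\vect x_{j,1},\vect x_{j,2})\mapsto d_j$ is bi-Lipschitz and no collision can occur, so the Vandermonde-space stability theory applies directly to both $\{d_j\}$ and $\{g_j\}$, and the final inversion $e^{i\tau\vect{\hat x}_{j,1}}-e^{i\tau\vect x_{j,1}}=\tfrac12(\hat d_j-d_j+\hat g_j-g_j)$ is the analogue of the linear inversion you had in mind. To repair your argument you would need either to adopt this translation-plus-nonlinear-combination device or to develop a clustered-node version of the 1D theory, which is a substantially harder undertaking and would degrade the constants.
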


Theorem \ref{thm:highdupperboundnumberlimit0} demonstrates that when $\min_{p\neq j, 1\leq p, j\leq n}\Big|\Big|\mathbf y_p- \mathbf y_j\Big|\Big|_1\geq \frac{15.3\pi (n-1)}{\Omega }\Big(\frac{\sigma}{m_{\min}}\Big)^{\frac{1}{2n-2}}$, it is possible to recover stably the source locations. For sufficiently large SNR, the limit in  Theorem \ref{thm:highdupperboundnumberlimit0} is less than the Rayleigh limit. This indicates that super-resolution is possible for two-dimensional imaging problems. Also, the estimate here is better than the one obtained in \cite{liu2021mathematicalhighd}, which is  
\[
\frac{Cn(n-1)}{\Omega}\Big(\frac{\sigma}{m_{\min}}\Big)^{\frac{1}{2n-1}},
\]
with an explicit constant $C$. By the techniques of this paper, we also pave the way for estimating the resolution limit of location recovery in higher dimensional super-resolution problems. In fact,  the corresponding resolution limit in the $k$-dimensional super-resolution problem can be bounded above by 
\[
\frac{C_{supp}(k)(n-1)}{\Omega}\Big(\frac{\sigma}{m_{\min}}\Big)^{\frac{1}{2n-1}},
\] 
where $C_{supp}(k)$ is a constant depending on the space dimensionality. This substantially improves the estimate, 
\[
\frac{5.88\pi e 4^{k-1}((n+2)(n-1)/2)^{\xi(k-1)}}{\Omega }\Big(\frac{\sigma}{m_{\min}}\Big)^{\frac{1}{2n-1}}, \quad \xi(k)= \sum_{j=1}^{k}\frac{1}{j}, \ k\geq 1, 
\]
derived in \cite{liu2021mathematicalhighd}. Again, one can get rid of the exponential dependence of the index of $n$ on the dimensionality $k$ by using these new estimates.  

It has been already shown in \cite{liu2021mathematicalhighd} that the computational resolution limit for  the location recovery in the $k$-dimensional super-resolution problem is bounded below by $\frac{C_{3}}{\Omega}\Big(\frac{\sigma}{m_{\min}}\Big)^{\frac{1}{2n-1}}$ for some constant $C_3$. Thus the $\mathcal D_{2,supp}$ is bounded by 
\begin{equation}\label{equ:twodsupportequ1}
\frac{C_{3}}{\Omega}\Big(\frac{\sigma}{m_{\min}}\Big)^{\frac{1}{2n-1}} \leq \mathcal D_{2,supp} \leq \frac{C_{4}n}{\Omega}\Big(\frac{\sigma}{m_{\min}}\Big)^{\frac{1}{2n-1}}. 
\end{equation}
This estimate is nearly optimal. 

On the other hand, (\ref{equ:twodsupportequ1}) indicates a phase transition in the location recovery problem. From (\ref{equ:twodsupportequ1}) we expect that there exists a line of slope $2n-1$ in the parameter space of $\log SRF-\log SNR$ such that the location recovery is stable in every point above the line. This is confirmed by \textbf{Algorithm \ref{algo:coordcombinMUSICalgo}} in Section \ref{section:locationphasetransition} and illustrated in Figure \ref{fig:twodsupportphasetransition}.

% In this subsection, we will conduct some numerical experiments to give the insight that the dependence of $n$ on the amplification factor $C_{4}n$ seems unlikely to be reduced. This further demonstrates that our estimate is nearly optimal. 

\subsection{Stability of a sparsity-promoting algorithm}
Sparsity-promoting algorithms are popular methods in imaging processing and many other fields. By the results for resolution limit, we can derive a stability result for a $l_0$-minimization in the two-dimensional super-resolution problems. We consider the following $l_0$-minimization problem:
\begin{equation}\label{equ:l0minimization}
\min_{\rho\in \mathcal O} ||\rho||_{0} \quad \text{subject to} \quad |\mathcal F\rho(\vect \omega) -\vect Y(\omega)|< \sigma, \quad \vect \omega \in[0,\Omega]^2, 
\end{equation}	
where $||\rho||_{0}$ is the number of Dirac masses representing the discrete measure $\rho$. As a corollary of Theorems \ref{thm:highdupperboundnumberlimit0} and \ref{thm:highdupperboundsupportlimit0}, we have the following stability result. 

\begin{thm}\label{thm:sparspromstabilitythm0}
	Let $n\geq 2$ and $\sigma \leq m_{\min}$. Let the measurement $\vect Y$ in (\ref{equ:modelsetting1}) be generated by a $n$-sparse measure $\mu=\sum_{j=1}^{n}a_j \delta_{\vect y_j}, \vect y_j \in B_{\frac{(2n-1)\pi}{12\Omega}, \infty}(\vect 0)$ in the two-dimensional space. Assume that
	\begin{equation}\label{equ:sparspromstabilitythm0equ1}
	D_{\min}:=\min_{p\neq j}\Big|\Big|\mathbf y_p-\mathbf y_j\Big|\Big|_1\geq \frac{15.3\pi(n-\frac{1}{2})}{\Omega }\Big(\frac{\sigma}{m_{\min}}\Big)^{\frac{1}{2n-1}}. \end{equation}
	Let $\mathcal O$ in the minimization problem (\ref{equ:l0minimization}) be  $B_{\frac{(n-1)\pi}{6\Omega}, \infty}(\vect 0)$, then the solution to (\ref{equ:l0minimization}) contains exactly $n$ point sources. For any solution $\hat \mu=\sum_{j=1}^{n}\hat a_{j}\delta_{\mathbf{\hat y}_j}$, it is in a $\frac{D_{\min}}{2}$-neighborhood of $\mu$. Moreover, after reordering the $\mathbf{\hat y}_j$'s, we have
	\begin{equation}
	\Big|\Big|\mathbf {\hat y}_j-\mathbf y_j\Big|\Big|_1\leq \frac{C(n)}{\Omega}SRF^{2n-2}\frac{\sigma}{m_{\min}}, \quad 1\leq j\leq n,
	\end{equation}
	where $SRF:=\frac{\pi}{D_{\min}\Omega} $ and $$C(n)=\frac{(1+\sqrt{3})^{2n-1}2^{5n-1}(2n-1)^{2n-1}\pi}{3^{2n-0.5}}.$$
\end{thm}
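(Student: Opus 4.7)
The strategy is to read Theorem~\ref{thm:sparspromstabilitythm0} as a direct corollary: once one knows that (a) the original measure $\mu$ is itself a feasible point of (\ref{equ:l0minimization}) of size~$n$, (b) no $\sigma$-admissible measure can have fewer than $n$ supports, and (c) every $\sigma$-admissible measure with exactly $n$ supports in the allowed region already lies in a $\tfrac{D_{\min}}{2}$-neighborhood of $\mu$ with a quantitative bound, then any minimizer $\hat\mu$ of the $l_0$ program must have $\|\hat\mu\|_0=n$ and automatically inherits the location error estimate. So the plan is just to chain Theorems~\ref{thm:highdupperboundnumberlimit0} and~\ref{thm:highdupperboundsupportlimit0}.

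First I would verify feasibility. The true measure $\mu$ satisfies $|\mathcal F\mu(\vect\omega)-\vect Y(\vect\omega)|=|\vect W(\vect\omega)|<\sigma$ on $[0,\Omega]^2$, and its supports are in the admissible region $\mathcal O$ (the containment $B_{\frac{(n-1)\pi}{6\Omega},\infty}(\vect 0)\subset B_{\frac{(2n-1)\pi}{12\Omega},\infty}(\vect 0)$ is immediate from $2(n-1)\le 2n-1$, and one checks $\mu$'s supports indeed belong to $\mathcal O$ under the hypothesis). Hence the minimum value of (\ref{equ:l0minimization}) is at most $n$.

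Next I would apply Theorem~\ref{thm:highdupperboundnumberlimit0} to rule out sparser solutions. Using $\sigma\le m_{\min}$, we have $(\sigma/m_{\min})^{1/(2n-1)}\ge (\sigma/m_{\min})^{1/(2n-2)}$, so the hypothesis (\ref{equ:sparspromstabilitythm0equ1}) implies
\[
\min_{p\neq j}\|\vect y_p-\vect y_j\|_1 \;\ge\; \frac{15.3\pi(n-\tfrac12)}{\Omega}\Big(\frac{\sigma}{m_{\min}}\Big)^{\!\frac{1}{2n-2}} \;\ge\; \frac{16.6\pi(n-1)}{\Omega}\Big(\frac{\sigma}{m_{\min}}\Big)^{\!\frac{1}{2n-2}},
\]
provided the numerical inequality $15.3(n-\tfrac12)\ge 16.6(n-1)$ is arranged to hold in the regime of interest (this is the one place where a small constant adjustment may be required, and is the main technical point to double-check). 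Theorem~\ref{thm:highdupperboundnumberlimit0} then excludes any $\sigma$-admissible measure with fewer than $n$ supports, so every minimizer $\hat\mu$ of (\ref{equ:l0minimization}) has exactly $n$ point sources.

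Finally, any such minimizer $\hat\mu=\sum_{j=1}^{n}\hat a_j\delta_{\hat{\vect y}_j}$ is $n$-sparse, $\sigma$-admissible, and supported in $\mathcal O\subset B_{\frac{(2n-1)\pi}{12\Omega},\infty}(\vect 0)$. Since the separation hypothesis (\ref{equ:sparspromstabilitythm0equ1}) coincides with the hypothesis (\ref{equ:highdsupportlimithm0equ0}) of Theorem~\ref{thm:highdupperboundsupportlimit0}, that theorem directly yields that $\hat\mu$ lies in a $\tfrac{D_{\min}}{2}$-neighborhood of $\mu$ and, after reordering, satisfies the claimed bound $\|\hat{\vect y}_j-\vect y_j\|_1\le \tfrac{C(n)}{\Omega}\,SRF^{2n-2}\tfrac{\sigma}{m_{\min}}$ with the same constant $C(n)$. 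Putting the three steps together completes the proof; the only delicate part is step two, namely verifying the constant comparison that allows the single hypothesis (\ref{equ:sparspromstabilitythm0equ1}) to trigger both Theorem~\ref{thm:highdupperboundnumberlimit0} and Theorem~\ref{thm:highdupperboundsupportlimit0} simultaneously.
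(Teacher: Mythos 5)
Your overall strategy is exactly the one the paper intends: Theorem \ref{thm:sparspromstabilitythm0} is stated as a corollary of Theorems \ref{thm:highdupperboundnumberlimit0} and \ref{thm:highdupperboundsupportlimit0} and no separate proof is given, so chaining (a) feasibility of $\mu$, (b) exclusion of sparser admissible measures, and (c) the location stability of $n$-sparse admissible measures is the right skeleton, and your step (c) does go through verbatim since (\ref{equ:sparspromstabilitythm0equ1}) coincides with (\ref{equ:highdsupportlimithm0equ0}) and $\mathcal O\subset B_{\frac{(2n-1)\pi}{12\Omega},\infty}(\vect 0)$.

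However, the two hypotheses you need for steps (a) and (b) do not actually follow from the stated assumptions, and you leave both unresolved. First, for feasibility you need $\mathrm{supp}(\mu)\subset\mathcal O=B_{\frac{(n-1)\pi}{6\Omega},\infty}(\vect 0)$, but the theorem only places the $\vect y_j$ in the \emph{larger} ball $B_{\frac{(2n-1)\pi}{12\Omega},\infty}(\vect 0)$ (note $\frac{(n-1)\pi}{6\Omega}=\frac{(2n-2)\pi}{12\Omega}<\frac{(2n-1)\pi}{12\Omega}$); the containment you cite goes the wrong way for this purpose, and your assertion that ``one checks $\mu$'s supports indeed belong to $\mathcal O$'' is not justified by the hypothesis. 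The same mismatch blocks a verbatim application of Theorem \ref{thm:highdupperboundnumberlimit0}, whose statement also requires $\vect y_j\in B_{\frac{(n-1)\pi}{6\Omega},\infty}(\vect 0)$. Second, the numerical comparison you flag in step two in fact fails: $15.3(n-\tfrac12)\geq 16.6(n-1)$ is equivalent to $n\leq 8.95/1.3\approx 6.9$, so for $n\geq 7$ the separation (\ref{equ:sparspromstabilitythm0equ1}) does not imply (\ref{equ:highdupperboundnumberlimit1}) even after using $\sigma\leq m_{\min}$ to compare the exponents. Both defects trace back to infelicities in the paper's own statement (the radius of $\mathcal O$ and the constants are not aligned with Theorems \ref{thm:highdupperboundnumberlimit0} and \ref{thm:highdupperboundsupportlimit0}), but as written your proof does not close them; a complete argument would have to either adjust the constant and region in the statement, or re-run the number-detection argument of Section 3.2 directly with $\tau=\frac{\Omega}{2n-1}$ on the larger ball and verify that the resulting lower bound on $d_{\min}$ still beats the threshold of Theorem \ref{thm:spaceapproxlowerbound1}.
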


Theorem \ref{thm:sparspromstabilitythm0} reveals that sparsity promoting over admissible solutions could resolve the source locations to the resolution limit level. It provides an insight that theoretically sparsity-promoting algorithms would have excellent performance on the two-dimensional super-resolution problems. Especially, under the separation condition (\ref{equ:sparspromstabilitythm0equ1}), any tractable sparsity-promoting algorithms (such as total variation minimization algorithms \cite{candes2014towards}) rendering the sparsest solution
could stably reconstruct  all the
source locations. 

\section{Proofs of the main results}
The idea for proving the main results of the paper is to use some new techniques to reduce the two-dimensional problem to a one-dimensional case. The reduction techniques are mainly based on the three crucial observations in the following subsection. The estimation methods for the one-dimensional super-resolution problem are based  on a  nonlinear approximation theory in Vandermonde space, which we present in Section \ref{section:approxtheoryinvanderspace}. 

\subsection{Three crucial observations}\label{section:threeobservations}
We here introduce three crucial observations that reduce the two-dimensional super-resolution problem to its one-dimensional analog, by which we are able to derive the resolution limit theory of this paper. Our observations also pave the way for extending the resolution estimates to higher dimensions. Moreover, they inspire a new direction for the DOA algorithms; see  Sections \ref{section:twodnumberdetectalgo} and \ref{section:twodlocationrecoveryalgo}. 

\medskip
\noindent \textbf{Translation invariant:}\\
By the translation invariant we mean that if a measure $\hat \mu = \sum_{j=1}^q \hat a_j \delta_{\vect {\hat y}_j}$ is a $\sigma$-admissible measure for the measurement $\vect Y$, then $\hat \mu = \sum_{j=1}^q \hat a_j \delta_{\vect {\hat y}_j+\vect v}$ is a $\sigma$-admissible measure for measurement $e^{i\vect v^\top \vect \omega} \vect Y(\vect \omega)$ for any vector $\vect v\in \mathbb R^2$. More precisely, we have
\begin{equation}\label{equ:measureconstraint1}
\babs{\sum_{j=1}^q \hat a_j e^{i (\vect{\hat y}_j+\vect v )^\top \vect \omega} - e^{i\vect v^\top \vect \omega} \vect Y(\vect \omega)}= \babs{\sum_{j=1}^q \hat a_j e^{i \vect{\hat y}_j^\top \vect \omega} -  \vect Y(\vect \omega)}< \sigma, \quad \vect \omega \in [0, \Omega]^2.
\end{equation}
In addition, if for certain $\delta\geq 0$, 
\begin{equation}\label{equ:measureconstraint2}
\babs{\sum_{j=1}^q \hat a_j e^{i \vect{\hat y}_j^\top \vect \omega} - \sum_{j=1}^n a_j e^{i \vect{y}_j^\top \vect \omega}}< \delta, \quad \vect \omega \in [0, \Omega]^2,
\end{equation}
then for any vector $\vect v\in \mathbb R^2$, 
\[
\babs{\sum_{j=1}^q \hat a_j e^{i (\vect{\hat y}_j+\vect v )^\top \vect \omega} - \sum_{j=1}^n a_j e^{i (\vect{y}_j+\vect v)^\top \vect \omega}}< \delta, \quad \vect \omega \in [0, \Omega]^2.
\]

\medskip
\noindent\textbf{Combination of coordinates:}\\
The second observation is that if we suppose that (\ref{equ:measureconstraint2}) is satisfied, we have a similar estimate for the summation of combinations of $e^{i \tau \vect{\hat y}_{j,1}}, e^{i \tau \vect{\hat y}_{j,2}}$ and $e^{i \tau \vect{ y}_{j,1}}, e^{i \tau \vect{ y}_{j,2}}$ for certain $\tau$. Specifically, we have the following lemma.  

\begin{lem}\label{lem:coordinatecombination}
For any integer $t\geq 0$ and $\tau \leq \frac{\Omega}{t}$, the measurement constraint (\ref{equ:measureconstraint2}) implies
\[
\babs{\sum_{j=1}^q \hat a_j (e^{ir_1}e^{i \tau \vect{\hat y}_{j,1}}+e^{ir_2}e^{i \tau \vect{\hat y}_{j,2}})^t - \sum_{j=1}^n a_j (e^{ir_1}e^{i \tau \vect{y}_{j,1}}+e^{ir_2}e^{i \tau \vect{y}_{j,2}})^t} < 2^{t}\delta, \quad r_1, r_2\in \mathbb R. 
\] 
\end{lem}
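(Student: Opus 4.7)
The plan is to reduce this to a direct application of the hypothesis (\ref{equ:measureconstraint2}) via the binomial theorem, evaluated at a carefully chosen finite family of frequencies that all lie inside the measurement box $[0,\Omega]^2$.

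First, I would fix $r_1,r_2\in\mathbb R$ and $t\geq 1$ (the case $t=0$ is immediate from (\ref{equ:measureconstraint2}) at $\vect\omega=\vect 0$), set
\[
A_j := e^{ir_1}e^{i\tau\hat{\vect y}_{j,1}},\qquad B_j := e^{ir_2}e^{i\tau\hat{\vect y}_{j,2}},
\]
and the analogous $A'_j,B'_j$ for the unhatted points, and then expand
\[
(A_j+B_j)^t \;=\; \sum_{k=0}^{t}\binom{t}{k}e^{ikr_1}e^{i(t-k)r_2}\,e^{i\tau(k\hat{\vect y}_{j,1}+(t-k)\hat{\vect y}_{j,2})}.
\]
Observe that the exponent is exactly $i\hat{\vect y}_j^{\top}\vect\omega_k$ with $\vect\omega_k := (\tau k,\,\tau(t-k))$. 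Since $0\leq k\leq t$ and $\tau\leq \Omega/t$, both coordinates of $\vect\omega_k$ lie in $[0,\Omega]$, so $\vect\omega_k\in[0,\Omega]^2$ and the hypothesis (\ref{equ:measureconstraint2}) applies at each such frequency.

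Next I would interchange the two summations (over $k$ and over $j$), which yields
\[
\sum_{j=1}^{q}\hat a_j(A_j+B_j)^t - \sum_{j=1}^{n}a_j(A'_j+B'_j)^t
= \sum_{k=0}^{t}\binom{t}{k}e^{ikr_1}e^{i(t-k)r_2}\,E_k,
\]
where
\[
E_k := \sum_{j=1}^{q}\hat a_j e^{i\hat{\vect y}_j^{\top}\vect\omega_k} - \sum_{j=1}^{n}a_j e^{i\vect y_j^{\top}\vect\omega_k}.
\]
By (\ref{equ:measureconstraint2}), $|E_k|<\delta$ for every $k$. Applying the triangle inequality and using $|e^{ikr_1}e^{i(t-k)r_2}|=1$ together with $\sum_{k=0}^{t}\binom{t}{k}=2^t$ then gives the strict bound
\[
\left|\,\sum_{j=1}^{q}\hat a_j(A_j+B_j)^t - \sum_{j=1}^{n}a_j(A'_j+B'_j)^t\,\right| \;<\; \sum_{k=0}^{t}\binom{t}{k}\delta \;=\; 2^{t}\delta,
\]
which is exactly the claim.

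There is no serious obstacle here; the only non-trivial point to check is the frequency admissibility $\vect\omega_k\in[0,\Omega]^2$, which is precisely why the hypothesis $\tau\leq\Omega/t$ appears in the statement. The strictness of the final inequality is inherited from the strictness in (\ref{equ:measureconstraint2}) since the sum over $k$ is finite and every term is strictly bounded.
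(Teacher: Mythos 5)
Your argument is correct and is essentially identical to the paper's own proof: the same binomial expansion, the same observation that the frequencies $(\tau k,\tau(t-k))$ lie in $[0,\Omega]^2$ because $\tau\leq\Omega/t$, and the same triangle-inequality summation to $2^t\delta$. Nothing further is needed.
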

\begin{proof} 
Let $\hat d_j = e^{ir_1}e^{i\tau \vect {\hat y}_{j,1}}+e^{ir_2}e^{i\tau \vect {\hat y}_{j,2}}$ and $d_j = e^{ir_1}e^{i\tau \vect {y}_{j,1}}+e^{ir_2}e^{i\tau \vect {y}_{j,2}}$. We have 
\begin{align*}
	&\babs{\sum_{j=1}^q \hat a_j \hat d_j^{t} -  \sum_{j=1}^n  a_j d_j^{t}}= \babs{\sum_{j=1}^q \hat a_j (e^{ir_1}e^{i \tau \vect {\hat y}_{j,1}}+e^{ir_2}e^{i\tau \vect {\hat y}_{j,2}})^{t}- \sum_{j=1}^n a_j (e^{ir_1}e^{i \tau \vect {y}_{j,1}}+e^{ir_2}e^{i\tau \vect {y}_{j,2}})^{t}}\\
	=&\babs{\sum_{t_1+t_2=t, 0\leq t_1,t_2\leq t} {t\choose t_1} \Big(\sum_{j=1}^q \hat a_j  e^{ir_1t_1}e^{ir_2t_2}e^{i \tau \vect {\hat y}_{j,1} t_1}e^{i\tau \vect {\hat y}_{j,2} t_2}- \sum_{j=1}^n a_je^{ir_1t_1}e^{ir_2t_2}  e^{i \tau \vect {y}_{j,1} t_1} e^{i\tau \vect {y}_{j,2}t_2}\Big)}\\
	\leq &\sum_{t_1+t_2=t, 0\leq t_1,t_2\leq t} {t\choose t_1} \babs{\sum_{j=1}^q \hat a_j  e^{i \tau \vect {\hat y}_{j,1} t_1}e^{i\tau \vect {\hat y}_{j,2} t_2}- \sum_{j=1}^n a_j  e^{i \tau \vect {y}_{j,1} t_1} e^{i\tau \vect {y}_{j,2}t_2}}\\
	=  &\sum_{t_1+t_2=t, 0\leq t_1,t_2\leq t} {t\choose t_1} \babs{\sum_{j=1}^q \hat a_j  e^{i (t_1\tau , t_2\tau)\vect {\hat y}_{j}}- \sum_{j=1}^n a_j  e^{i (t_1\tau , t_2\tau)\vect {y}_{j}}}\\
	<& \sum_{t_1+t_2=t, 0\leq t_1,t_2\leq t} {t\choose t_1} \delta \quad \Big(\text{by $\tau \leq  \frac{\Omega}{t}$ and (\ref{equ:measureconstraint2})}\Big)\\
	= & 2^{t}\delta.    
\end{align*}
\end{proof}

This is the key observation of the paper. It reduces the two-dimensional super-resolution problem to nearly a one-dimensional super-resolution one. Since it is about the difference between summation of combinations of $e^{i \tau \vect{\hat y}_{j,1}}, e^{i \tau \vect{\hat y}_{j,2}}$ and $e^{i \tau \vect{ y}_{j,1}}, e^{i \tau \vect{ y}_{j,2}}$, we refer to this observation as combination of coordinates and call the elements $e^{i \tau \vect{ y}_{j,1}}+e^{i \tau \vect{ y}_{j,2}}$  coordinate-combined elements. This coordinate-combination technique will be used in deriving new algorithms for the DOA problem in Sections \ref{section:twodnumberdetectalgo} and \ref{section:twodlocationrecoveryalgo}. 

Compared to the projection techniques in \cite{liu2021mathematicalhighd, chen2021algorithmic} which utilize the measurement constraint only in several one-dimensional spaces to derive stability results, our formulation utilizes more measurement constraints and consequently yields better estimates.

\medskip
\noindent \textbf{Preservation of the separation distance for the coordinate-combined elements:}\\
The last observation is that, for $\vect \theta_j$'s in $[0, \frac{2\pi}{3}]^2$, the coordinate-combined elements $e^{i \vect \theta_{j,1}}+ e^{i \vect \theta_{j,2}}$ still preserve the separation distance between the $\vect \theta_j$'s. This is revealed by  Lemma \ref{lem:projectiondislem1}. Note that the projection trick in \cite{liu2021mathematicalhighd, chen2021algorithmic} and many conventional two-dimensional DOA algorithms do not preserve the separation distance between the original source. This causes many issues in the reconstruction and resolution estimation. Lemma \ref{lem:projectiondislem1} is the main result of this paper by which  we could overcome the above issues and hence find a new way to solve two-dimensional DOA problems.

\begin{lem}\label{lem:projectiondislem1}
	For two different vectors $\vect \theta_j \in \mathbb [0, \frac{2\pi}{3}]^2, j=1,2$ with $\frac{\pi}{3}\leq  \vect \theta_{j,2} - \vect \theta_{j,1}\leq \frac{2}{3}\pi, j=1,2$, if $||\vect \theta_1 - \vect \theta_2||_1\geq \Delta$, then 
	\[
	\babs{e^{i \vect \theta_{1,1}}+ e^{i \vect \theta_{1,2}}- (e^{i \vect \theta_{2,1}} + e^{i \vect \theta_{2,2}})}\geq \frac{3}{2\pi}\Delta.
	\]
\end{lem}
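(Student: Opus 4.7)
My plan is to pass to sum--difference coordinates $u_j=(\vect\theta_{j,1}+\vect\theta_{j,2})/2$ and $v_j=(\vect\theta_{j,2}-\vect\theta_{j,1})/2$, in which the coordinate-combined element has the clean polar form $e^{i\vect\theta_{j,1}}+e^{i\vect\theta_{j,2}} = 2\cos(v_j)\, e^{iu_j}$. Translating the hypotheses $\vect\theta_j\in[0,2\pi/3]^2$ and $\vect\theta_{j,2}-\vect\theta_{j,1}\in[\pi/3,2\pi/3]$ into the new variables gives $v_j\in[\pi/6,\pi/3]$ and (since $u_j=v_j+\vect\theta_{j,1}\geq\pi/6$ and $u_j=\vect\theta_{j,2}-v_j\leq\pi/2$) also $u_j\in[\pi/6,\pi/2]$. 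Hence the modulus $r_j:=2\cos(v_j)$ lies in $[1,\sqrt{3}]$ and the phase $u_j$ ranges in an interval of length at most $\pi/3$.

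Next I would rewrite both sides of the desired inequality. The elementary identity $|a-b|+|a+b|=2\max(|a|,|b|)$ applied to $a=u_1-u_2$, $b=v_1-v_2$ gives $\|\vect\theta_1-\vect\theta_2\|_1 = 2\max(|u_1-u_2|,|v_1-v_2|)$, so since $\|\vect\theta_1-\vect\theta_2\|_1\geq\Delta$ it suffices to establish
\[
\bigl|r_1 e^{iu_1} - r_2 e^{iu_2}\bigr|^2 \;\geq\; \frac{9}{\pi^2}\max\bigl(|u_1-u_2|^2,\,|v_1-v_2|^2\bigr).
\]
Using the polar identity $|r_1 e^{iu_1}-r_2 e^{iu_2}|^2 = (r_1-r_2)^2 + 4 r_1 r_2 \sin^2((u_1-u_2)/2)$, the problem splits into a radial and an angular piece that can be bounded independently.

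For the radial piece, the mean value theorem applied to $\cos$ on $[\pi/6,\pi/3]$, where $|\sin v|\geq 1/2$, yields $|r_1-r_2|=2|\cos v_1-\cos v_2|\geq|v_1-v_2|$. For the angular piece, the inclusion $u_j\in[\pi/6,\pi/2]$ forces $|u_1-u_2|/2\in[0,\pi/6]$; on this interval the monotonicity of the sinc function gives the sharp inequality $|\sin x|\geq(3/\pi)|x|$, and combined with $r_1 r_2 = 4\cos v_1\cos v_2\geq 1$ this produces $4 r_1 r_2\sin^2((u_1-u_2)/2)\geq (9/\pi^2)|u_1-u_2|^2$. Summing the two estimates gives
\[
\bigl|r_1 e^{iu_1} - r_2 e^{iu_2}\bigr|^2 \;\geq\; |v_1-v_2|^2 + \frac{9}{\pi^2}|u_1-u_2|^2.
\]

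A short case split then dispatches the $\max$: if $|u_1-u_2|\geq|v_1-v_2|$ the $u$-term already dominates, while if $|v_1-v_2|>|u_1-u_2|$ then $|v_1-v_2|^2\geq (9/\pi^2)|v_1-v_2|^2$ since $\pi>3$, so the $v$-term alone suffices. Taking square roots and using $\Delta\leq 2\max(|u_1-u_2|,|v_1-v_2|)$ yields the claim with the advertised constant $3/(2\pi)$. The main obstacle is calibrating the hypotheses so that the sharper bound $|\sin x|\geq(3/\pi)|x|$ is applicable: this forces $|u_1-u_2|\leq\pi/3$, which is exactly what the restrictions $\vect\theta_j\in[0,2\pi/3]^2$ together with $v_j\geq\pi/6$ are engineered to guarantee. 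Relaxing any of these bounds degrades the constant $3/(2\pi)$ and would break the argument.
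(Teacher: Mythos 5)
Your proof is correct. It takes a genuinely different route from the paper's. The paper splits into two cases according to how the intervals $[\vect\theta_{j,1},\vect\theta_{j,2}]$ are arranged: when one is nested in the other it discards the angular information entirely via the reverse triangle inequality and reduces to $2(\cos(\phi_2/2)-\cos(\phi_1/2))$ with a product-to-sum identity; when they interleave it instead estimates the angle between the two sums $e^{i\vect\theta_{j,1}}+e^{i\vect\theta_{j,2}}$ (a step the paper justifies only by a "simple analysis of the angle relations") and uses $|A-B|\ge\max(|A|,|B|)\sin(\cdot)$. You avoid the case split on the ordering altogether: the sum--difference change of variables puts the coordinate-combined element in the polar form $2\cos(v_j)e^{iu_j}$, the law-of-cosines decomposition $(r_1-r_2)^2+4r_1r_2\sin^2((u_1-u_2)/2)$ keeps the radial and angular contributions simultaneously, and the identity $\|\vect\theta_1-\vect\theta_2\|_1=2\max(|u_1-u_2|,|v_1-v_2|)$ cleanly converts the $\ell^1$ separation into the two quantities you control. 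Each piece checks out: $v_j\in[\pi/6,\pi/3]$ gives $|\sin|\ge 1/2$ for the mean value theorem and $r_1r_2\ge 1$, while $u_j\in[\pi/6,\pi/2]$ gives $|u_1-u_2|/2\le\pi/6$ so that $\sin x\ge(3/\pi)x$ applies, and the final $\max$ case split is trivial because $9/\pi^2<1$. What your version buys is a single uniform computation with the slightly stronger intermediate bound $|{\cdot}|^2\ge|v_1-v_2|^2+\frac{9}{\pi^2}|u_1-u_2|^2$ and no appeal to an informal geometric angle argument; what the paper's version buys is that each case is a one-line picture. Both land on the same constant $\frac{3}{2\pi}$ via the same sinc bound at $\pi/6$.
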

\begin{proof}
	Note that $0\leq \vect \theta_{j,1}<\vect \theta_{j,2}\leq \frac{2\pi}{3}, j=1,2$. We prove the lemma by considering the following two cases.\\
	\textbf{Case 1:} $0\leq \vect \theta_{1,1}  \leq \vect \theta_{2,1} <  \vect \theta_{2,2}\leq  \vect \theta_{1,2}\leq \frac{2\pi}{3}$.\\
	In this case, 
	\begin{align*}
	\babs{e^{i \vect \theta_{1,1}}+ e^{i \vect \theta_{1,2}}- (e^{i \vect \theta_{2,1}} + e^{i \vect \theta_{2,2}})} \geq & \babs{e^{i \vect \theta_{2,1}}+ e^{i\vect \theta_{2,2}}}- \babs{e^{i \vect \theta_{1,1}} + e^{i\vect \theta_{1,2}}}\\
		= &2\Big(\cos(\frac{\phi_2}{2}) - \cos(\frac{\phi_1}{2}) \Big),
	\end{align*}
	where $\phi_j = \vect \theta_{j,2} - \vect \theta_{j,1}, j=1,2$. By the assumption made in the lemma, we have $\Delta \leq \phi_1 -\phi_2 \leq \frac{\pi}{3}$. Note also that $\frac{\pi}{6}\leq \frac{\phi_1+\phi_2}{4}\leq \frac{\pi}{3}$. Thus 
	\[
	2\Big(\cos(\frac{\phi_2}{2}) - \cos(\frac{\phi_1}{2}) \Big) =4\sin(\frac{\phi_1+\phi_2}{4})\sin(\frac{\phi_1-\phi_2}{4})\geq 4 \sin(\frac{\pi}{6})\sin(\frac{\Delta}{4})\geq  \frac{3\Delta}{2\pi}, 
	\]
	where the last inequality uses $\sin(\frac{\Delta}{4})\geq \frac{3}{\pi} \frac{\Delta}{4}$ for $0<\frac{\Delta}{4}\leq \frac{\pi}{12}$.\\
	\textbf{Case 2:} $0\leq \vect \theta_{1,1}  \leq  \vect \theta_{2,1} \leq   \vect \theta_{1,2}\leq  \vect \theta_{2,2}\leq \frac{2\pi}{3}$.\\
	The idea is to calculate the angle between $e^{i \vect \theta_{1,1}}+ e^{i \vect \theta_{1,2}}$ and $e^{i \vect \theta_{2,1}} + e^{i \vect \theta_{2,2}}$. By simple analysis of the angle relations between $e^{i \vect \theta_{1,1}}, e^{i \vect \theta_{1,2}}, e^{i \vect \theta_{2,1}}, e^{i \vect \theta_{2,2}}$, we obtain that the angle between $e^{i \vect \theta_{1,1}}+ e^{i \vect \theta_{1,2}}$ and $e^{i \vect \theta_{2,1}} + e^{i \vect \theta_{2,2}}$ is $\frac{\vect \theta_{2,1}-\vect \theta_{1,1}+ \vect \theta_{2,2}-\vect \theta_{1,2}}{2}$, which is larger than $\frac{\Delta}{2}$. Thus 
	\[
	\babs{e^{i \vect \theta_{1,1}}+ e^{i \vect \theta_{1,2}}- (e^{i \vect \theta_{2,1}} + e^{i \vect \theta_{2,2}})} \geq \max\Big(\babs{e^{i \vect \theta_{1,1}}+ e^{i \vect \theta_{1,2}}}, \babs{e^{i \vect \theta_{2,1}} + e^{i \vect \theta_{2,2}}}\Big)\sin(\frac{\Delta}{2}). 
	\]
	Since $\frac{\pi}{3}\leq  \vect \theta_{j,2} - \vect \theta_{j,1}\leq \frac{2}{3}\pi, j=1,2$,  we have 
	\[
	\max\Big(\babs{e^{i \vect \theta_{1,1}}+ e^{i \vect \theta_{1,2}}}, \babs{e^{i \vect \theta_{2,1}} + e^{i \vect \theta_{2,2}}}\Big) \geq 1.
	\]
	Therefore, 
	\[
	\babs{e^{i \vect \theta_{1,1}}+ e^{i \vect \theta_{1,2}}- (e^{i \vect \theta_{2,1}} + e^{i \vect \theta_{2,2}})}\geq \sin(\frac{\Delta}{2}) \geq \frac{3\Delta}{2\pi},
	\]
	where the last inequality uses $\sin(\frac{\Delta}{2})\geq \frac{3}{\pi}\frac{\Delta}{2}$ for $0<\frac{\Delta}{2}\leq \frac{\pi}{6}$.
\end{proof}

\subsection{Proof of Theorem \ref{thm:highdupperboundnumberlimit0}}
\begin{proof} The proof of this theorem is by contradiction. 
Suppose that there exists a measure $\hat \mu = \sum_{j=1}^q \hat a_j \delta_{\vect {\hat y}_j}$ with $q<n$ which is a $\sigma$-admissible measure of $\vect Y$. Then, by the measurement constraint (\ref{equ:modelsetting1}) and $|\vect W(\vect \omega)|<\sigma$,  we have 
\begin{equation}\label{equ:twodnumberproofequ1}
\babs{\sum_{j=1}^q \hat a_j e^{i \vect{\hat y}_j^\top \vect \omega} - \sum_{j=1}^n a_j e^{i \vect{y}_j^\top \vect \omega}}< 2\sigma, \quad \vect \omega \in [0, \Omega]^2.
\end{equation}
Since $\vect y_j\in [-\lambda, \lambda]^2$ with $\lambda= \frac{(n-1)\pi}{6\Omega}$, by letting $\vect v=(0,6\lambda)^\top$, we obtain
\begin{equation}
\vect y_j+\vect v\in [-\lambda,\lambda]\times[5\lambda, 7\lambda].
\end{equation} 
On the other hand, by (\ref{equ:twodnumberproofequ1}) we also get
\[
\babs{\sum_{j=1}^q \hat a_j e^{i (\vect{\hat y}_j+\vect v )^\top \vect \omega} - \sum_{j=1}^n a_j e^{i (\vect{y}_j+\vect v)^\top \vect \omega}}< 2\sigma, \quad \vect \omega \in [0, \Omega]^2.
\]
Thus with a slight abuse of notation, we still denote those $\vect{\hat y}_j+\vect v$ and $\vect{y}_j+\vect v$ by $\vect {\hat y}_j$, $\vect y_j$ respectively and consider them in the rest of the proof. Note that we have 
\[
\vect y_j \in [-\lambda,\lambda]\times[5\lambda, 7\lambda], \quad j =1, \cdots, n.
\]
Let $\tau = \frac{\Omega}{2(n-1)}$, together with $\lambda = \frac{(n-1)\pi}{6\Omega}$, we have $\tau \vect y_j \in [-\frac{\pi}{12},\frac{\pi}{12}]\times[\frac{5\pi}{12}, \frac{7\pi}{12}]$. 
This yields
\begin{equation}\label{equ:twodnumberproofequ2}
-\frac{\pi}{12}\leq \tau \vect  y_{j,1}\leq \frac{\pi}{12},\quad  \frac{5\pi}{12}\leq \tau \vect  y_{j,2}\leq \frac{7\pi}{12}, \quad  \frac{\pi}{3}\leq \tau \vect  y_{j,2}- \tau \vect  y_{j,1} \leq \frac{2\pi}{3}. 
\end{equation}
On the other hand, let $\hat d_j = e^{i\tau \vect {\hat y}_{j,1}}+e^{i\tau \vect {\hat y}_{j,2}}$ and $d_j = e^{i\tau \vect {y}_{j,1}}+e^{i\tau \vect {y}_{j,2}}$. By Lemma \ref{lem:coordinatecombination} and (\ref{equ:twodnumberproofequ1}) we have that 
\begin{equation}\label{equ:twodnumberproofequ4}
\babs{\sum_{j=1}^q \hat a_j \hat d_j^{t} -  \sum_{j=1}^n  a_j d_j^{t}}< 2^{t+1} \sigma,  \quad t =0, 1, \cdots, 2n-2. 
\end{equation}
Let 
\[
\vect b = \Big(\sum_{j=1}^q \hat a_j \hat d_j^{0} -  \sum_{j=1}^n  a_j d_j^{0}, \quad \sum_{j=1}^q \hat a_j \hat d_j^{1} -  \sum_{j=1}^n  a_j d_j^{1}, \quad \cdots, \quad \sum_{j=1}^q \hat a_j \hat d_j^{2n-2} -  \sum_{j=1}^n  a_j d_j^{2n-2}\Big)^\top.
\]
Since (\ref{equ:twodnumberproofequ2}) holds,  Lemma \ref{lem:projectiondislem1} yields
\[
d_{\min}:=\min_{p\neq q}\babs{d_p - d_q}\geq \frac{3}{2\pi}
\min_{p\neq q}\tau \Big|\Big|\vect y_p - \vect y_q\Big|\Big|_1 > 12.4 \Big(\frac{\sigma}{m_{\min}}\Big)^{\frac{1}{2n-2}}> 2\sqrt{6(1+\sqrt{3})}\Big(\frac{4}{\sqrt{3}} \frac{\sigma}{m_{\min}}\Big)^{\frac{1}{2n-2}},
\]
where the second last inequality is due to the separation condition (\ref{equ:highdupperboundnumberlimit1}). On the other hand, we have $|\hat d_p|\leq 2, p=1, \cdots, q$ and $|d_j|\leq \sqrt{3}, j=1, \cdots, n$ since (\ref{equ:twodnumberproofequ2}) holds. Thus we can apply Theorem \ref{thm:spaceapproxlowerbound1} and get 
\[
||\vect b||_2 \geq \frac{m_{\min}(d_{\min})^{2n-2}}{(2(1+2)(1+\sqrt{3}))^{(n-1)}} > \frac{4^n \sigma}{\sqrt{3}}.
\]  
However, (\ref{equ:twodnumberproofequ4}) implies that $||\vect b||_2 < \frac{4^n \sigma}{\sqrt{3}}$, which is a contradiction. This proves the theorem. 
\end{proof}

\subsection{Proof of Theorem \ref{thm:highdupperboundsupportlimit0}}
\begin{proof}
Note that $\vect y_j, \vect {\hat y}_j$'s are in $[-\lambda, \lambda]^2$ with $\lambda = \frac{(2n-1)\pi}{12\Omega}$ and $\hat \mu = \sum_{j=1}^n \hat a_j \delta_{\vect {\hat y}_j}$ is a $\sigma$-admissible measure of $\vect Y$. Let $\tau = \frac{\Omega}{2n-1}$. Similarly to the proof in the above section, we can construct $\vect x_j = \vect y_j+\vect v, \vect {\hat x}_j = \vect {\hat y}_j+\vect v$ so that $\tau \vect {\hat x}_j, \tau \vect x_j \in [-\frac{\pi}{12},\frac{\pi}{12}]\times[\frac{5\pi}{12}, \frac{7\pi}{12}]$ and 
\begin{equation}\label{equ:twodsupportproofequ1}
\babs{\sum_{j=1}^n \hat a_j e^{i \vect{\hat x}_j^\top \vect \omega} - \sum_{j=1}^n a_j e^{i \vect{x}_j^\top \vect \omega}}< 2\sigma, \quad \vect \omega \in [0, \Omega]^2.
\end{equation}
 Thus we have 
\begin{equation}\label{equ:twodsupportproofequ2}
-\frac{\pi}{12}\leq \tau \vect  x_{j,1}\leq \frac{\pi}{12},\quad  \frac{5\pi}{12}\leq \tau \vect  x_{j,2}\leq \frac{7\pi}{12}, \quad  \frac{\pi}{3}\leq \tau \vect  x_{j,2}- \tau \vect  x_{j,1} \leq \frac{2\pi}{3},
\end{equation}
\begin{equation}\label{equ:twodsupportproofequ3}
-\frac{\pi}{12}\leq \tau \vect {\hat x}_{j,1}\leq \frac{\pi}{12},\quad  \frac{5\pi}{12}\leq \tau \vect  {\hat x}_{j,2}\leq \frac{7\pi}{12}, \quad  \frac{\pi}{3}\leq \tau \vect  {\hat x}_{j,2}- \tau \vect  {\hat x}_{j,1} \leq \frac{2\pi}{3}.
\end{equation}
Moreover, it follows that
\begin{equation}\label{equ:twodsupportproofequ4}
-\frac{\pi}{12}\leq \tau \vect  x_{j,1}\leq \frac{\pi}{12},\quad  \frac{-7\pi}{12}\leq \tau \vect  x_{j,2}-\pi\leq \frac{-5\pi}{12}, \quad  \frac{\pi}{3}\leq  \tau \vect  x_{j,1}- (\tau \vect x_{j,2}-\pi) \leq \frac{2\pi}{3},
\end{equation}
\begin{equation}\label{equ:twodsupportproofequ5}
-\frac{\pi}{12}\leq \tau \vect  {\hat x}_{j,1}\leq \frac{\pi}{12},\quad  \frac{-7\pi}{12}\leq \tau \vect  {\hat x}_{j,2}-\pi\leq \frac{-5\pi}{12}, \quad  \frac{\pi}{3}\leq  \tau \vect  {\hat x}_{j,1}- (\tau \vect {\hat x}_{j,2}-\pi) \leq \frac{2\pi}{3}. 
\end{equation}
Let $\hat d_j = e^{i\tau \vect {\hat x}_{j,1}}+e^{i\tau \vect {\hat x}_{j,2}}, d_j = e^{i\tau \vect {x}_{j,1}}+e^{i\tau \vect {x}_{j,2}}$ and $\hat g_j = e^{i\tau \vect {\hat x}_{j,1}}+e^{i(\tau \vect {\hat x}_{j,2}-\pi)}, g_j = e^{i\tau \vect {x}_{j,1}}+e^{i(\tau \vect {x}_{j,2}-\pi)}$. By (\ref{equ:twodsupportproofequ1}) and Lemma \ref{lem:coordinatecombination}, we arrive at
\begin{align}
&\babs{\sum_{j=1}^n \hat a_j \hat d_j^{t} -  \sum_{j=1}^n  a_j d_j^{t}}< 2^{t+1} \sigma, \quad t=0,1, \cdots, 2n-1, \label{equ:twodsupportproofequ6} \\ 
&\babs{\sum_{j=1}^n \hat a_j \hat g_j^{t} -  \sum_{j=1}^n  a_j g_j^{t}}< 2^{t+1} \sigma,  \quad t =0, 1, \cdots, 2n-1. \label{equ:twodsupportproofequ7} 
\end{align}
 Let 
\[
\vect d = \Big(\sum_{j=1}^n \hat a_j \hat d_j^{0} -  \sum_{j=1}^n  a_j d_j^{0}, \quad \sum_{j=1}^n \hat a_j \hat d_j^{1} -  \sum_{j=1}^n  a_j d_j^{1}, \quad \cdots,\quad \sum_{j=1}^n \hat a_j \hat d_j^{2n-1} -  \sum_{j=1}^n  a_j d_j^{2n-1}\Big)^\top,
\]
and 
\[
\vect g = \Big(\sum_{j=1}^n \hat a_j \hat g_j^{0} -  \sum_{j=1}^n  a_j g_j^{0}, \quad \sum_{j=1}^n \hat a_j \hat g_j^{1} -  \sum_{j=1}^n  a_j g_j^{1}, \quad \cdots,\quad \sum_{j=1}^n \hat a_j \hat g_j^{2n-1} -  \sum_{j=1}^n  a_j g_j^{2n-1}\Big)^\top.
\]
Equations (\ref{equ:twodsupportproofequ6}) and (\ref{equ:twodsupportproofequ7}) imply respectively 
\[
||\vect d||_2 < \frac{2^{2n+1} \sigma}{\sqrt{3}}, \quad ||\vect g||_2 < \frac{2^{2n+1} \sigma}{\sqrt{3}}. 
\]
Note also that by (\ref{equ:twodsupportproofequ2}), (\ref{equ:twodsupportproofequ3}), (\ref{equ:twodsupportproofequ4}), and (\ref{equ:twodsupportproofequ5}), we get
\[|\hat d_j|, |d_j|, |\hat g_j|, |g_j|\leq \sqrt{3},\quad  j=1, \cdots, n.
\]
Define $d_{\min}:=\min_{p\neq q}\babs{d_p - d_q}$ and $g_{\min}:=\min_{p\neq q}\babs{g_p - g_q}$.  Applying Theorem \ref{thm:spaceapproxlowerbound1}, we thus have that
\begin{equation}\label{equ:twodsupportproofequ8}
\Big|\Big|\eta_{n,n}(d_1,\cdots, d_n, \hat d_1, \cdots, \hat d_n)\Big|\Big|_{\infty}<\frac{(1+\sqrt{3})^{2n-1}}{ d_{\min}^{n-1}} \frac{2^{2n+1} }{\sqrt{3}}\frac{\sigma}{m_{\min}}, 	    
\end{equation}
and 
\begin{equation}\label{equ:twodsupportproofequ9}
\Big|\Big|\eta_{n,n}(g_1,\cdots, g_n, \hat g_1, \cdots, \hat g_n)\Big|\Big|_{\infty}<\frac{(1+\sqrt{3})^{2n-1}}{ g_{\min}^{n-1}} \frac{2^{2n+1} }{\sqrt{3}}\frac{\sigma}{m_{\min}}. 	    
\end{equation}
We now demonstrate that we can reorder $\hat d_j, \hat g_j$ to have $|\hat d_j -d_j|< \frac{d_{\min}}{2}$ and $|\hat g_j -g_j|< \frac{g_{\min}}{2}, j=1, \cdots, n$. First, since (\ref{equ:twodsupportproofequ2}) and (\ref{equ:twodsupportproofequ4}) hold, by Lemma \ref{lem:projectiondislem1} we have 
\begin{align}\label{equ:twodsupportproofequ10}
d_{\min}\geq \frac{3}{2\pi} \min_{p\neq q}\tau \Big|\Big|\vect y_p - \vect y_q\Big|\Big|_1 \geq 11.475 \Big(\frac{\sigma}{m_{\min}}\Big)^{\frac{1}{2n-1}}> 2^{3/2}(1+\sqrt{3})\Big(\frac{2^{5/2}}{\sqrt{3}} \frac{\sigma}{m_{\min}}\Big)^{\frac{1}{2n-1}},
\end{align}
and 
\[
g_{\min}\geq \frac{3}{2\pi} \min_{p\neq q}\tau \Big|\Big|\vect y_p - \vect y_q\Big|\Big|_1 \geq 11.475 \Big(\frac{\sigma}{m_{\min}}\Big)^{\frac{1}{2n-1}}> 2^{3/2}(1+\sqrt{3})\Big(\frac{2^{5/2}}{\sqrt{3}} \frac{\sigma}{m_{\min}}\Big)^{\frac{1}{2n-1}},
\]
where we also use separation condition (\ref{equ:highdsupportlimithm0equ0}) in the above derivation. Let 
\[
\epsilon_d = \frac{(1+\sqrt{3})^{2n-1}}{ d_{\min}^{n-1}} \frac{2^{2n+1}}{\sqrt{3}}\frac{\sigma}{m_{\min}},\quad  \epsilon_g = \frac{(1+\sqrt{3})^{2n-1}}{ g_{\min}^{n-1}}\frac{2^{2n+1} }{\sqrt{3}}\frac{\sigma}{m_{\min}}.
\]
By (\ref{equ:twodsupportproofequ10}), we have  
\[
d_{\min}^{2n-1}\geq \frac{(1+\sqrt{3})^{2n-1}2^{3n+1}}{\sqrt{3}} \frac{\sigma}{m_{\min}}, \ \text{or equivalently},  d_{\min}^{n}\geq 2^n \epsilon_d.
\]
A similar result holds for $g_{\min}$ and $\epsilon_g$. Thus the conditions of Lemma \ref{lem:multiproductstability1} are satisfied. By Lemma \ref{lem:multiproductstability1}, we have that after reordering $\hat d_j , \hat g_j$, 
\[
\Big|\hat d_j- d_j\Big|< \frac{d_{\min}}{2},\quad  \Big|\hat g_j- g_j\Big|< \frac{g_{\min}}{2},
\]
and
\begin{align*}
&\babs{\hat d_j -d_j}\leq \Big(\frac{2}{d_{\min}}\Big)^{n-1}\epsilon_d = \Big(\frac{1}{d_{\min}}\Big)^{2n-2}\frac{(1+\sqrt{3})^{2n-1}2^{3n}}{\sqrt{3}}\frac{\sigma}{m_{\min}},\\
&\babs{\hat g_j -g_j}\leq \Big(\frac{2}{g_{\min}}\Big)^{n-1}\epsilon_g= \Big(\frac{1}{g_{\min}}\Big)^{2n-2} \frac{(1+\sqrt{3})^{2n-1}2^{3n}}{\sqrt{3}}\frac{\sigma}{m_{\min}}. 
\end{align*}
Observing 
\[
e^{i\tau \vect {\hat x}_{j,1}}- e^{i\tau \vect {x}_{j, 1}} =  \frac{1}{2} \Big(\hat d_j -d_j+ \hat g_j -g_j\Big),\qquad  e^{i\tau\vect {\hat x}_{j,2}}- e^{i \tau \vect {x}_{j, 2}} =  \frac{1}{2} \Big(\hat d_j -d_j- (\hat g_j -g_j)\Big),
\]
we conclude that 
\[
\Big|e^{i\tau\vect {\hat x}_{j,1}}- e^{i\tau\vect {x}_{j, 1}}\Big|+\Big|e^{i\tau\vect {\hat x}_{j,2}}- e^{i\tau\vect {x}_{j, 2}}\Big|\leq \Big(\Big(\frac{1}{d_{\min}}\Big)^{2n-2} + \Big(\frac{1}{g_{\min}}\Big)^{2n-2}\Big) \frac{(1+\sqrt{3})^{2n-1}2^{3n}}{\sqrt{3}}\frac{\sigma}{m_{\min}}.
\]
On the other hand, by (\ref{equ:twodsupportproofequ2}) and (\ref{equ:twodsupportproofequ3}), $$|\vect {\hat x}_{j,1}- \vect {x}_{j, 1}|\leq \frac{\pi}{6} \quad \mbox{and} \quad  |\vect {\hat x}_{j,2}- \vect {x}_{j, 2}|\leq \frac{\pi}{6}.$$ We further have 
\begin{align*}
&\tau\Big|\vect {\hat x}_{j,1}- \vect {x}_{j, 1}\Big|+ \tau\Big|\vect {\hat x}_{j,2}- \vect x_{j, 2}\Big| \leq  \frac{\pi}{3}\Big(\Big|e^{i\vect {\hat x}_{j,1}}- e^{i\vect {x}_{j, 1}}\Big|+\Big|e^{i\vect {\hat x}_{j,2}}- e^{i\vect {x}_{j, 2}}\Big|\Big)\\
\leq& \Big(\Big(\frac{1}{d_{\min}}\Big)^{2n-2} + \Big(\frac{1}{g_{\min}}\Big)^{2n-2}\Big) \frac{(1+\sqrt{3})^{2n-1}2^{3n}\pi}{3\sqrt{3}}\frac{\sigma}{m_{\min}}.
\end{align*}
Recalling that $\tau=\frac{\Omega}{2n-1}$, we have \begin{align*}
&\Big|\vect {\hat x}_{j,1}- \vect {x}_{j, 1}\Big|+ \Big|\vect {\hat x}_{j,2}- \vect x_{j, 2}\Big| \leq \frac{2n-1}{\Omega} \Big(\Big(\frac{1}{d_{\min}}\Big)^{2n-2} + \Big(\frac{1}{g_{\min}}\Big)^{2n-2}\Big) \frac{(1+\sqrt{3})^{2n-1}2^{3n}\pi}{3\sqrt{3}}\frac{\sigma}{m_{\min}}.
\end{align*}
Note that by (\ref{equ:twodsupportproofequ10}), we obtain that $$D_{\min}\leq \frac{2\pi(2n-1)}{3\Omega}d_{\min} \quad \mbox{and} \quad  D_{\min}\leq \frac{2\pi(2n-1)}{3\Omega}g_{\min}.$$ Thus 
\begin{align*}
\bonenorm{\vect {\hat x}_j-\vect {x}_j} \leq&  \frac{(1+\sqrt{3})^{2n-1}2^{3n+1}\pi(2n-1)}{3\sqrt{3}\Omega} \Big(\frac{2(2n-1)}{3}\Big)^{2n-2} \Big(\frac{\pi}{\Omega D_{\min}}\Big)^{2n-2} \frac{\sigma}{m_{\min}}\\
=& \frac{(1+\sqrt{3})^{2n-1}2^{5n-1}(2n-1)^{2n-1}\pi}{3^{2n-0.5}\Omega}  \Big(\frac{\pi}{\Omega D_{\min}}\Big)^{2n-2} \frac{\sigma}{m_{\min}}.
\end{align*}
Since $||\vect {\hat y}_{j}- \vect{y}_{j}||_1 = ||\vect {\hat x}_{j}- \vect {x}_{j}||_1 $, we further get 
\begin{align*}
\bonenorm{\vect {\hat y}_j-\vect {y}_j} \leq \frac{(1+\sqrt{3})^{2n-1}2^{5n-1}(2n-1)^{2n-1}\pi}{3^{2n-0.5}\Omega}  \Big(\frac{\pi}{\Omega D_{\min}}\Big)^{2n-2} \frac{\sigma}{m_{\min}}.
\end{align*}
Since $D_{\min}\geq \frac{15.3\pi(n-0.5)}{\Omega} \Big( \frac{\sigma}{m_{\min}}\Big)^{\frac{1}{2n-1}}$, together with the above estimate, we can also show that
\[
\bonenorm{\vect {\hat y}_j-\vect {y}_j}< \frac{D_{\min}}{2}. 
\]
This completes the proof.
\end{proof}

\section{An algorithm for the model order detection in two-dimensional DOA estimation}\label{section:twodnumberdetectalgo}
In this section, based on the observations made in Section \ref{section:threeobservations}, we propose a new algorithm, named coordinate-combination-based sweeping singular-value-thresholding number detection algorithm, for the model order detection in two-dimensional DOA estimations. 

\begin{figure}[!h]
	\includegraphics[width=0.9\textwidth]{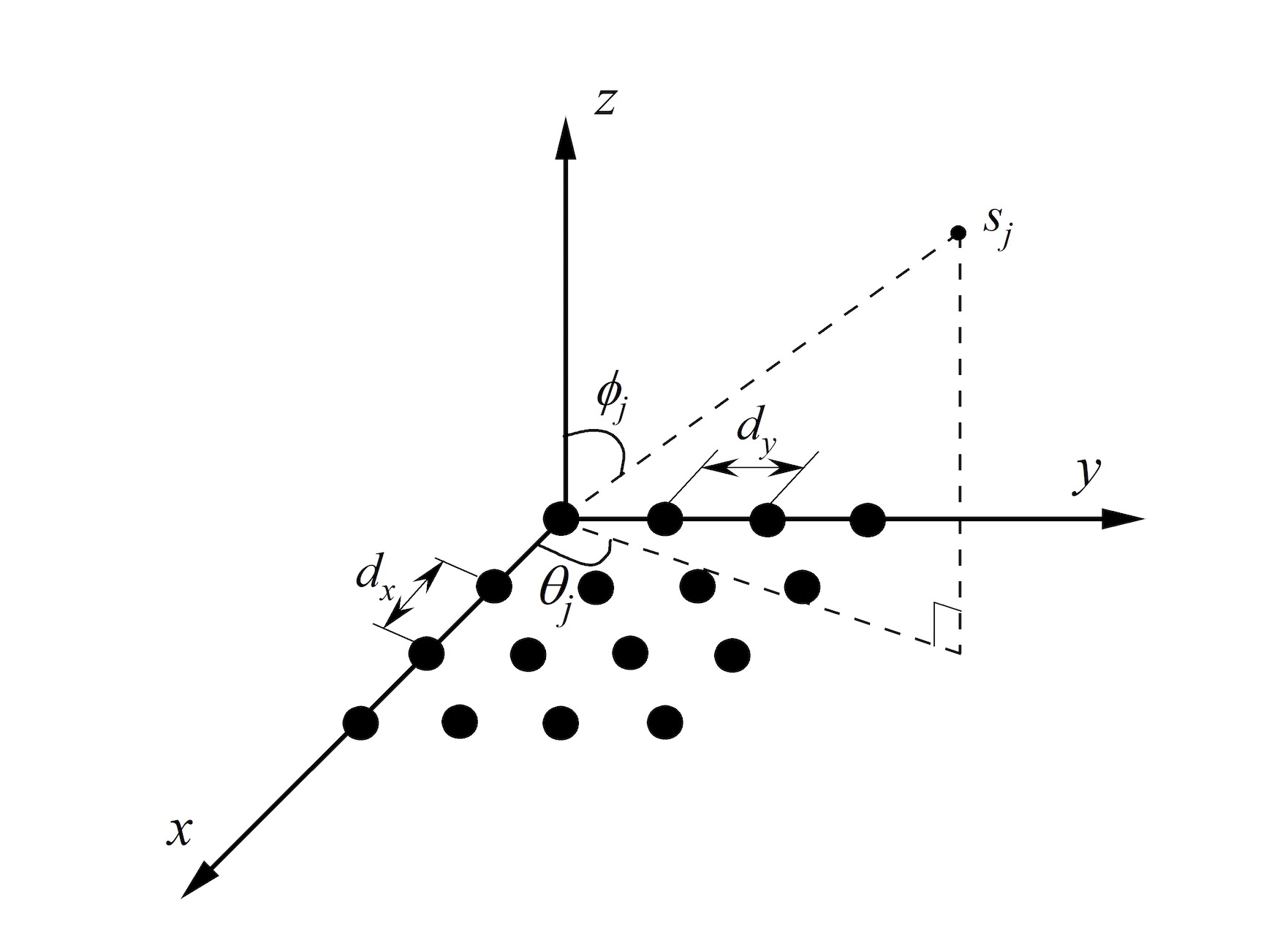}
	\centering
	\caption{The geometry of a uniform rectangular array.}
	\label{fig:geoofURA}
\end{figure}

\subsection{Problem formulation}
The existing two-dimensional DOA algorithms usually try to estimate the azimuth and elevation angles $(\theta_j, \phi_j)$'s that are shown in Figure \ref{fig:geoofURA}. More precisely, we consider $n$ narrowband signals/sources impinging on an $(\Omega+1)\times(\Omega+1)$ uniform rectangular array (URA) with $(\Omega+1)^2$ well calibrated and identically polarized antenna elements. The signal received by these antenna elements in a single snapshot can be expressed by 
\begin{equation}\label{equ:DOAmeasuresetting0}
\vect Y(\vect \omega) = \sum_{j=1}^n s_j p_j e^{j k d_x\vect \omega_1\vect y_{j,1}}e^{j k d_y\vect \omega_2\vect y_{j,2}}+\vect W(\vect \omega), \quad \vect{\omega}\in [0,1, \cdots, \Omega]^2,
\end{equation}
where $s_j$ is the $j$-th incident signal, $p_j$ is a complex constant denoting the signal/antenna polarization mismatch, $k$ represents the wavenumber of the carrier frequency, and $d_x$ and $d_y$ denote the distance between adjacent antenna element along the $x$-axis and $y$-axis, respectively. $\vect y_{j,1}=\sin \phi_j \cos \theta_j$ is the direction component of signal $s_j$ propagating along the $x$-axis and $\vect y_{j,2}=\sin \phi_j \sin \theta_j$ is the one propagating along the $y$-axis. The $\phi_j$ and $\theta_j$ denote respectively the elevation and azimuth angles of $s_j$. $\vect W(\vect \omega)$ is the additive noise, which is usually assumed to be white Gaussian noise.

For convenience, we consider the following simplified form of (\ref{equ:DOAmeasuresetting0}):
\begin{equation}\label{equ:DOAmeasuresetting1}
	\mathbf Y(\vect{\omega}) = \sum_{j=1}^{n}a_j e^{i \vect{y}_j^\top \vect{\omega}} + \mathbf W(\vect{\omega}),\qquad  \ \vect{\omega}\in [0,1, \cdots, \Omega]^2,
\end{equation}
where $\mathbf W$ is the noise with $||\mathbf W(\vect{\omega})||_\infty< \sigma$ and $\sigma$ being the noise level. We aim to recover stably the number of the signals and the $\vect y_j$'s, by which the elevation and azimuth angles are stably resolved. For a better exposition, we still consider a discrete measure $\mu = \sum_{j=1}^n a_j \delta_{\vect y_j}$ and denote the $a_j \delta_{\vect y_j}$'s as sources. The measurement (\ref{equ:DOAmeasuresetting1}) can be viewed as the noisy Fourier data of the measure $\mu$ at some discrete points. 

In this section and the next one, we shall propose new algorithms for detecting the model order and recovering the supports of $\mu$ from the measurement (\ref{equ:DOAmeasuresetting1}). Our number detection method is based on thresholding on a Hankel matrix assembled by data from modifications of (\ref{equ:DOAmeasuresetting1}) . The following subsection shall introduce the details of the Hankel matrix formulation. We refer to \cite{akaike1998information, akaike1974new, wax1985detection, schwarz1978estimating, rissanen1978modeling, wax1989detection, lawley1956tests, chen1991detection, he2010detecting, han2013improved, liu2021theorylse, liu2021mathematicalhighd} for other model detecting algorithms.

\subsection{Hankel matrix construction}\label{section:numberalgohankelconstruct}
The Hankel matrix is constructed by the following three steps.

\medskip
\noindent\textbf{Measurement modification by source translation}\\
Due to the translation invariance, suppose the sources are supported in $[-\lambda, \lambda]^2$, we consider them displacing with a vector $\vect v$ and get that $\vect x_j = \vect y_j +\vect v$. Using a simple measurement modification technique, we obtain the measurement for the new source $\tilde{\mu} = \sum_{j=1}^n a_j \delta_{\vect x_j}$. Specifically, we consider 
\begin{equation}\label{equ:numberdetectalgomeasure1}
\begin{aligned}
    \vect X(\vect \omega) = & e^{i \vect v^\top \vect \omega} \vect Y(\vect \omega) = \sum_{j=1}^{n}a_j e^{i (\vect{y}_j+\vect v)^\top \vect{\omega}} + e^{i \vect v^\top \vect \omega} \mathbf W(\vect{\omega})\\
    = & \sum_{j=1}^{n}a_j e^{i \vect x_j^\top \vect{\omega}} +  \vect  {\tilde{W}}(\vect{\omega}), \quad  \vect{\omega}\in [0,1, \cdots, \Omega]^2,
\end{aligned}
\end{equation}
with $|\mathbf {\tilde{W}}(\vect{\omega})|< \sigma$. 

\medskip
\noindent \textbf{Measurement modification by coordinate-combination}\\
The second procedure consists in modifying the measurement based on coordinate-combination. For  $s >0$, let $r = \frac{\Omega}{2s}$. From the measurement $\vect X$, we construct a list of new data given by
\[
\vect D(t) = \sum_{t_1+t_2 =t, 0\leq t_1, t_2\leq t} {t\choose t_1} \vect X(\vect \omega_{rt_1, rt_2}), \quad t=0, \cdots, 2s,
\]
where $\vect \omega_{rt_1, rt_2} = (rt_1, rt_2)^\top$. Note that 
\begin{align*}
\vect D(t) = & \sum_{j=1}^n a_j(e^{i \vect x_{j,1}r}+ e^{i \vect x_{j,2}r})^{t} + \sum_{t_1+t_2 =t, 0\leq t_1, t_2\leq t} {t\choose t_1}  \vect {\tilde{W}}(\vect \omega_{rt_1, rt_2})\\
= & \sum_{j=1}^na_j(e^{i \vect x_{j,1}r}+ e^{i \vect x_{j,2}r})^{t} + \vect {\hat W}(t), 
\end{align*}
where $\vect {\hat W}(t) = \sum_{t_1+t_2 =t, 0\leq t_1, t_2\leq t} {t\choose t_1}  \vect {\tilde{W}}(\vect \omega_{rt_1, rt_2})$. 

\medskip
\noindent \textbf{Hankel matrix construction and singular value decomposition}\\
Finally, from these $\vect D(t)$'s, we assemble the following Hankel matrix
\begin{equation}\label{equ:hankelmatrix1}
\mathbf H(s)=\begin{pmatrix}
\mathbf D(0) &\mathbf D(1)&\cdots& \mathbf D(s)\\
\mathbf D(1)&\mathbf D(2)&\cdots&\mathbf D(s+1)\\
\cdots&\cdots&\ddots&\cdots\\
\mathbf D(s)&\mathbf D(s+1)&\cdots&\mathbf D(2s)
\end{pmatrix}.
\end{equation}
We observe that $\mathbf H(s)$ has the decomposition
\begin{equation}\label{equ:hankeldecomp1}
\mathbf H(s)= BAB^T+\Delta,
\end{equation}
where $A=\text{diag}(a_1, \cdots, a_n)$ and $B=\big(\phi_{s}(e^{i \vect x_{j,1}r}+ e^{i \vect x_{j,2}r}), \cdots, \phi_{s}(e^{i \vect x_{j,1}r}+ e^{i \vect x_{j,2}r})\big)$ with $\phi_{s}(\omega)$ being defined as
\begin{equation}\label{equ:defineofphi0}
\phi_{s}(\omega) = (1, \omega, \cdots, \omega^s)^\top, 
\end{equation}
and 
\begin{equation}\label{equ:hankeldecompnoisematrix}
\Delta = \begin{pmatrix}
\mathbf {\hat W}(0)&\mathbf {\hat W}(1)&\cdots& \mathbf {\hat W}(s)\\
\mathbf {\hat W}(1)&\mathbf {\hat W}(2)&\cdots&\mathbf {\hat W}(s+1)\\
\vdots&\vdots&\ddots&\vdots\\
\mathbf {\hat W}(s)&\mathbf {\hat W}(s+1)&\cdots&\mathbf {\hat W}(2s)
\end{pmatrix}.
\end{equation}
We denote the singular value decomposition of $\mathbf H(s)$ as  
\[\mathbf H(s)=\hat U\hat \Sigma \hat U^*,\]
where $\hat\Sigma =\text{diag}(\hat \sigma_1,\cdots, \hat \sigma_n, \hat \sigma_{n+1},\cdots,\hat\sigma_{s+1})$ with the singular values $\hat \sigma_j$, $1\leq j \leq s+1$, ordered in a decreasing manner. The source number $n$ is then detected by a thresholding on these singular values. In the next subsection we will provide the theoretical guarantee of the threshold.

\subsection{Theoretical guarantee}
Note that when there is no noise, $\mathbf H(s)=BAB^{\top}$. We have the following estimate for the singular values of $BAB^{\top}$.

\begin{lem}\label{lem:numberdetectalgolem0}
	Let $n\geq 2$, $s\geq n$, $\vect y_j\in [-\frac{s\pi}{6\Omega}, \frac{s\pi}{6\Omega}]^2, 1\leq j\leq n,$ and $\vect v$ in (\ref{equ:numberdetectalgomeasure1}) be $(0, \frac{s\pi}{\Omega})^\top$. Let $$\sigma_1,\cdots,\sigma_n,0,\cdots,0$$ be the singular values of $BAB^T$ in (\ref{equ:hankeldecomp1}) ordered in a decreasing manner. Then the following estimate holds
	\begin{equation}\label{equ:numberdetectalgolem0equ1}
	\sigma_n\geq \frac{m_{\min}\big(3\theta_{\min}(\Omega,s)\big)^{2n-2}}{n(2(1+\sqrt{3})\pi)^{2n-2}},
	\end{equation}
	where $\theta_{\min}(\Omega, s)=\min_{p\neq j}\bonenorm{\vect y_p\frac{\Omega}{2s}-\vect y_j \frac{\Omega}{2s}}$.
\end{lem}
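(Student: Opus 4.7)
The plan is to reduce the bound on $\sigma_n(BAB^\top)$ to a classical Vandermonde lower bound applied to the coordinate-combined nodes $d_j = e^{ir\vect x_{j,1}} + e^{ir\vect x_{j,2}}$, where $\vect x_j = \vect y_j + \vect v$ and $r = \Omega/(2s)$. First I would verify the geometric setup: with $\vect v = (0,s\pi/\Omega)^\top$ and $\vect y_j \in [-s\pi/(6\Omega), s\pi/(6\Omega)]^2$, one checks $r\vect x_{j,1} \in [-\pi/12, \pi/12]$ and $r\vect x_{j,2} \in [5\pi/12, 7\pi/12]$, so $r\vect x_{j,2}-r\vect x_{j,1}\in [\pi/3, 2\pi/3]$. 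This places the pairs $(r\vect x_{j,1}, r\vect x_{j,2})$ in the regime of Lemma~\ref{lem:projectiondislem1}, yielding
\[
\babs{d_p - d_q} \geq \frac{3}{2\pi}\,r\bonenorm{\vect y_p - \vect y_q} \geq \frac{3}{2\pi}\theta_{\min}(\Omega, s) =: d_{\min},
\]
while $|d_j|^2 = 2 + 2\cos(r(\vect x_{j,1}-\vect x_{j,2})) \leq 3$, so $|d_j|\leq \sqrt{3}$.

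Second, I would show $\sigma_n(BAB^\top) \geq m_{\min} \sigma_n(B)^2$. Using the SVD $B = U\Sigma V^*$ with $U\in \mathbb{C}^{(s+1)\times n}$ isometric, $\Sigma\in\mathbb{C}^{n\times n}$ diagonal, and $V\in\mathbb{C}^{n\times n}$ unitary, one writes $BAB^\top = U\Sigma(V^*A\bar V)\Sigma U^\top$. The identity $V^\top\bar V=\overline{V^{-1}V}=I$ shows that $\bar V$ is also unitary, hence the singular values of $V^*A\bar V$ coincide with those of $A$. Moreover $U^\top\bar U = \overline{U^*U} = I$ implies $(BAB^\top)(BAB^\top)^* = U M M^* U^*$ with $M := \Sigma(V^*A\bar V)\Sigma$, so the nonzero singular values of $BAB^\top$ equal those of the $n\times n$ invertible $M$. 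Multiplicativity of the minimum singular value for square invertible factors, $\sigma_n(XYZ)\geq \sigma_n(X)\sigma_n(Y)\sigma_n(Z)$, then gives $\sigma_n(M)\geq \sigma_n(\Sigma)^2 m_{\min} = \sigma_n(B)^2 m_{\min}$.

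Third, since $s\geq n$, truncating $B$ to its top $n\times n$ block gives a square Vandermonde $V_n$ with $\sigma_n(B)\geq \sigma_n(V_n)$. The columns of $V_n^{-1}$ are the coefficient vectors of the Lagrange polynomials $L_j(x) = \prod_{i\neq j}(x-d_i)/(d_j-d_i)$, so by Parseval on $|x|=1$,
\[
\sum_k \babs{(L_j)_k}^2 \leq \max_{|x|=1}|L_j(x)|^2 \leq \prod_{i\neq j}\frac{(1+|d_i|)^2}{|d_j-d_i|^2} \leq \frac{(1+\sqrt{3})^{2(n-1)}}{d_{\min}^{2(n-1)}}.
\]
Summing over $j$ gives $\|V_n^{-1}\|_F^2 \leq n(1+\sqrt{3})^{2(n-1)}/d_{\min}^{2(n-1)}$, whence $\sigma_n(V_n)\geq d_{\min}^{n-1}/(\sqrt{n}(1+\sqrt{3})^{n-1})$. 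Combining the three steps with $d_{\min}\geq 3\theta_{\min}(\Omega,s)/(2\pi)$ recovers exactly (\ref{equ:numberdetectalgolem0equ1}).

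The main obstacle is the second step: $BAB^\top$ is non-Hermitian and mixes $B$ with $B^\top$ rather than $B^*$, so the standard identity $\sigma_n(B^*B)=\sigma_n(B)^2$ does not apply directly. The key technical observation is the simultaneous unitarity of $V$ and $\bar V$, which lets one extract $A$'s singular values cleanly from the sandwich $V^*A\bar V$. Once this is in place, combining with Lemma~\ref{lem:projectiondislem1} and the classical Lagrange–Parseval estimate is routine.
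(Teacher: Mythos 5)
Your proof is correct and follows essentially the same route as the paper: the same geometric reduction to the coordinate-combined nodes $d_j=e^{ir\vect x_{j,1}}+e^{ir\vect x_{j,2}}$ via Lemma \ref{lem:projectiondislem1}, the same key inequality $\sigma_n(BAB^\top)\geq m_{\min}\,\sigma_{\min}(B)^2$, and the same Vandermonde conditioning bound (your Lagrange--Parseval estimate reproduces Corollary \ref{lem:norminversevandermonde1} combined with Lemma \ref{lem:singularvaluevandermonde2}, with identical constants). The only difference is that you justify the middle step through the SVD of $B$ and the unitarity of $\bar V$, whereas the paper argues more tersely via $\sigma_n\geq \sigma_{\min}(BA)\,\sigma_n(B^\top)$ restricted to $\ker^{\perp}(B^\top)$; both are valid.
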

\begin{proof} 
Recall that $\sigma_n$ is the minimum nonzero singular value of $BAB^\top$. Let $\ker(B^\top)$ be the kernel space of $B^\top$ and $\ker^{\perp}(B^\top)$ be its orthogonal complement. Then we have 
\begin{align*}
&\sigma_n=\min_{||x||_2=1,x\in \ker^{\perp}(B^\top)}||BAB^\top x||_2\geq \sigma_{\min}(BA)\sigma_n(B^\top)\\
\geq& \sigma_{\min}(B)\sigma_{\min}(A)\sigma_{\min}(B).
\end{align*}
On the other hand, since by the condition of the lemma $\vect x_j = \vect y_j +\vect v \in [-\frac{s\pi}{6\Omega}, \frac{s\pi}{6\Omega}]\times [\frac{5s\pi}{6\Omega}, \frac{7s\pi}{6\Omega}]$, we have $ \frac{\Omega \vect x_j}{2s} \in [-\frac{\pi}{12}, \frac{\pi}{12}]\times [\frac{5\pi}{12}, \frac{7\pi}{12}]$. Thus, by Lemma \ref{lem:projectiondislem1}, for $r= \frac{\Omega}{2s}$, 
\[
\min_{p\neq q}\babs{e^{i \vect x_{p,1}r}+ e^{i \vect x_{p,2}r} -(e^{i \vect x_{q,1}r}+ e^{i \vect x_{q,2}r})} \geq \frac{3}{2\pi} \theta_{\min}(\Omega, s). 
\]
Note also that $|e^{i \vect x_{p,1}r}+ e^{i \vect x_{p,2}r}|\leq \sqrt{3}$. Thus applying Lemma \ref{lem:singularvaluevandermonde2} and Corollary \ref{lem:norminversevandermonde1}, we have
\begin{align*}
\sigma_{\min}(B)\geq \frac{1}{\sqrt{n}}\frac{\big(\frac{3}{2\pi}\theta_{\min}(\Omega,s)\big)^{n-1}}{(1+\sqrt{3})^{n-1}}.
\end{align*}
Then, it follows that
\begin{align*}\sigma_n\geq \sigma_{\min}(A)\Big(\frac{\big(\frac{3}{2\pi}\theta_{\min}(\Omega,s)\big)^{n-1}}{(1+\sqrt{3})^{n-1}}\Big)^2
\geq \frac{m_{\min}\big(3\theta_{\min}(\Omega,s)\big)^{2n-2}}{n(2(1+\sqrt{3})\pi)^{2n-2}}.\end{align*}
\end{proof}

\vspace{0.3cm}
We now present the main result on the threshold for the singular values of the matrix $\mathbf{H}(s)$.
\begin{thm}\label{thm:numberdetectalgothm1}
	Let $n\geq 2, s\geq n$ and $\mu=\sum_{j=1}^{n}a_j \delta_{\vect y_j}$ with $\vect y_j\in [-\frac{s\pi}{6\Omega}, \frac{s\pi}{6\Omega}]^2, 1\leq j\leq n$. Let $\vect v$ in (\ref{equ:numberdetectalgomeasure1}) be equal to $(0, \frac{s\pi}{\Omega})^\top$. Then for the singular values of $\vect H(s)$ in (\ref{equ:hankelmatrix1}), We have 
	\begin{equation}\label{equ:numberdetectalgothm1equ0}
	\hat \sigma_j< \frac{4^{s+1}\sigma}{3},\quad j=n+1,\cdots,s+1.
	\end{equation}
	Moreover, if the following separation condition is satisfied
	\begin{equation}\label{equ:numberdetectalgothm1equ1}
	\min_{p\neq j}\bonenorm{\vect y_p-\vect y_j}\geq \frac{4(1+\sqrt{3})\pi s }{3\Omega}\Big(\frac{2n4^{s+1}}{3}\frac{\sigma}{m_{\min}}\Big)^{\frac{1}{2n-2}},
	\end{equation}
	then
	\begin{equation}\label{MUSICthm1equ2}
	\hat\sigma_{n}\geq \frac{4^{s+1}\sigma}{3}.
	\end{equation}
\end{thm}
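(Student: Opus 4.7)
The plan is to combine Weyl's perturbation inequality with the decomposition $\mathbf{H}(s)=BAB^\top + \Delta$ (equation (\ref{equ:hankeldecomp1})), using Lemma \ref{lem:numberdetectalgolem0} to lower bound the nonzero singular values of $BAB^\top$ and a direct estimate on $\Delta$ from the binomial expansion of $\mathbf{\hat W}(t)$.

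First I would estimate the spectral norm of the noise matrix $\Delta$. From the construction, $\mathbf{\hat W}(t)=\sum_{t_1+t_2=t}\binom{t}{t_1}\mathbf{\tilde W}(\vect\omega_{rt_1,rt_2})$, and since $|\mathbf{\tilde W}(\vect\omega)|<\sigma$, the triangle inequality yields $|\mathbf{\hat W}(t)|<2^t\sigma$. The Hankel matrix $\Delta$ in (\ref{equ:hankeldecompnoisematrix}) has entries $\mathbf{\hat W}(i+j)$ for $0\le i,j\le s$, so using the Frobenius bound
\[
\|\Delta\|_2\le \|\Delta\|_F \le \sigma\sqrt{\sum_{i,j=0}^{s}4^{i+j}} = \sigma\cdot\frac{4^{s+1}-1}{3} < \frac{4^{s+1}\sigma}{3}.
\]

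Next I would establish (\ref{equ:numberdetectalgothm1equ0}). Since $BAB^\top$ has rank at most $n$, its singular values $\sigma_{n+1}(BAB^\top),\dots,\sigma_{s+1}(BAB^\top)$ all vanish. By Weyl's inequality applied to $\mathbf H(s) = BAB^\top + \Delta$, for every $j\ge n+1$,
\[
\hat\sigma_j \le \sigma_j(BAB^\top) + \|\Delta\|_2 = \|\Delta\|_2 < \frac{4^{s+1}\sigma}{3},
\]
which is exactly (\ref{equ:numberdetectalgothm1equ0}).

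For the lower bound (\ref{MUSICthm1equ2}), observe that the choice $\vect v=(0,s\pi/\Omega)^\top$ places each $\vect x_j = \vect y_j+\vect v$ inside $[-s\pi/(6\Omega),s\pi/(6\Omega)]\times[5s\pi/(6\Omega),7s\pi/(6\Omega)]$, which is exactly the hypothesis of Lemma \ref{lem:numberdetectalgolem0}. That lemma gives
\[
\sigma_n(BAB^\top) \ge \frac{m_{\min}\bigl(3\theta_{\min}(\Omega,s)\bigr)^{2n-2}}{n\bigl(2(1+\sqrt3)\pi\bigr)^{2n-2}},
\]
where $\theta_{\min}(\Omega,s)=\frac{\Omega}{2s}D_{\min}$ with $D_{\min}=\min_{p\ne j}\|\vect y_p-\vect y_j\|_1$. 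The separation condition (\ref{equ:numberdetectalgothm1equ1}) can be rewritten as
\[
\frac{3\Omega D_{\min}}{2s} \ge 2(1+\sqrt 3)\pi \left(\frac{2n\cdot 4^{s+1}}{3}\frac{\sigma}{m_{\min}}\right)^{\frac{1}{2n-2}},
\]
which, after raising both sides to the $(2n-2)$-th power and substituting back, gives $\sigma_n(BAB^\top)\ge \frac{2\cdot 4^{s+1}\sigma}{3}$.

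Applying Weyl's inequality once more yields
\[
\hat\sigma_n \ge \sigma_n(BAB^\top) - \|\Delta\|_2 \ge \frac{2\cdot 4^{s+1}\sigma}{3}-\frac{4^{s+1}\sigma}{3}=\frac{4^{s+1}\sigma}{3},
\]
which proves (\ref{MUSICthm1equ2}). The main obstacle is really only the bookkeeping between $\theta_{\min}(\Omega,s)$ and $D_{\min}$ together with the slightly tight constant chase to ensure the separation hypothesis delivers a factor of $2$ that absorbs $\|\Delta\|_2$; every other step is a direct application of Weyl plus Lemma \ref{lem:numberdetectalgolem0}.
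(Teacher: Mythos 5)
Your proposal is correct and follows essentially the same route as the paper: bound $\|\Delta\|_2$ by its Frobenius norm using $|\mathbf{\hat W}(t)|<2^t\sigma$, apply Weyl's inequality to the rank-$n$ perturbed decomposition $\mathbf H(s)=BAB^\top+\Delta$ for (\ref{equ:numberdetectalgothm1equ0}), and convert the separation condition into $\sigma_n(BAB^\top)\geq \tfrac{2\cdot 4^{s+1}\sigma}{3}$ via Lemma \ref{lem:numberdetectalgolem0} so that Weyl absorbs $\|\Delta\|_2$. Your constant bookkeeping between $\theta_{\min}(\Omega,s)$ and $D_{\min}$ matches the paper's computation exactly.
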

\begin{proof}
We first estimate $||\Delta||_2$ for $\Delta$ in (\ref{equ:hankeldecompnoisematrix}). By the definition of $\vect {\hat W}(t)$, we have $|\vect {\hat W}(t)< 2^{t} \sigma$. Thus $||\Delta||_2\leq ||\Delta||_F< \frac{4^{s+1}\sigma}{3}$. By Weyl's theorem, we have $|\hat \sigma_j-\sigma_j|\leq ||\Delta||_2, j=1,\cdots,n$. Together with $\sigma_j=0, n+1\leq j \leq s+1$, we get $|\hat \sigma_j|\leq ||\Delta||_2< \frac{4^{s+1}\sigma}{3}, n+1\leq j \leq s+1$. This proves (\ref{equ:numberdetectalgothm1equ0}). 

Let $\theta_{\min}(\Omega,s)=\frac{\Omega}{2s}\min_{p\neq q}\bonenorm{\vect y_p-\vect y_q}$. The separation condition (\ref{equ:numberdetectalgothm1equ1}) implies that
$$\theta_{\min}(\Omega,s) \geq  \frac{2(1+\sqrt{3})\pi}{3} \Big(\frac{2n4^{s+1}}{3}\frac{\sigma}{m_{\min}}\Big)^{\frac{1}{2n-2}}.$$
By Lemma \ref{lem:numberdetectalgolem0}, we have 
\begin{align}\label{MUSICthm1equ7}
\sigma_n \geq\frac{m_{\min}\big(3\theta_{\min}(\Omega,s)\big)^{2n-2}}{n(2(1+\sqrt{3})\pi)^{2n-2}}>2\frac{4^{s+1}\sigma}{3}.
\end{align}
Similarly, by Weyl's theorem, $|\hat \sigma_n-\sigma_n|\leq ||\Delta||_2$. Thus, $\hat \sigma_n\geq  2(s+1)\sigma-||\Delta||_2\geq \frac{4^{s+1}\sigma}{3}$. The conclusion (\ref{MUSICthm1equ2}) then follows. 
\end{proof}

\subsection{Coordinate-combination-based sweeping singular-value-thresholding number detection algorithm}

Based on Theorem \ref{thm:numberdetectalgothm1}, we can propose a simple thresholding algorithm, \textbf{Algorithm 1}, for the number detection.

\begin{algorithm}[H]\label{algo:coordcombinnumberalgo}
	\caption{\textbf{Coordinate-combination-based singular-value-thresholding number detection algorithm}}
	\textbf{Input:} Number $s$; Noise level $\sigma$;\\
	\textbf{Input:} Measurement: $\mathbf{Y}(\vect \omega), \vect \omega \in [0,1,\cdots, \Omega]^2$;\\
	\textbf{Input:} Translation vector $\vect v$ in $\mathbb R^2$; \\
	1: Modify the measurement and get $\vect X(\vect \omega) = e^{i \vect v^\top \vect \omega}\mathbf Y(\vect \omega)$\;
	2: Let $r = \Omega \mod 2s$, formulate $\vect D(t) = \sum_{t_1+t_2 =t, 0\leq t_1, t_2\leq t} {t\choose t_1} \vect X(\vect \omega_{rt_1, rt_2}), \quad t=0, \cdots, 2s$\;
	3: Assemble the $(s+1)\times(s+1)$ Hankel matrix $\mathbf H(s)$ like (\ref{equ:hankelmatrix1}) from $\vect D(t)$'s, and
	compute the singular value of $\mathbf H(s)$ as $\hat \sigma_{1}, \cdots,\hat \sigma_{s+1}$ distributed in a decreasing manner\;
	4: Determine $n$ by $\hat \sigma_n\geq \frac{4^{s+1}\sigma}{3}$ and $\hat \sigma_{j}< \frac{4^{s+1}\sigma}{3}, j=n+1,\cdots, s+1$\;
	\textbf{Return:} $n$ 
\end{algorithm}
Note that for \textbf{Algorithm 1} to work, in addition to the smallness of the noise level $\sigma$, we also need the integer $s$  to be larger than the source number. However, a suitable $s$ is not easy to estimate and large $s$ may incur a deterioration of the resolution as indicated by (\ref{equ:numberdetectalgothm1equ1}). To remedy this issue, we propose a sweeping singular-value-thresholding number detection algorithm (\textbf{Algorithm 2}) below.	In short, we detect the number $n_{recover}$ by \textbf{Algorithm 1} for all $s$ from $2$ to $\lfloor \frac{\Omega-1}{2}\rfloor$, and choose the greatest one $n_{\max}$ as the number of point sources. When the detected $n_{recover}$ becomes smaller than $n_{\max}$ for a large number of iterations, we will stop the loop. The details are summarized in \textbf{Algorithm2} below. 

We remark that when $s=n$ and the point sources satisfy 
\begin{equation}\label{equ:numberalgorithmequ1}
\min_{p\neq q}\bonenorm{\vect y_p-\vect y_q} \geq \frac{Cn}{\Omega}\Big(\frac{\sigma}{m_{\min}}\Big)^{\frac{1}{2n-2}},
\end{equation}
for some constant $C$, then (\ref{equ:numberdetectalgothm1equ1}) is satisfied. Thus by Theorem \ref{thm:numberdetectalgothm1}, for a suitable choice of $\vect v$, \textbf{Algorithm 1} can exactly detect the number $n$ when $s=n$. As $s$ increases to values greater than $n$, (\ref{equ:numberdetectalgothm1equ0}) implies that the number detected by \textbf{Algorithm 1} will not exceed $n$. Therefore, the sweeping singular-value-thresholding algorithm (\textbf{Algorithm \ref{algo:coordcombinsweepnumberalgo}}) can detect the exact number $n$ when $\Omega$ is greater than $2n+1$ and the point sources are separated by the minimal separation distance we derived in Theorem \ref{thm:highdupperboundnumberlimit0}. This demonstrates the optimal performance of \textbf{Algorithm \ref{algo:coordcombinsweepnumberalgo}}. We also remark that the theoretical threshold derived in Theorem \ref{thm:numberdetectalgothm1} seems to be larger than the one that is needed. One can improve the algorithm by choosing smaller threshold. Deriving new estimates for the thresholds in different cases is another interesting problem. 

\begin{algorithm}[H]\label{algo:coordcombinsweepnumberalgo}
	\caption{\textbf{Coordinate-combination-based sweeping singular-value-thresholding number detection algorithm}}	
	\textbf{Input:} Noise level $\sigma$; Measurement: $\mathbf{Y}$; Translation vector $\vect v$;\\
	\textbf{Input:} $n_{max}=0$, $smax_{index}=2$\\
	\For{$s=2: \lfloor \frac{\Omega-1}{2}\rfloor$}{
		Input $s,\sigma, \mathbf{Y}, \vect v$ to \textbf{Algorithm 1}, save the output of \textbf{Algorithm 1} as $n_{recover}$\; 
		\If{$n_{recover}>n_{max}$}{$n_{max}=n_{recover}$\;
		$smax_{index} = s$\;}
		\If{$s\geq$ $smax_{index}+2$} {break\;}}
	\textbf{Return} $n_{max}$.
\end{algorithm}

\subsection{Phase transition and  performance of Algorithm \ref{algo:coordcombinsweepnumberalgo}}\label{section:numberphasetransition}
In this subsection, we conduct numerical experiments to demonstrate the phase transition phenomenon regarding the super-resolution factor (SRF) and the SNR using \textbf{Algorithm \ref{algo:coordcombinsweepnumberalgo}}. We consider recovering the number of three and four sources. We fix $\Omega =10$ and detect the source number from their noisy Fourier data at $[0,1,\cdots, \Omega]^2$. We consider sources in $[0, \frac{\pi}{2}]^2$ and the translation vector in \textbf{Algorithm \ref{algo:coordcombinnumberalgo}} is $\vect v = (0, \frac{\pi}{2})^\top$. The noise level is $\sigma$ and the minimum separation distance between sources is $D_{\min}$. We perform $10000$ random experiments (the randomness is in the choice of ($d_{\min}$,$\sigma$, $\vect y_j$, $a_j$)) and detect the source number by \textbf{Algorithm \ref{algo:coordcombinsweepnumberalgo}}. We record the number of each successful detection (source number is detected exactly) and failed detection. Figures \ref{fig:twodnumberphasetransition} shows the result for the successful and unsuccessfully recovery in the parameter space  $\log(SNR)$ versus $\log(SRF)$ . It is observed that there is a line with slope ($2n-2$) in the parameter space of $\log(SRF)$-$\log(SNR)$ above which the number detection is always successful. This phase transition phenomenon is exactly the one predicted by our theoretical results in Theorems \ref{thm:highdupperboundnumberlimit0} and \ref{thm:numberdetectalgothm1}.  It also illustrates the efficiency of \textbf{Algorithm \ref{algo:coordcombinsweepnumberalgo}} as it can resolve the source number correctly in the regime where the source separation distance is of the order of the computational resolution limit.

\begin{figure}[!h]
	\centering
	\begin{subfigure}[b]{0.48\textwidth}
		\centering
		\includegraphics[width=\textwidth]{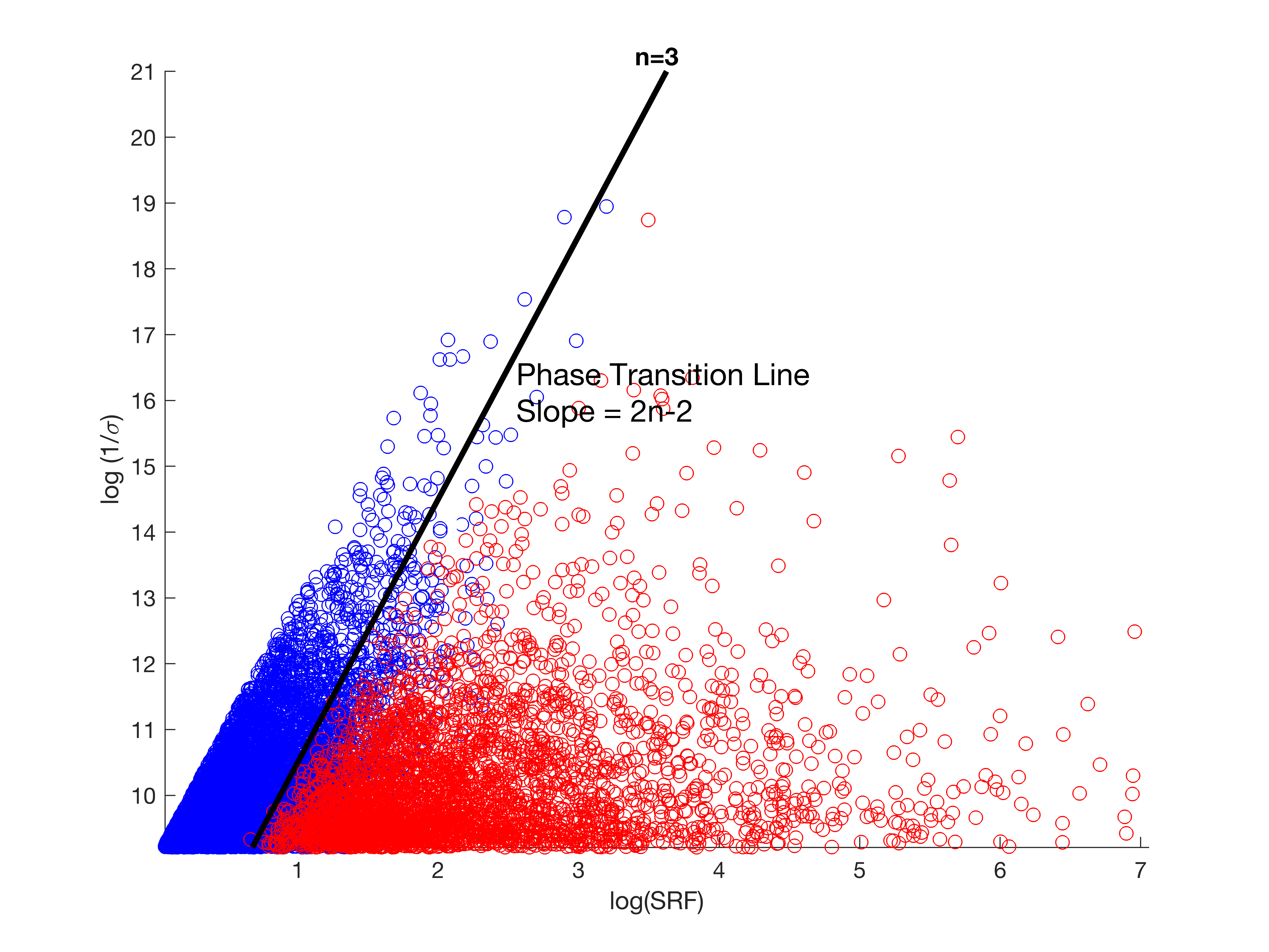}
		\caption{detection success}
	\end{subfigure}
	\begin{subfigure}[b]{0.48\textwidth}
		\centering
		\includegraphics[width=\textwidth]{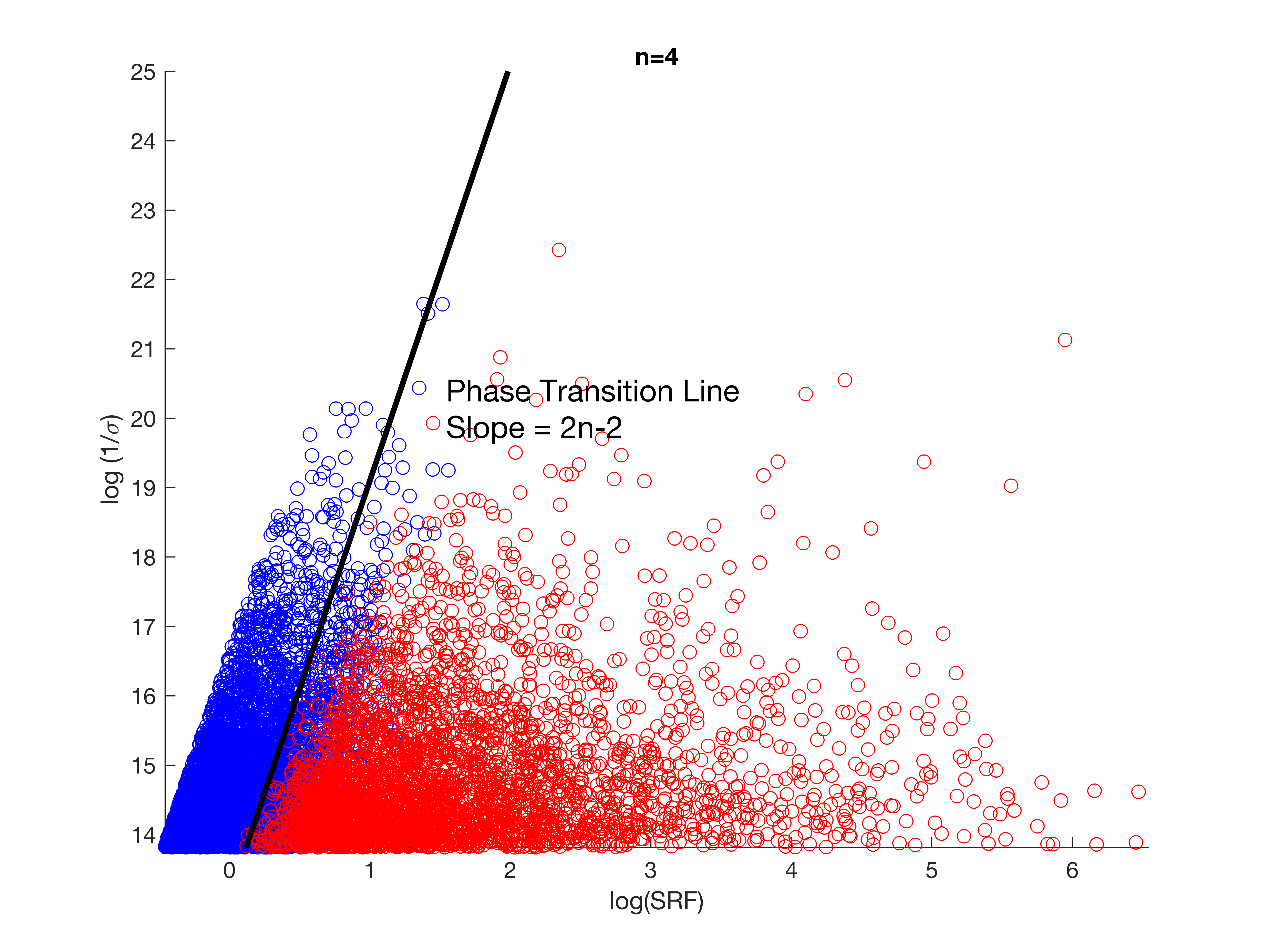}
		\caption{detection success}
	\end{subfigure}
	\caption{Plots of the successful and the unsuccessful number detection by \textbf{Algorithm \ref{algo:coordcombinsweepnumberalgo}} depending on the relation between $\log(SRF)$ and $\log(\frac{1}{\sigma})$. (a) illustrates that three sources can be exactly detected if $\log(\frac{1}{\sigma})$ is above a line of slope $4$ in the parameter space. (b) illustrates that four sources can be exactly detected if $\log(\frac{1}{\sigma})$ is above a line of slope $6$ in the parameter space.}
	\label{fig:twodnumberphasetransition}
\end{figure}

%\subsection{Discussions about the algorithm}
% 	\begin{subfigure}[b]{0.3\textwidth}
% 		\centering
% 		\includegraphics[width=\textwidth]{twodnumberphasetransition1failn3.png}
% 		\caption{detection fail}
% 	\end{subfigure}
% 	\begin{subfigure}[b]{0.3\textwidth}
% 		\centering
% 		\includegraphics[width=\textwidth]{twodnumberphasetransition1combinedn3.png}
% 		\caption{phase transition region}
% 	\end{subfigure}

\section{An algorithm for the source reconstruction in two-dimensional DOA problems}\label{section:twodlocationrecoveryalgo}
In this section, based on the idea of coordinate-combination, we propose a new MUSIC algorithm for resolving the sources in the two-dimensional DOA estimation. Our algorithm is named as coordinate-combination-based MUSIC algorithm; see \textbf{Algorithm \ref{algo:coordcombinMUSICalgo}}.   

\subsection{Hankel matrix construction}\label{section:locationalgohankelconstruct}
Similarly to the number detection algorithm in the above section, the MUSIC algorithm also relies on a singular value decomposition of certain Hankel matrix. Compared to conventional MUSIC-based DOA algorithms, the main novelty of our algorithm lies in a different way of assembling Hankel matrices. Similarly to Section \ref{section:numberalgohankelconstruct}, the Hankel matrix construction here is also based on observations in Section \ref{section:threeobservations} and the details are presented below.

\medskip
\noindent\textbf{Measurement modification by source translation}\\
We consider the same model setting as (\ref{equ:DOAmeasuresetting1}) for the available measurement. We also perform the source translation and modify the measurement to get
\begin{equation}\label{equ:locationrecoveryalgomeasure1}
\begin{aligned}
    \vect X(\vect \omega) = & e^{i \vect v^\top \vect \omega} \vect Y(\vect \omega) = \sum_{j=1}^{n}a_j e^{i (\vect{y}_j+\vect v)^\top \vect{\omega}} + e^{i \vect v^\top \vect \omega} \mathbf W(\vect{\omega})\\
    = & \sum_{j=1}^{n}a_j e^{i \vect x_j^\top \vect{\omega}} +  \vect  {\tilde{W}}(\vect{\omega}), \quad  \vect{\omega}\in [0,1, \cdots, \Omega]^2,
\end{aligned}
\end{equation}
where $\vect x_j = \vect y_j +\vect v$ for a suitable $\vect v\in \mathbb R^2$ and $|\mathbf {\tilde{W}}(\vect{\omega})|< \sigma$. 

\medskip
\noindent \textbf{Measurement modification by the coordinate-combination technique}\\
Let $s=\lfloor \frac{\Omega}{2} \rfloor$. From the modified measurement $\vect X(\vect \omega)$, we construct the following two lists of data:
\begin{align*}
&\vect D(t) = \sum_{t_1+t_2 =t, 0\leq t_1, t_2\leq t} {t\choose t_1} \vect X(\vect \omega_{t_1, t_2}), \quad t=0, \cdots, 2s, \\
& \vect G(t) = \sum_{t_1+t_2 =t, 0\leq t_1, t_2\leq t} (-1)^{t_2}{t\choose t_1} \vect X(\vect \omega_{t_1, t_2}), \quad t=0, \cdots, 2s,
\end{align*}
where $\vect \omega_{t_1, t_2} = (t_1, t_2)^\top$. Note that 
\begin{align*}
\vect D(t) = & \sum_{j=1}^n a_j(e^{i \vect x_{j,1}}+ e^{i \vect x_{j,2}})^{t} + \sum_{t_1+t_2 =t, 0\leq t_1, t_2\leq t} {t\choose t_1}  \vect {\tilde{W}}(\vect \omega_{t_1, t_2})\\
= & \sum_{j=1}^na_j(e^{i \vect x_{j,1}}+ e^{i \vect x_{j,2}})^{t} + \vect {\hat W}_d(t), \\
\vect G(t) = & \sum_{j=1}^n a_j(e^{i \vect x_{j,1}}-e^{i \vect x_{j,2}})^{t} + \sum_{t_1+t_2 =t, 0\leq t_1, t_2\leq t} (-1)^{t_2}{t\choose t_1}  \vect {\tilde{W}}(\vect \omega_{t_1, t_2})\\
= & \sum_{j=1}^na_j(e^{i \vect x_{j,1}}- e^{i \vect x_{j,2}})^{t} + \vect {\hat W}_g(t), 
\end{align*}
where $\vect {\hat W}_d(t) = \sum_{t_1+t_2 =t, 0\leq t_1, t_2\leq t} {t\choose t_1}  \vect {\tilde{W}}(\vect \omega_{t_1, t_2})$ and $\vect {\hat W}_g(t) = \sum_{t_1+t_2 =t, 0\leq t_1, t_2\leq t} (-1)^{t_2}{t\choose t_1}  \vect {\tilde{W}}(\vect \omega_{t_1, t_2})$. 

\medskip
\noindent \textbf{Hankel matrix construction}\\
Finally, from these $\vect D(t), \vect G(t)$'s, we assemble the following Hankel matrices:
\begin{equation}\label{equ:hankelmatrix2}
\mathbf H_d(s)=\begin{pmatrix}
\mathbf D(0) &\mathbf D(1)&\cdots& \mathbf D(s)\\
\mathbf D(1)&\mathbf D(2)&\cdots&\mathbf D(s+1)\\
\cdots&\cdots&\ddots&\cdots\\
\mathbf D(s)&\mathbf D(s+1)&\cdots&\mathbf D(2s)
\end{pmatrix}, \quad
\mathbf H_g(s)=\begin{pmatrix}
\mathbf G(0) &\mathbf G(1)&\cdots& \mathbf G(s)\\
\mathbf G(1)&\mathbf G(2)&\cdots&\mathbf G(s+1)\\
\cdots&\cdots&\ddots&\cdots\\
\mathbf G(s)&\mathbf G(s+1)&\cdots&\mathbf G(2s)
\end{pmatrix}.
\end{equation}

\subsection{Standard MUSIC algorithm}\label{section:onedsupportalgorithm}
In this subsection, we perform the standard MUSIC algorithm \cite{schmidt1986multiple, stoica1989music, liao2016music, liu2022measurement} for the Hankel matrix $\vect H_d(s), \vect H_g(s)$ in (\ref{equ:hankelmatrix2}). For ease of presentation, we only introduce the MUSIC algorithm for $\vect H_d(s)$.  The one for $\vect H_g(s)$ can be developed in the same manner. Our algorithm first performs the singular value decomposition of $\vect H_d(s)$,
\[
\vect H_d(s) = \hat U\hat \Sigma \hat U^*=[\hat U_1\quad \hat U_2]\text{diag}(\hat \sigma_1, \hat \sigma_2,\cdots,\hat \sigma_n,\hat \sigma_{n+1},\cdots,\hat \sigma_{s+1})[\hat U_1\quad \hat U_2]^*,
\]
where $\hat U_1=(\hat U(1),\cdots,\hat U(n)), \hat U_2=(\hat U(n+1),\cdots,\hat U(s+1))$ with $n$ being the estimated source number (model order). The source number $n$ can be detected by \textbf{Algorithm \ref{algo:coordcombinsweepnumberalgo}} and many other algorithms such as those in \cite{akaike1998information, wax1985detection, schwarz1978estimating, wax1989detection, chen1991detection, he2010detecting, han2013improved, liu2021theorylse, liu2021mathematicalhighd}. Denote the orthogonal projection onto the space $\hat U_2$ by $\hat P_2x=\hat U_2(\hat U_2^*x)$. For a test vector $\Phi(d)=(1, d,\cdots,d^s)^\top$, one defines the MUSIC imaging functional 
\begin{align*}
\hat J(d)=\frac{||\Phi(d)||_2}{||\hat P_2\Phi(d)||_2}=\frac{||\Phi(d)||_2}{||\hat U_2^*\Phi(d)||_2}.
\end{align*}
The local  maximizer of $\hat J(d)$ indicates the supports of the sources. In practice, one can test evenly spaced points in a specified region and plot the discrete imaging functional and then determine the sources by detecting the peaks. In our case, we only need to test some discrete points $d\in \mathbb C$ with $|d|\leq 2$ and select the peak by certain algorithms (such as the one in \cite{liu2022measurement} or its two-dimensional analog). Finally, we summarize the standard MUSIC algorithm in \textbf{Algorithm \ref{algo:standardmusic}} below.

\begin{algorithm}[H]
	\caption{\textbf{Standard MUSIC algorithm}}
	\textbf{Input:} Source number $n$\;
	\textbf{Input:} Modified measurements: $\vect D(t)$ (or $\vect G(t)$), $t=0, \cdots, s$ with $s\geq n$\;
	\textbf{Input:} Test points $d$'s\;
	1: Formulate the $(s+1)\times (s+1)$ Hankel matrix $\vect H_d(s)$ from $\vect D(t)$'s as (\ref{equ:hankelmatrix2})\;
	2: Compute the singular vectors of $\vect H_d(s)$ as $\hat U(1), \hat U(2),\cdots,\hat U(s+1)$ and form the noise space $\hat U_{2}=(\hat U(n+1),\cdots,\hat U(s +1))$\;
	3: For test points $d$'s, construct the test vector $\Phi(d)=(1,d, \cdots, d^s)^\top$\; 
	4: Plot the MUSIC imaging functional $\hat J(d)=\frac{||\Phi(d)||_2}{||\hat U_2^*\Phi(d)||_2}$\;
	5: Select the peak locations $\hat d_j$'s in the plot of $\hat J(d)$.
	\label{algo:standardmusic}
\end{algorithm}

\subsection{Coordinate-combination-based MUSIC algorithm}
After applying the MUSIC algorithm to both $\vect H_d(s), \vect H_g(s)$, we expect to reconstruct $n$ $\hat d_j$'s which is close to $d_j = e^{i \vect x_{j,1}}+ e^{i \vect x_{j,2}}$, and $n$ $\hat g_j$'s which is close to $g_j = e^{i \vect x_{j,1}}- e^{i \vect x_{j,2}}$. The next question is how to link the pair $\hat d_j, \hat g_j$ that correspond to the same source. This is an inevitable pair matching issues in most of the two-dimensional DOA algorithms \cite{liu2021mathematicalhighd}, where ad hoc schemes \cite{zoltowski1989sensor, johnson1991operational, chen1992direction, yilmazer2006matrix} were derived to associate the estimated azimuth and elevation angles. Here, in contrast with  conventional DOA algorithms, we do not need to link the azimuth and elevation angles but to link  $\hat d_j$ and $\hat g_j$. 

Observe that $|d_j+g_j| = |2 e^{i \vect x_{j,1}}|=2$ and $|d_j-g_j| = |2 e^{i \vect x_{j,2}}|=2$. We can use this criterion to match the pair $\hat d_j, \hat g_j$ that they should satisfy 
\begin{equation}\label{equ:pairmatchingcriterion1}
|\hat d_j + \hat g_j|\approx 2, \quad |\hat d_j - \hat g_j|\approx 2.
\end{equation}
For example, we could consider the following minimization problem:
\begin{equation}\label{equ:CBMUSICequ1}
\min_{\pi \in \zeta(n)} \sum_{j=1}^n\babs{|\hat d_j + \hat g_{\pi_j}|-2}+\babs{|\hat d_j - \hat g_{\pi_j}|-2},
\end{equation}
where $\zeta(n)$ is the set of all permutations of $\{1, \cdots, n\}$. This can be viewed as a balanced assignment problem \cite{pentico2007assignment}, which can be solved efficiently by many algorithms such as the Hungarian algorithm.  

We remark that our pair matching algorithm is not the one usually required in other one-dimensional based DOA algorithms. Unlike our case, the other pair matching problem is not an assignment problem, wherefore the pair matching is usually time consuming or complex processing is conducted to reduce the computational cost.

\begin{algorithm}[H]\label{algo:coordcombinMUSICalgo}
	\caption{\textbf{Coordinate-combination-based MUSIC algorithm for two-dimensional DOA}}
	\textbf{Input:} Source number $n$; noise level $\sigma$;\\
	\textbf{Input:} Measurement: $\mathbf{Y}(\vect \omega), \vect \omega \in [0,1,\cdots, \Omega]^2$;\\
	\textbf{Input:} Translation vector $\vect v$ in $\mathbb R^2$; \\
	\textbf{Input:} Evenly spaced test points $d\in \mathbb{C}$ with $|d|\leq 2$;\\
	1: Modify the measurement and get $\vect X(\vect \omega) = e^{i \vect v^\top \vect \omega}\mathbf Y(\vect \omega)$\;
	2: Let $s = \lfloor \frac{\Omega}{2}\rfloor$, formulate $\vect D(t) = \sum_{t_1+t_2 =t, 0\leq t_1, t_2\leq t} {t\choose t_1} \vect X(\vect \omega_{t_1, t_2}),\quad \vect G(t) = \sum_{t_1+t_2 =t, 0\leq t_1, t_2\leq t}(-1)^{t_2} {t\choose t_1} \vect X(\vect \omega_{t_1, t_2}), \quad t=0, \cdots, 2s$\;
	3: Input $\vect D, n$ and test points $d$'s into \textbf{Algorithm \ref{algo:standardmusic}} and get the output $\hat d_1, \cdots, \hat d_n$\;
	4: Input $\vect G, n$ and test points $d$'s into \textbf{Algorithm \ref{algo:standardmusic}} and get the output $\hat g_1, \cdots, \hat g_n$\;
	5: Matching the $\hat d_j , \hat g_j$'s by applying an assignment algorithm (match pairs in matlab) to solve (\ref{equ:CBMUSICequ1}) and get the pair list $\{(\hat d_j, \hat g_j)\}^{j=1^n}$\;
	6: Get $\frac{\hat d_j +\hat g_j}{2}$ and  $\frac{\hat d_j -\hat g_j}{2}, j=1, \cdots, n$. Get $e^{i \vect {\hat x}_{j,1}r}$ by considering the closest point to $\frac{\hat d_j +\hat g_j}{2}$ on the unit circle. Get $e^{i \vect {\hat x}_{j,2}}$ by considering the closest point to $\frac{\hat d_j -\hat g_j}{2}$ on the unit circle\;
	7: The recovered $\vect {\hat x}_j = (\vect {\hat x}_{j,1}, \vect {\hat x}_{j,2})^\top$. Reconstruct $\vect {\hat y}_j = \vect {\hat x}_j-\vect v, j=1, \cdots, n$\;
	\textbf{Return:} $\vect {\hat y}_1, \cdots, \vect {\hat y}_n$.  
\end{algorithm}

\subsection{Superiority of the algorithm}\label{section:superioryofalgo}
\subsubsection{Overcome the issue of separation distance loss in conventional two-dimensional DOA algorithms}
Despite the fact that different recovering methods are proposed for DOA estimation in two dimensions, the conventional way for tackling the problem has hardly exceeded the scope of recovering the two direction (x- and  y-direction) components of sources individually. Thus, as illustrated in Figure \ref{fig:severe loss of distance}, severe loss of the source separation distance in one dimension is always an inevitable issue that causes unstable recovery of the direction components. Most of the researches ignored this issue and some papers \cite{wang2008tree, wang2015decoupled} proposed ad hoc schemes to enhance the reconstruction but in a complex manner. 

Our method is a new one-dimensional-based algorithm where the issue of severe source separation distance loss is avoided in a simple way. In our algorithm,  the separation distance between direction components of sources are still preserved. This has been demonstrated by Lemma \ref{lem:projectiondislem1} for $\vect \theta_j \in \mathbb [0, \pi]^2, j=1,2$ with $\frac{\pi}{3}\leq  \vect \theta_{j,2} - \vect \theta_{j,1}\leq \frac{2}{3}\pi, j=1,2$. Furthermore,  Theorem \ref{thm:dislossthm1} shows that, for $\vect y_j \in [0, \frac{\pi}{2}]^2$ and $\vect v = (0, \frac{\pi}{2})^\top$, the separation distance between $\vect x_j = \vect y_j+\vect v$'s can be preserved after the coordinate-combination. By Theorem \ref{thm:dislossthm1}, if the distance between the $\vect x_j$'s is a certain constant $C$, then the distance between $e^{i \vect x_{j,1}}+ e^{i \vect x_{j,2}}$ is larger than $\frac{2C}{\pi^2}$ times the original distance. For better results of preservation of the distance, as indicated by Theorems \ref{thm:highdupperboundnumberlimit0} and \ref{thm:highdupperboundsupportlimit0}, we could consider sources in a smaller region with a specified translation. In the numerical experiments presented in this paper, for ease of discussion and presentation, we will consider sources in $[0,\frac{\pi}{2}]^2$ and the translation vector $\vect v = (0, \frac{\pi}{2})^\top$. We leave the recovering strategies of the whole region $[0,2\pi]^2$ and other enhancement for future works.  

\begin{thm}\label{thm:dislossthm1}
For two different vectors $\vect x_j \in \mathbb [0, \frac{\pi}{2}]\times[\frac{\pi}{2}, \pi], j=1,2$, if $||\vect x_1 - \vect x_2||_1\geq C$ for a constant $C$, then 
	\[
	\babs{e^{i \vect x_{1,1}}+ e^{i \vect x_{1,2}}- (e^{i \vect x_{2,1}} + e^{i \vect x_{2,2}})}\geq \frac{2C}{\pi^2} C.
	\]
\end{thm}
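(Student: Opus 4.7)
My plan is to mirror the proof of Lemma \ref{lem:projectiondislem1} as closely as possible. Set $S_j := e^{i\vect x_{j,1}} + e^{i\vect x_{j,2}}$ and $\phi_j := \vect x_{j,2}-\vect x_{j,1} \in [0,\pi]$, so that $|S_j| = 2\cos(\phi_j/2)$ and $\arg S_j = (\vect x_{j,1}+\vect x_{j,2})/2$. After swapping labels so that $\vect x_{1,1}\leq \vect x_{2,1}$, the hypothesis $\vect x_{j,1}\in[0,\pi/2]$ and $\vect x_{j,2}\in[\pi/2,\pi]$ leaves exactly two geometric configurations: the intervals $[\vect x_{j,1},\vect x_{j,2}]$ are either nested or interlaced.

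\textbf{Case 1 (nested).} If $\vect x_{1,1}\le \vect x_{2,1}\le \vect x_{2,2}\le \vect x_{1,2}$, then $C = \phi_1-\phi_2$. I would use the triangle inequality $|S_1-S_2|\ge \bigl||S_1|-|S_2|\bigr|$ together with the sum-to-product identity to obtain $|S_1-S_2|\ge 4\sin\tfrac{\phi_1+\phi_2}{4}\sin\tfrac{\phi_1-\phi_2}{4}$, then invoke Jordan's inequality $\sin x\ge 2x/\pi$ on $[0,\pi/2]$ (noting $(\phi_1+\phi_2)/4\le \pi/2$) and the monotonicity $\phi_1+\phi_2\ge \phi_1-\phi_2=C$ to reach the desired quadratic-in-$C$ lower bound.

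\textbf{Case 2 (interlaced).} If $\vect x_{1,1}\le \vect x_{2,1}\le \vect x_{1,2}\le \vect x_{2,2}$, the angle between the two sum vectors is $\arg S_2 - \arg S_1 = \tfrac{(\vect x_{2,1}-\vect x_{1,1})+(\vect x_{2,2}-\vect x_{1,2})}{2} = C/2$, and the same geometric fact used in Lemma \ref{lem:projectiondislem1} gives $|S_1-S_2| \ge \max(|S_1|,|S_2|)\sin(C/2)$. Combining with $\sin(C/2)\ge C/\pi$ reduces the statement to a lower bound on $\max(|S_1|,|S_2|)$ in terms of $C$.

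\textbf{Main obstacle.} In Lemma \ref{lem:projectiondislem1} the ambient box $[0,2\pi/3]^2$ and the extra assumption $\pi/3\le \phi_j\le 2\pi/3$ forced $|S_j|\ge 1$, which is exactly what closed Case~2 there. Here the wider region $[0,\pi/2]\times[\pi/2,\pi]$ permits $\phi_j$ to approach $\pi$, in which case both $|S_1|$ and $|S_2|$ shrink simultaneously to $0$. The delicate step is to show that this simultaneous vanishing of magnitudes is tied to the smallness of $C$. I would isolate this by expanding
\[
|S_1-S_2|^2 \;=\; 4\Bigl[\sin^2\tfrac{\alpha}{2} + \sin^2\tfrac{\beta}{2} + 2\sin\tfrac{\alpha}{2}\sin\tfrac{\beta}{2}\cos\tfrac{\phi_1+\phi_2}{2}\Bigr],\qquad \alpha+\beta = C,
\]
with $\alpha = \vect x_{2,1}-\vect x_{1,1}$ and $\beta = \vect x_{2,2}-\vect x_{1,2}$, and then showing that the regime in which the cross term is most negative (i.e.\ $\phi_1+\phi_2$ close to $2\pi$, equivalently both $\vect x_{j,1}$ near $0$ and both $\vect x_{j,2}$ near $\pi$) forces $\alpha$ and $\beta$ to be small by the box constraints, yielding the required quadratic lower bound. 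Once this trade-off is handled, the remaining estimates are routine Jordan-inequality manipulations mirroring Case~1.
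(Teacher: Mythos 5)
Your proposal is correct and follows essentially the same route as the paper's proof: the same nested/interlaced case split, the same $\bigl||S_1|-|S_2|\bigr|$ sum-to-product plus Jordan-inequality estimate in the nested case, and the same ``angle of at least $C/2$ between the sum vectors'' argument in the interlaced case, with the obstacle you flag being exactly the one nontrivial step. The paper closes that step by the contrapositive mechanism you describe: writing $S_j=e^{i\vect x_{j,1}}+e^{i\vect x_{j,2}}$, if $\max(|S_1|,|S_2|)<2\cos\bigl(\tfrac{\pi-C/2}{2}\bigr)$ then $\vect x_{j,2}-\vect x_{j,1}>\pi-C/2$ for both $j$, which together with $\vect x_{j,1}\ge 0$ and $\vect x_{j,2}\le \pi$ pins both points within $C/2$ of the corner $(0,\pi)$ in each coordinate and forces $\|\vect x_1-\vect x_2\|_1<C$, a contradiction; this yields $\max(|S_1|,|S_2|)\ge 2\sin(C/4)$ and the estimate is completed by multiplying with $\sin(C/2)$.
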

\begin{proof}
We prove the lemma by considering the following two cases. \\
\textbf{Case 1:} $0\leq \vect x_{1,1}  \leq \vect x_{2,1} \leq  \vect x_{2,2}\leq  \vect x_{1,2}\leq \pi$.\\
In this case, 
\begin{align*}
\babs{e^{i \vect x_{1,1}}+ e^{i \vect x_{1,2}}- (e^{i \vect x_{2,1}} + e^{i \vect x_{2,2}})} \geq & \babs{e^{i \vect x_{2,1}}+ e^{i\vect x_{2,2}}}- \babs{e^{i \vect x_{1,1}} + e^{i\vect x_{1,2}}}\\
	\geq &2\Big(\cos(\frac{\phi_2}{2}) - \cos(\frac{\phi_1}{2})\Big),
\end{align*}
where $\phi_j = \vect x_{j,2} - \vect x_{j,1}, j=1,2$. By the assumption of the theorem, we have $C\leq \phi_1 -\phi_2 \leq \pi$ and $C\leq \phi_1+\phi_2\leq 2\pi$. Thus 
\[
2\Big(\cos(\frac{\phi_2}{2}) - \cos(\frac{\phi_1}{2}) \Big) =4\sin(\frac{\phi_1+\phi_2}{4})\sin(\frac{\phi_1-\phi_2}{4})\geq 4 \sin(\frac{C}{4})\sin(\frac{C}{4})\geq  \frac{2C^2}{\pi^2}.
\]
where the last inequality uses $\sin(\frac{C}{4}) \geq \frac{2\sqrt{2}}{\pi}\frac{C}{4}$ for $0\leq \frac{C}{4}\leq \frac{\pi}{4}$. \\
\textbf{Case 2:} $0\leq \vect x_{1,1}  \leq  \vect x_{2,1} \leq   \vect x_{1,2}\leq  \vect x_{2,2}\leq \pi$.\\
Again, the idea is to calculate the angle between $e^{i \vect x_{1,1}}+ e^{i \vect x_{1,2}}$ and $e^{i \vect x_{2,1}} + e^{i \vect x_{2,2}}$. By a simple analysis of the angle relations between $e^{i \vect x_{1,1}}, e^{i \vect x_{1,2}}, e^{i \vect x_{2,1}}, e^{i \vect x_{2,2}},$ we obtain that the angle between $e^{i \vect x_{1,1}}+ e^{i \vect x_{1,2}}$ and $e^{i \vect x_{2,1}} + e^{i \vect x_{2,2}}$ is $\frac{\vect x_{2,1}-\vect x_{1,1}+ \vect x_{2,2}-\vect x_{1,2}}{2}$ which is larger than $\frac{C}{2}$. Thus 
\begin{equation}\label{equ:dislossequ1}
\babs{e^{i \vect x_{1,1}}+ e^{i \vect x_{1,2}}- (e^{i \vect x_{2,1}} + e^{i \vect x_{2,2}})} \geq \max\Big(\babs{e^{i \vect x_{1,1}}+ e^{i \vect x_{1,2}}}, \babs{e^{i \vect x_{2,1}} + e^{i \vect x_{2,2}}}\Big)\sin(\frac{C}{2}).
\end{equation}
We next claim that 
\[
\max\Big(\babs{e^{i \vect x_{1,1}}+ e^{i \vect x_{1,2}}}, \babs{e^{i \vect x_{2,1}} + e^{i \vect x_{2,2}}}\Big) \geq 2 \cos\Big(\frac{\pi-C/2}{2}\Big).
\]
Otherwise, $\vect x_{1,2}-\vect x_{1,1}> \pi- \frac{C}{2}$ and $\vect x_{2,2}-\vect x_{2,1}> \pi- \frac{C}{2}$, which is impossible when $||\vect x_1 -\vect x_2||_1\geq C$. Thus the claim is proved. Together with (\ref{equ:dislossequ1}), we arrive at 
\[
\babs{e^{i \vect x_{1,1}}+ e^{i \vect x_{1,2}}- (e^{i \vect x_{2,1}} + e^{i \vect x_{2,2}})}\geq 2 \sin(\frac{C}{4})\sin(\frac{C}{2}) \geq \frac{2C^2}{\pi^2}.
\]
This completes the proof.
\end{proof}

\begin{figure}[!h]
		\centering
		\includegraphics[width=0.8\textwidth]{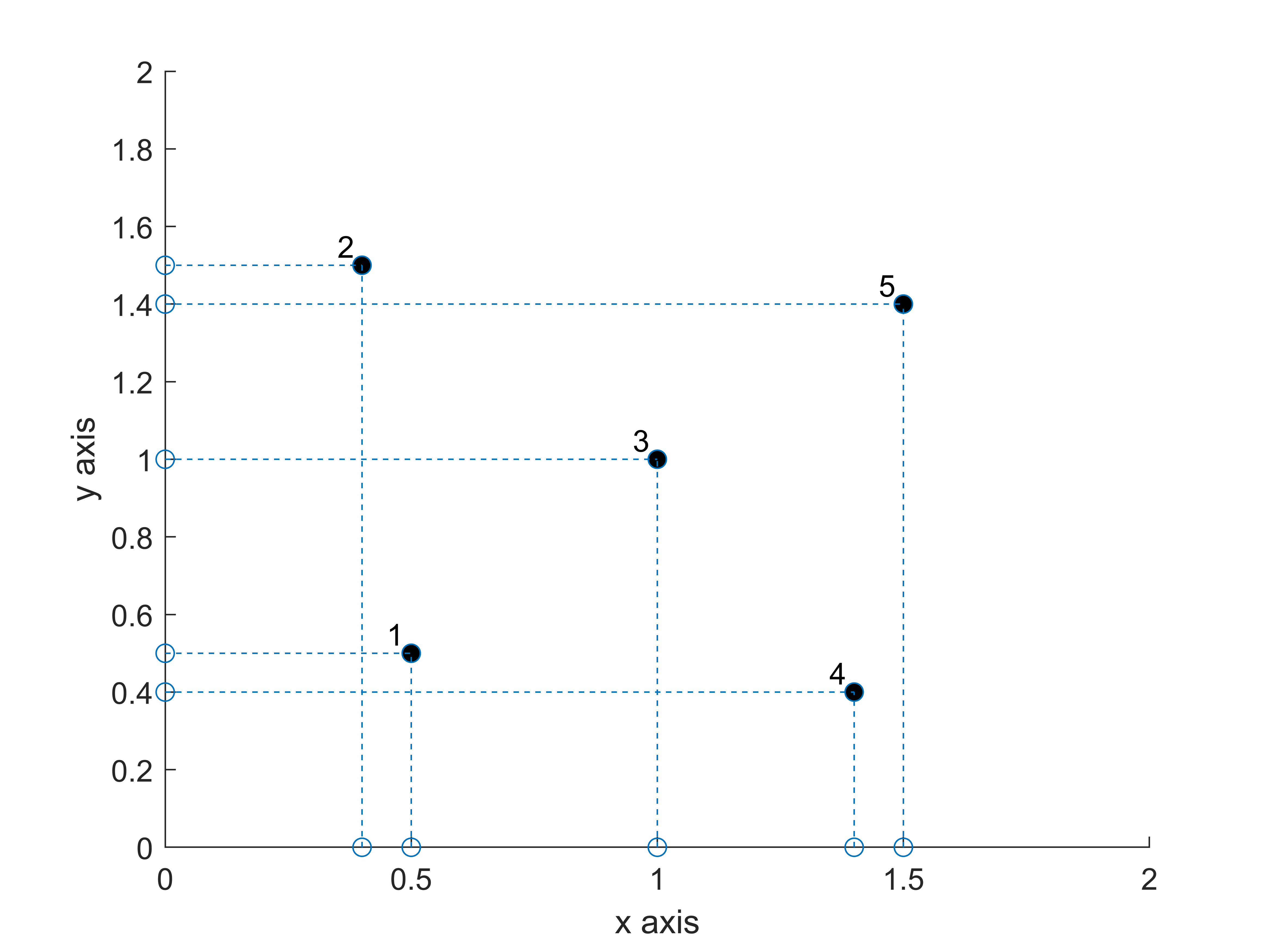}
	\caption{Although the  sources are well-separated, the direction components of sources are closely spaced.}
	\label{fig:severe loss of distance}
\end{figure}

\subsubsection{Phase transition and performance of Algorithm \ref{algo:coordcombinMUSICalgo}}\label{section:locationphasetransition}
Most of the conventional two-dimensional DOA algorithms consider multiple snapshots of measurements from coherent or incoherent signals. Also, the noise is usually assumed to be white Gaussian noise such that the expectation of the covariance matrix of the measurement vector is a sum of two terms, where  the first term is from the correlation of the signals and the second one is the noise correlation matrix. Based on this crucial observation, many algorithms were derived to tackle the problem. Differently to the above model, we consider recovering the source from a single measurement with deterministic noise. Thus we do not compare the performance of our algorithm with those algorithms with statistical model. We demonstrate the super-resolution capacity of our algorithm for the single snapshot case by showing the phase transition of the algorithm. We will derive a coordinate-combination-based MUSIC algorithm for multiple snapshots case in a forthcoming work. 

We now describe the numerical experiments for demonstrating the phase transition phenomenon of our algorithm in terms of the SNR versus the super-resolution factor. We fix $\Omega=10$ and  consider three and four sources separated by the minimum separation distance $D_{\min}$, i.e., $\min_{p\neq q}||\vect y_p - \vect y_q||_1\geq D_{\min}$. We perform 10000 random experiments (the randomness is in the choice of $(D_{\min},\sigma, \vect y_j, a_j)$ to recover the sources using \textbf{Algorithm \ref{algo:coordcombinMUSICalgo}}. The reconstruction is viewed and recorded as successful if the recovered source is in a $\frac{D_{\min}}{3}$-neighborhood of the underlying source, otherwise it is unsuccessful; See \textbf{Algorithm \ref{algo:singleexperiemnt}} for the details of a single experiment. The results of the experiments are summarized in Figure \ref{fig:twodnumberphasetransition} which shows each successful and unsuccessfully recovery with respective to the $\log(SRF)$ and $\log(SNR)$. It is observed that there is a line with slope ($2n-1$) in the parameter space  $\log(SRF)$ versus $\log(SNR)$ above which the source is stably reconstructed for every realization. This phase transition phenomenon is exactly the one predicted by our theoretical result in Theorems \ref{thm:highdupperboundsupportlimit0}.  It also manifests the efficiency of \textbf{Algorithm \ref{algo:coordcombinMUSICalgo}} as it can resolve the source in the regime where the source separation distance is of the order of the computational resolution limit. 

\begin{algorithm*}[H]\label{algo:singleexperiemnt}
	\caption{\textbf{A single experiment}}	
	\textbf{Input:} Sources $\mu=\sum_{j=1}^{n}a_j \delta_{\vect y_j}$; Noise level $\sigma$;\\
	\textbf{Input:} Measurements: $\mathbf{Y}(\vect \omega), \vect \omega = [0,1,\cdots, \Omega]^2$;\\
	1: $\text{Successnumber}=0$\;
	2: Input source number $n$ and measurement $\vect Y$ to \textbf{Algorithm \ref{algo:coordcombinMUSICalgo}} and save the output as $\vect y_1, \cdots, \vect y_n$\; 
	\For{each $1\leq j \leq n$}{
		Compute the error for the source location $\vect y_j$:
		$e_j:=\min_{\mathbf{\hat y}_l, l=1, \cdots,n }||\mathbf{\hat y}_l- \vect y_j||_2$\; 
		The source location $\vect y_j$ is recovered successfully if
		\[e_j< \frac{\min_{p\neq j}||\vect y_p- \vect y_j||_2}{3};\]
		and 
		\[\text{Successnumber}=\text{Successnumber}+1;\] 
	}  
	\eIf{$\text{Successnumber}==n$}{
		Return Success}
	{Return Fail}.
\end{algorithm*}

\begin{figure}[!h]
	\centering
	\begin{subfigure}[b]{0.48\textwidth}
		\centering
		\includegraphics[width=\textwidth]{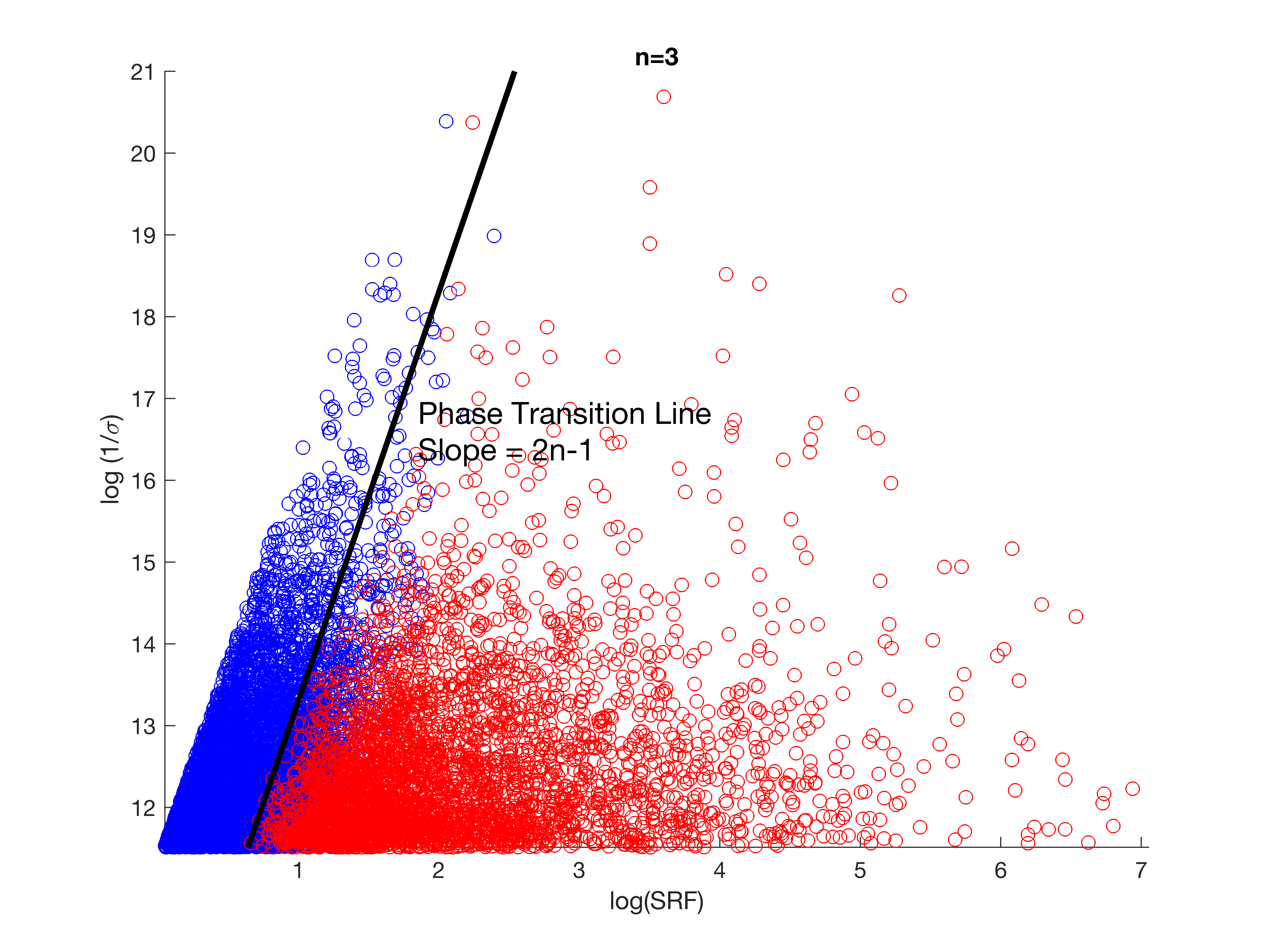}
		\caption{Recovery success.}
	\end{subfigure}
	\begin{subfigure}[b]{0.48\textwidth}
		\centering
		\includegraphics[width=\textwidth]{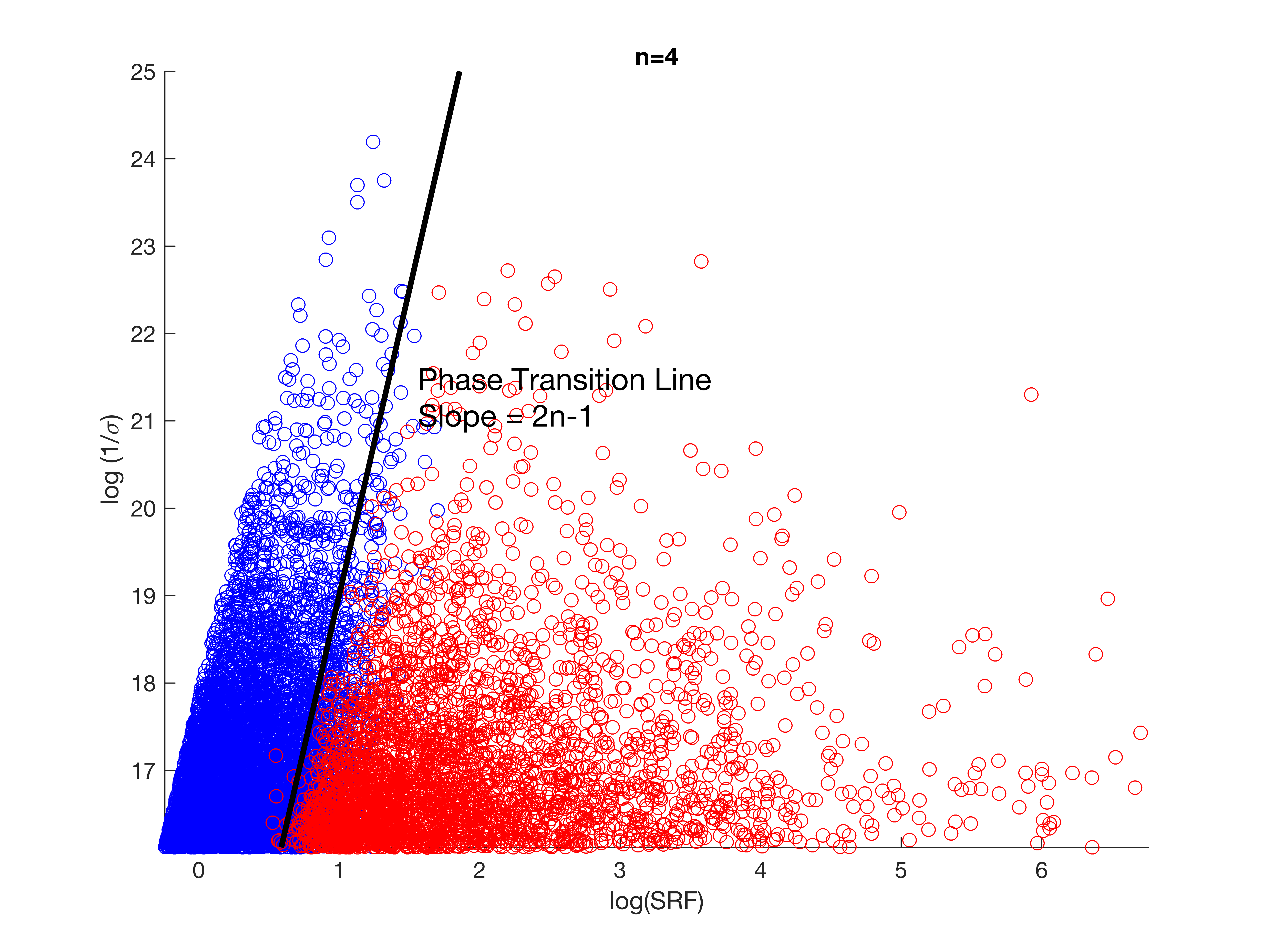}
		\caption{Recovery success.}
	\end{subfigure}

	\caption{Plots of the successful and the unsuccessful location recoveries by \textbf{Algorithm \ref{algo:coordcombinMUSICalgo}} in terms of $\log(\frac{1}{\sigma})$ versus $\log(SRF)$. (a) illustrates that locations of three point sources can be stably recovered if $\log(\frac{1}{\sigma})$ is above a line of slope $5$ in the parameter space. Conversely, for the same case, (b) shows that  locations of four point sources can be stably recovered if $\log(\frac{1}{\sigma})$ is above a line of slope $7$ in the parameter space.}
	\label{fig:twodsupportphasetransition}
\end{figure}

\section{A nonlinear approximation theory in Vandermonde space}\label{section:approxtheoryinvanderspace} 
In this section, we introduce the main technique, a nonlinear approximation theory in Vandermonde space\cite{liu2021mathematicaloned, liu2021theorylse}, 
that is used to deal with one-dimensional super-resolution problems. In \cite{liu2021mathematicaloned}, we have derived the theory for real numbers and in \cite{liu2021theorylse} for complex numbers on the unit circle. Here, we derive a different theory for arbitrary bounded complex numbers, which are related to the proofs of the main results of the paper. 

For a given positive integer $s$ and $\omega\in \mathbb C$, we denote by
\begin{equation}\label{equ:defineofphi}
	\phi_s(\omega)=(1,\omega,\cdots,\omega^{s})^\top 
\end{equation}
and call $\phi_s$ a Vandermonde vector. At the heart of the theory is the following nonlinear approximation problem in the Vandermonde space 
\begin{equation}\label{equ:vandermondeapprox}
	\min_{\hat a_j, \hat d_j\in \mathbb{R},|\hat d_j|\leq d,j=1,\cdots,k}\Big|\Big|\sum_{j=1}^k \hat a_j\phi_s(\hat d_j)-v\Big|\Big|_2,
\end{equation}
where $v=\sum_{j=1}^{k+1}a_j\phi_s(d_j)$ is a given vector. 
We shall derive a sharp lower-bound for this problem. In addition, we shall also investigate the stability of the approximation problem (\ref{equ:vandermondeapprox}) for $v=\sum_{j=1}^{k}a_j\phi_s(d_j)$.

\subsection{Notation and Preliminaries} \label{sec-vand-pre}
We introduce some notation and preliminaries. We denote the Vandermonde matrix by
\begin{align} \label{equ:vandermatrix}
	V_{s}(k)=\begin{pmatrix}
		1&\cdots&1\\
		d_1&\cdots&d_{k}\\
		\vdots&\ddots&\vdots\\
		d_1^s&\cdots&d_{k}^{s}
	\end{pmatrix}=
	\Big(
	\phi_s(d_1)\ \ \phi_s(d_2)\ \ \cdots\ \ \phi_s(d_k) 	
	\Big).
\end{align}
For a real matrix or a vector $A$, we denote by $A^\top$ its transpose and by $A^*$ its conjugate transpose. 

\medskip
We first present some basic properties of Vandermonde matrices.
\begin{lem}{\label{lem:norminversevandermonde0}}
	For $k$ distinct complex numbers $d_j$'s, we have
	\[
	||V_{k-1}(k)^{-1}||_{\infty}\leq \max_{1\leq i\leq k}\Pi_{1\leq p\leq k,p\neq i}\frac{1+|d_p|}{|d_i-d_p|},
	\]
	where $V_{k-1}(k)$ is the Vandermonde matrix $V_{k-1}(k)$ defined as in (\ref{equ:vandermatrix}).
\end{lem}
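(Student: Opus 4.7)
\medskip\noindent\textbf{Proof proposal.} The plan is to give an explicit formula for the entries of $V_{k-1}(k)^{-1}$ via Lagrange interpolation, and then bound the resulting row sums by submultiplicativity of the coefficient $\ell_{1}$-norm for polynomials.

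First I would pass to the transpose $U := V_{k-1}(k)^{\top}$, whose rows are $\phi_{k-1}(d_j)^{\top}$. For any coefficient vector $\vect c = (c_0,\dots,c_{k-1})^{\top}$, the $j$-th entry of $U\vect c$ equals $p(d_j)$ with $p(x) = \sum_{i=0}^{k-1} c_i x^{i}$. Hence inverting $U$ is exactly the problem of recovering the coefficients of a polynomial of degree $<k$ from its values at the $k$ distinct nodes $d_1,\dots,d_k$, which is solved by the Lagrange formula
\[
p(x) \;=\; \sum_{j=1}^{k} p(d_j)\, L_j(x), \qquad L_j(x) \;=\; \prod_{\substack{1\leq p\leq k \\ p\neq j}} \frac{x-d_p}{d_j-d_p}.
\]
Reading off coefficients gives $(U^{-1})_{ij} = [x^{i-1}]L_j(x)$, and since $V_{k-1}(k)^{-1} = (U^{-1})^{\top}$, I conclude
\[
\big(V_{k-1}(k)^{-1}\big)_{ij} \;=\; [x^{j-1}]\,L_i(x).
\]

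Next, by definition of the induced $\infty$-norm (maximum absolute row sum),
\[
\|V_{k-1}(k)^{-1}\|_{\infty} \;=\; \max_{1\leq i\leq k} \sum_{j=1}^{k} \big|[x^{j-1}]L_i(x)\big| \;=\; \max_{1\leq i\leq k} \|L_i\|_{\ell^{1}},
\]
where $\|\cdot\|_{\ell^{1}}$ denotes the sum of moduli of the coefficients of a polynomial. The final step is to invoke the fact that $\|\cdot\|_{\ell^{1}}$ is submultiplicative under polynomial multiplication, which follows immediately from the convolution formula for the coefficients of a product together with the triangle inequality. Applying this to the factorization
\[
L_i(x) \;=\; \prod_{\substack{1\leq p\leq k \\ p\neq i}} \frac{x-d_p}{d_i-d_p}
\]
and using $\|x-d_p\|_{\ell^{1}} = 1+|d_p|$ yields
\[
\|L_i\|_{\ell^{1}} \;\leq\; \prod_{\substack{1\leq p\leq k \\ p\neq i}} \frac{1+|d_p|}{|d_i-d_p|},
\]
which gives the claimed bound after taking the maximum over $i$.

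There is no serious obstacle here; the identification of $V^{-1}$ with the Lagrange basis is standard, and the submultiplicativity argument is a one-line convolution estimate. The only mild subtlety is keeping track of transposes when converting the Lagrange formula (which naturally inverts $U = V^{\top}$) into a statement about row sums of $V^{-1}$ itself; once that is handled, the inequality falls out directly.
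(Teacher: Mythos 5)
Your argument is correct: the identification of the rows of $V_{k-1}(k)^{-1}$ with the coefficient vectors of the Lagrange basis polynomials $L_i$, followed by submultiplicativity of the coefficient $\ell^1$-norm and $\|x-d_p\|_{\ell^1}=1+|d_p|$, gives exactly the stated bound. The paper simply cites Theorem 1 of Gautschi (1962) here, and what you have written is precisely the standard proof of that cited result, so there is nothing to add.
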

\begin{proof} See Theorem 1 in \cite{gautschi1962inverses}.\end{proof}

\vspace{0.2cm}
As a consequence, we directly have the following corollary.  
\begin{cor}\label{lem:norminversevandermonde1}
	Let $d_{\min}=\min_{i\neq j}|d_i-d_j|$ and assume that $\max_{i=1,\cdots,k}|d_i|\leq d$. Then 
	\[
	||V_{k-1}(k)^{-1}||_{\infty}\leq \frac{(1+d)^{k-1}}{(d_{\min})^{k-1}}.
	\] 
\end{cor}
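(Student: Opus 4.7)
The plan is to apply the preceding Lemma~\ref{lem:norminversevandermonde0} and then bound the resulting max-product termwise using the two hypotheses. Concretely, Lemma~\ref{lem:norminversevandermonde0} already gives
\[
\|V_{k-1}(k)^{-1}\|_{\infty}\leq \max_{1\leq i\leq k}\prod_{\substack{1\leq p\leq k\\ p\neq i}}\frac{1+|d_p|}{|d_i-d_p|},
\]
so all that remains is to bound the right-hand side by the stated quantity.

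First I would bound each factor in the product uniformly. The numerator satisfies $1+|d_p|\leq 1+d$ by the assumption $\max_{i}|d_i|\leq d$, and the denominator satisfies $|d_i-d_p|\geq d_{\min}$ by definition of $d_{\min}$. Hence every factor is at most $(1+d)/d_{\min}$. Since each product in the maximum over $i$ has exactly $k-1$ factors (indexed by $p\neq i$), we obtain
\[
\prod_{\substack{1\leq p\leq k\\ p\neq i}}\frac{1+|d_p|}{|d_i-d_p|}\leq \left(\frac{1+d}{d_{\min}}\right)^{k-1}
\]
for every $i$, and this bound is independent of $i$, so the maximum over $i$ satisfies the same bound. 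Combining with Lemma~\ref{lem:norminversevandermonde0} yields the claim.

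There is no real obstacle here: the corollary is purely a uniform termwise bound applied to an already established inequality. The only thing to be slightly careful about is the count of factors (precisely $k-1$, not $k$) in the product and the fact that the bound is $i$-independent so that passing to the maximum is immediate.
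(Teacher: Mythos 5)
Your proof is correct and is exactly the argument the paper intends: the corollary is stated as an immediate consequence of Lemma~\ref{lem:norminversevandermonde0}, obtained by bounding each of the $k-1$ factors termwise via $1+|d_p|\leq 1+d$ and $|d_i-d_p|\geq d_{\min}$. Nothing further is needed.
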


\vspace{0.2cm}
\begin{lem}\label{lem:singularvaluevandermonde2}
	For distinct $d_1,\cdots, d_k \in \mathbb C$, define the Vandermonde matrices $V_{k-1}(k), V_s(k)$ as in (\ref{equ:vandermatrix}) with $s\geq k-1$. Then the following estimate on their singular values holds:
	\[
	\frac{1}{\sqrt{k}} \frac{1}{||V_{k-1}(k)^{-1}||_{\infty}} \leq \frac{1}{||V_{k-1}(k)^{-1}||_{2}}\leq  \sigma_{\min}(V_{k-1}(k))\leq \sigma_{\min}(V_{s}(k)).
	\]
\end{lem}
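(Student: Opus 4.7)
The plan is to prove the chain of three inequalities separately, since each follows from a standard fact about matrix norms or a variational argument.

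First, I would establish the leftmost inequality $\frac{1}{\sqrt{k}} \frac{1}{\|V_{k-1}(k)^{-1}\|_{\infty}} \leq \frac{1}{\|V_{k-1}(k)^{-1}\|_{2}}$ by invoking the standard norm comparison for $k\times k$ matrices, namely $\|A\|_2 \leq \sqrt{k}\,\|A\|_\infty$ (which itself follows from $\|Ax\|_2 \leq \sqrt{k}\,\|Ax\|_\infty \leq \sqrt{k}\,\|A\|_\infty \|x\|_\infty \leq \sqrt{k}\,\|A\|_\infty \|x\|_2$ only in one direction; the sharper bound uses $\|A\|_2 \leq \sqrt{\|A\|_1 \|A\|_\infty}$, but for an upper bound in terms of $\|A\|_\infty$ alone the factor $\sqrt{k}$ suffices). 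Applying this with $A = V_{k-1}(k)^{-1}$ and taking reciprocals yields the desired inequality.

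Next, for the middle inequality I would use the fact that the $d_j$'s are distinct, so $V_{k-1}(k)$ is a square invertible Vandermonde matrix. For any invertible square matrix $A$ one has $\|A^{-1}\|_2 = 1/\sigma_{\min}(A)$, so in fact equality holds in the middle step. I would state this briefly and move on.

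For the rightmost inequality $\sigma_{\min}(V_{k-1}(k)) \leq \sigma_{\min}(V_s(k))$, the key observation is that $V_s(k)$ is obtained from $V_{k-1}(k)$ by appending the rows $(d_1^t, \ldots, d_k^t)$ for $t = k, k+1, \ldots, s$. Using the variational characterization $\sigma_{\min}(M) = \min_{\|x\|_2 = 1} \|Mx\|_2$, for any unit vector $x \in \mathbb{C}^k$ I would write
\[
\|V_s(k)\,x\|_2^2 = \|V_{k-1}(k)\,x\|_2^2 + \sum_{t=k}^{s}\Bigl|\sum_{j=1}^k x_j\, d_j^t\Bigr|^2 \geq \|V_{k-1}(k)\,x\|_2^2,
\]
and then take the minimum over $\|x\|_2 = 1$ on both sides to conclude.

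None of the three steps presents a real obstacle; the entire proof is essentially a bookkeeping exercise combining standard norm-equivalence constants, the singular-value identity for inverses, and a one-line block-row monotonicity argument for $\sigma_{\min}$. The only mild care needed is in clearly justifying the $\sqrt{k}$ factor in the first step, for which I would just cite the dimension-dependent equivalence of the $\ell_\infty$- and $\ell_2$-induced norms on $k\times k$ matrices.
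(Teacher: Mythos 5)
Your proof is correct, and it is essentially the argument the paper intends: the paper's own proof is the one-line remark that the result ``holds by using properties of matrix norms,'' and your three steps (the $\|A\|_2\le\sqrt{k}\,\|A\|_\infty$ norm comparison, the identity $\|A^{-1}\|_2=1/\sigma_{\min}(A)$ for invertible square matrices, and the row-appending monotonicity of $\sigma_{\min}$ via the variational characterization) are exactly the standard facts being invoked.
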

\begin{proof}
The result holds by using properties of matrix norms.
\end{proof}

Denote by
\[
S_{1k}^j:=\Big\{\{\tau_1,\cdots,\tau_j\}: \text{$\tau_p\in \{1,\cdots,k\},p=1,\cdots,j$ and $\tau_p\neq \tau_q$, for $p\neq q$}\Big\}.
\]
Note that there is no order in $\{\tau_1,\cdots,\tau_j\}$, i.e., $\{1,2\}$ and $\{2,1\}$ are the same sets. We then have the following decomposition of the Vandermonde matrix. 

\begin{prop}\label{vandermondegaussianelimiate1}
The Vandermonde matrix $V_{k}(k)$ defined as in (\ref{equ:vandermatrix}) can be reduced to the following form by using elementary column-addition operations, i.e.,
\begin{align}\label{equ:vandermondegaussianelimiate1}
V_{k}(k)G(1)\cdots G(k-1)DQ(1)\cdots Q(k-1)=
\begin{pmatrix}
1&0&\cdots&0\\
0&1&\cdots&0\\
\vdots&\vdots&\ddots&\vdots\\
0&0&\cdots&1\\
v_{(k+1)1}&v_{(k+1)2}&\cdots&v_{(k+1)k}
\end{pmatrix},
\end{align}
where $G(1),\cdots,G(k-1),Q(1),\cdots,Q(k-1)$ are elementary column-addition matrices, $$D=\text{diag}(1,\frac{1}{(d_2-d_1)},\cdots,\frac{1}{\Pi_{p=1}^{k-1}(d_k-d_p)})$$ and
\begin{equation}\label{equ:vandermondegaussianelimiate2}
v_{(k+1)j}=(-1)^{k-j}\sum_{\{\tau_1,\cdots,\tau_{k+1-j}\}\in S_{1k}^{k+1-j}}d_{\tau_1}\cdots d_{\tau_{k+1-j}}.
\end{equation}

\end{prop}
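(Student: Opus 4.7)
\medskip
\noindent\emph{Proof plan.}

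My strategy is to reframe the claim as two independent assertions: (a) the product $M := G(1)\cdots G(k-1)\, D\, Q(1)\cdots Q(k-1)$ equals $V_{k-1}(k)^{-1}$, where $V_{k-1}(k)$ is the square Vandermonde matrix occupying the top $k$ rows of $V_k(k)$; and (b) the bottom row of $V_k(k)\, V_{k-1}(k)^{-1}$ is exactly $((-1)^{k-1}e_k(d),\,(-1)^{k-2}e_{k-1}(d),\ldots,e_1(d))$, where $e_j$ is the $j$-th elementary symmetric polynomial in $d_1,\ldots,d_k$. Once both are verified, the stated formula for $v_{(k+1)j}$ follows by reading off coordinates, since $e_{k+1-j}(d)$ is precisely the sum $\sum_{\{\tau_1,\ldots,\tau_{k+1-j}\}\in S^{k+1-j}_{1k}} d_{\tau_1}\cdots d_{\tau_{k+1-j}}$.

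For part~(a), I would exhibit the column operations explicitly. The forward sweep $G(1),\ldots,G(k-1)$ should be the Vandermonde analogue of Gaussian elimination applied to columns: $G(i)$ subtracts a suitable multiple of the $i$-th (partially reduced) column from every subsequent column. Using the telescoping identity $d_j^l-d_i^l=(d_j-d_i)(d_j^{l-1}+d_j^{l-2}d_i+\cdots+d_i^{l-1})$ one sees inductively that after $G(1)\cdots G(k-1)$ the top $k\times k$ block is lower triangular with pivots $1,\,d_2-d_1,\,(d_3-d_1)(d_3-d_2),\,\ldots,\,\prod_{p<k}(d_k-d_p)$; these are precisely the reciprocals of the entries of $D$. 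The diagonal matrix $D$ rescales each column so that the pivots equal $1$, and the backward sweep $Q(1),\ldots,Q(k-1)$ then clears the remaining strictly-lower-triangular entries, leaving $I_k$ on top. Abstractly, this is nothing other than the $LDU$ factorization of $V_{k-1}(k)$ read backwards, so uniqueness of such factorizations guarantees that some such explicit $G(i),Q(i)$ can be chosen.

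For part~(b), once we know the top $k$ rows equal $I_k$, the bottom row is simply $(d_1^k,\ldots,d_k^k)\,V_{k-1}(k)^{-1}$. Since each $d_i$ is a root of the polynomial $P(x)=\prod_{j=1}^{k}(x-d_j)=\sum_{l=0}^k(-1)^{k-l}e_{k-l}(d)\,x^l$, Vieta's formulas give
\[
d_i^k=\sum_{l=0}^{k-1}(-1)^{k-1-l}e_{k-l}(d)\,d_i^l,\qquad i=1,\ldots,k.
\]
Hence the row vector $(d_1^k,\ldots,d_k^k)$ is the linear combination $\sum_{l=0}^{k-1}(-1)^{k-1-l}e_{k-l}(d)\,r_l$, where $r_l=(d_1^l,\ldots,d_k^l)$ is the $(l+1)$-th row of $V_{k-1}(k)$. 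Right-multiplying by $V_{k-1}(k)^{-1}$ sends each $r_l$ to the $(l+1)$-th standard basis row, so the $j$-th entry of the bottom row becomes $(-1)^{k-j}e_{k+1-j}(d)$, which is exactly $v_{(k+1)j}$ in (\ref{equ:vandermondegaussianelimiate2}).

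I expect the main obstacle to be the explicit bookkeeping in part~(a): writing down the precise triangular factors $G(i)$ and $Q(i)$ and checking at each stage which pivot appears. This can be done cleanly by induction on $k$, but it is technical. Part~(b), which carries the essential symmetric-function content of the proposition, is a short consequence of Vieta's formulas once part~(a) is in place.
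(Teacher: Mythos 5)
The paper does not actually prove this proposition in situ; it defers entirely to Appendix B of an external reference, so there is no in-paper argument to compare yours against. Your self-contained proof is correct. The decomposition into (a) identifying $G(1)\cdots G(k-1)\,D\,Q(1)\cdots Q(k-1)$ with $V_{k-1}(k)^{-1}$ and (b) computing the bottom row $(d_1^k,\ldots,d_k^k)\,V_{k-1}(k)^{-1}$ via Vieta's formulas is clean, and part (b) — which is where the actual content of formula (\ref{equ:vandermondegaussianelimiate2}) lives — is airtight: the sign and index bookkeeping $(-1)^{k-1-l}e_{k-l}(d)$ with $l=j-1$ does yield $(-1)^{k-j}e_{k+1-j}(d)$ as claimed. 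For part (a), two small points would tighten the write-up: first, each $G(i)$ and $Q(i)$ in the statement must be read as a Frobenius-type matrix adding multiples of column $i$ to all subsequent (respectively, preceding) columns, since only $k-1$ of each are allotted — your description is consistent with this but worth saying explicitly; second, rather than invoking ``uniqueness of $LDU$ factorizations,'' it is cleaner to note that column-Gaussian elimination without pivoting succeeds because every leading principal minor of $V_{k-1}(k)$ is a Vandermonde determinant in a subset of the distinct nodes and hence nonzero, and the $m$-th pivot equals $\det V_{m-1}(m)/\det V_{m-2}(m-1)=\prod_{p=1}^{m-1}(d_m-d_p)$, which matches the entries of $D$ exactly. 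With those touches, the argument is complete.
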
	
\begin{proof}
    See Appendix B in \cite{liu2021mathematicaloned}. 
\end{proof}

\begin{lem}\label{lem:projectvolumeratiolem1}
	For an $s\times k$ complex matrix $A$ of rank $k$ with $s>k$, let $V$ be the space spanned by columns of $A$ and $V^{\perp}$ be the orthogonal complement of $V$. Denote by $P_{V^{\perp}}$ the orthogonal projection to $V^{\perp}$, and set $D=(A,v)$. We have
	\[
	\min_{a\in \mathbb C^{k}}||Aa-v||_2=||P_{V^{\perp}}(v)||_2=	
	\sqrt{\frac{\det(D^*D)}{\det(A^*A)}}.
	\]
\end{lem}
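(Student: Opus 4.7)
The plan is to prove the two equalities in sequence, both via standard orthogonal decomposition arguments.

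For the first equality, I would use the orthogonal projection characterization of least squares. Write $v = v_V + v_{V^\perp}$ where $v_V \in V$ and $v_{V^\perp} = P_{V^\perp}(v) \in V^\perp$. Since $V = \mathrm{range}(A)$, we have $v_V = A a_0$ for some $a_0 \in \mathbb C^k$. Then for any $a \in \mathbb C^k$, the vector $Aa - v_V \in V$ is orthogonal to $v_{V^\perp}$, so the Pythagorean theorem gives
\[
\|Aa - v\|_2^2 = \|Aa - v_V\|_2^2 + \|v_{V^\perp}\|_2^2 \geq \|v_{V^\perp}\|_2^2,
\]
with equality at $a = a_0$. This yields $\min_a \|Aa - v\|_2 = \|P_{V^\perp}(v)\|_2$.

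For the second equality, the idea is to reduce $D$ by elementary column operations that do not change $\det(D^*D)$, and then exploit orthogonality. Define the elementary column-addition matrix
\[
E = \begin{pmatrix} I_k & -a_0 \\ 0 & 1 \end{pmatrix}, \qquad \det(E) = 1.
\]
Then $D' := DE = (A \mid v - A a_0) = (A \mid v_{V^\perp})$, and $\det((D')^*D') = |\det E|^2 \det(D^*D) = \det(D^*D)$. The key step now is to compute the block Gram matrix: since $v_{V^\perp}$ is orthogonal to every column of $A$, we have $A^* v_{V^\perp} = 0$, so
\[
(D')^*D' = \begin{pmatrix} A^*A & A^* v_{V^\perp} \\ v_{V^\perp}^* A & \|v_{V^\perp}\|_2^2 \end{pmatrix} = \begin{pmatrix} A^*A & 0 \\ 0 & \|v_{V^\perp}\|_2^2 \end{pmatrix}.
\]
The determinant of this block-diagonal matrix is $\det(A^*A) \cdot \|v_{V^\perp}\|_2^2$, hence $\det(D^*D) = \det(A^*A)\,\|P_{V^\perp}(v)\|_2^2$. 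Since $A$ has rank $k$, $\det(A^*A) > 0$, so we may divide to conclude
\[
\|P_{V^\perp}(v)\|_2 = \sqrt{\tfrac{\det(D^*D)}{\det(A^*A)}}.
\]

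There is no real obstacle here; this is a classical identity (the Gram-determinant formula for the distance from a vector to a subspace). The only subtlety worth flagging is verifying that the column operation $E$ has determinant $1$ so that the Gram determinant is preserved, and that the orthogonality $A^* v_{V^\perp} = 0$ makes the lower Gram matrix block-diagonal. Both points are immediate from the construction.
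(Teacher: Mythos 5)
Your proof is correct and complete: the Pythagorean argument for the first equality and the determinant-preserving column operation plus block-diagonalization of the Gram matrix for the second are exactly the classical route to this identity. The paper itself does not include a proof here (it only cites Lemma 1 of an earlier reference), so your self-contained argument fills that in with the standard approach and no gaps.
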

\begin{proof}
	See Lemma 1 in \cite{liu2021theorylse}.
\end{proof}

\begin{lem}\label{vandemondevolumeratio2}
	We have 
	\begin{equation} \label{eq-deter}
		\sqrt{\frac{\det(V_{k}(k)^{*}V_k(k))}{\det(V_{k-1}(k)^{*}V_{k-1}(k))}}= \sqrt{\sum_{j=0}^{k}|v_{j}|^2},
	\end{equation}
	where $V_{s}(k)$ is defined as in (\ref{equ:vandermatrix}) and $v_{j}=\sum_{\{\tau_1,\cdots,\tau_j\}\in S_{1k}^j}d_{\tau_1}\cdots d_{\tau_j}$. Especially, if $|d_j|<d,j=1,\cdots,k$, then
	\begin{equation}\label{equ:projectiondisestimate1}
		\sqrt{\frac{\det(V_{k}(k)^{*}V_k(k))}{\det(V_{k-1}(k)^{*}V_{k-1}(k))}}\leq (1+d)^k.
	\end{equation}	
\end{lem}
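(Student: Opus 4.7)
My plan is to reduce both determinants to a single rank-one perturbation computation by exploiting the column-reduction in Proposition \ref{vandermondegaussianelimiate1}.

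Let $M=G(1)\cdots G(k-1)\,D\,Q(1)\cdots Q(k-1)$, so that Proposition \ref{vandermondegaussianelimiate1} reads $V_k(k)\,M = \tilde V$ where
\[
\tilde V = \begin{pmatrix} I_k \\ w^\top \end{pmatrix}, \qquad w = (v_{(k+1)1},\ldots,v_{(k+1)k})^\top.
\]
The crucial observation is that every operation in $M$ is a column operation, hence it acts identically on $V_{k-1}(k)$, which is the submatrix of $V_k(k)$ consisting of its first $k$ rows. Applying $M$ to $V_{k-1}(k)$ therefore produces exactly the top $k$ rows of $\tilde V$, i.e.\ the identity, so $V_{k-1}(k)\,M = I_k$ and hence $V_{k-1}(k) = M^{-1}$.

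Substituting this into the ratio and using $\det(V_k(k)^*V_k(k)) = |\det(M)|^{-2}\det(\tilde V^*\tilde V)$ together with $\det(V_{k-1}(k)^*V_{k-1}(k)) = |\det(M)|^{-2}$, the factors $|\det(M)|^{-2}$ cancel and
\[
\frac{\det(V_k(k)^*V_k(k))}{\det(V_{k-1}(k)^*V_{k-1}(k))} = \det(\tilde V^*\tilde V) = \det(I_k + \bar w w^\top) = 1 + \|w\|_2^2,
\]
by the standard matrix determinant lemma for a rank-one update. Re-indexing via $l = k+1-j$ in the formula (\ref{equ:vandermondegaussianelimiate2}) and observing that $|v_{(k+1)j}| = |v_{k+1-j}|$ (the sign $(-1)^{k-j}$ is absorbed in the modulus), while the extra $1$ accounts for $|v_0|^2$ from the empty product, yields
\[
1 + \|w\|_2^2 = \sum_{j=0}^{k} |v_j|^2,
\]
which is the claimed identity (\ref{eq-deter}).

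For the bound (\ref{equ:projectiondisestimate1}) when $|d_j| \le d$, the triangle inequality gives $|v_j| \le \binom{k}{j} d^j$. Then, since $|v_j| \ge 0$, the elementary inequality $\sum_j x_j^2 \le \bigl(\sum_j x_j\bigr)^2$ for nonnegative $x_j$ combined with the binomial theorem yields
\[
\sum_{j=0}^{k} |v_j|^2 \le \Bigl(\sum_{j=0}^{k}\binom{k}{j} d^j\Bigr)^{2} = (1+d)^{2k},
\]
and taking square roots completes the proof. The only step requiring any care is verifying that the column-reduction $M$ producing the identity block from the first $k$ rows of $V_k(k)$ is consistent with mapping $V_{k-1}(k)$ to $I_k$ — but this is immediate because column operations are row-wise independent.
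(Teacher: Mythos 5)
Your proof is correct and follows essentially the same route as the paper: both rest on the column reduction of Proposition \ref{vandermondegaussianelimiate1} and on evaluating $\det(\tilde V^{*}\tilde V)=1+\|w\|_2^2$, with the same re-indexing of the $v_{(k+1)j}$ into elementary symmetric polynomials and the same binomial bound for (\ref{equ:projectiondisestimate1}). The only (valid) cosmetic difference is that you obtain $\det(V_{k-1}(k)^{*}V_{k-1}(k))=|\det(M)|^{-2}$ from the observation $V_{k-1}(k)M=I_k$, whereas the paper computes it as $1/\det(D^{*}D)$ via the Vandermonde determinant formula; both factors cancel identically in the ratio.
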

\begin{proof} Note that in Proposition \ref{vandermondegaussianelimiate1}, 
all the elementary column-addition matrices have unit determinant. As a result,
$ \det(V_{k}(k)^{*}V_k(k)) = \frac{ \det(F^{*}F)}{\det( D^*D)}$, where $F$ is the matrix in the right-hand side of (\ref{equ:vandermondegaussianelimiate1}), and $D$ is the diagonal matrix in Proposition \ref{vandermondegaussianelimiate1}. A direct calculation shows that $\det(F^{*}F)= \sum_{j=0}^{k}|v_j|^2$, where we use (\ref{equ:vandermondegaussianelimiate2}). On the other hand, $V_{k-1}(k)$ is a standard Vandermonde matrix and we have $\det(V_{k-1}(k)^*V_{k-1}(k)) =\frac{1}{\det (D^*D)}$. Combining these results, 
(\ref{eq-deter}) follows.  The last statement can be derived from (\ref{eq-deter}) and the estimate that 
\[
\sqrt{\sum_{j=0}^{k}|v_j|^2}\leq \sum_{j=0}^{k}|v_j|\leq \sum_{j=0}^{k}
\begin{pmatrix}
	k\\
	j
\end{pmatrix}
d^j=(1+d)^k.
\]
\end{proof}

\medskip
For reader's convenience, we finally present two auxiliary lemmas. For positive integers $p,q$ and complex numbers $z_1, \cdots, z_p, \hat z_1, \cdots, \hat z_q$, we define
\begin{equation}\label{equ:defineofeta}
	\eta_{p,q}(z_1, \cdots, z_p, \hat z_1, \cdots, \hat z_q)= 
	\begin{pmatrix}
		|z_1-\hat z_1|\cdots|z_1-\hat z_q|\\
		|z_2-\hat z_1|\cdots|z_2-\hat z_q|\\
		\vdots\\
		|z_p-\hat z_1|\cdots|z_p-\hat z_q|
	\end{pmatrix}.
\end{equation}
The following two properties of $\eta_{p,q}$ hold. 

\begin{lem}\label{lem:multiproductlowerbound0}
	For complex numbers $d_j, \hat d_j$'s, we have the following estimate
	\[
	\Big|\Big|\eta_{k+1,k}(d_1, \cdots, d_{k+1}, \hat d_1, \cdots, \hat d_k)\Big|\Big|_{\infty}\geq (\frac{d_{\min}}{2})^k,
	\]
	where $d_{\min}=\min_{j\neq p}|d_j-d_p|$ and $\eta_{k+1,k}(d_1, \cdots, d_{k+1}, \hat d_1, \cdots, \hat d_k)$ is defined as in (\ref{equ:defineofeta}).	
\end{lem}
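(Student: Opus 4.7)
The plan is to prove this by a pigeonhole argument. The vector $\eta_{k+1,k}(d_1,\dots,d_{k+1},\hat d_1,\dots,\hat d_k)$ has $k+1$ components, where the $i$-th entry is $\prod_{j=1}^{k}|d_i-\hat d_j|$. To bound the infinity norm from below by $(d_{\min}/2)^k$, it suffices to exhibit one index $i^\ast \in \{1,\dots,k+1\}$ for which every factor $|d_{i^\ast}-\hat d_j|$ is at least $d_{\min}/2$.

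To find such an $i^\ast$, I would argue by contradiction. Assume instead that for every $i \in \{1,\dots,k+1\}$ there exists some $j_i \in \{1,\dots,k\}$ with $|d_i-\hat d_{j_i}|<d_{\min}/2$. This defines a map $i \mapsto j_i$ from a set of size $k+1$ into a set of size $k$, so by pigeonhole there are two distinct indices $i_1 \neq i_2$ with $j_{i_1}=j_{i_2}=:j^\ast$. Then the triangle inequality gives
\[
|d_{i_1}-d_{i_2}| \leq |d_{i_1}-\hat d_{j^\ast}| + |\hat d_{j^\ast}-d_{i_2}| < \frac{d_{\min}}{2}+\frac{d_{\min}}{2}=d_{\min},
\]
contradicting the definition $d_{\min}=\min_{p\neq q}|d_p-d_q|$.

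Hence an index $i^\ast$ with the claimed property exists, and the $i^\ast$-th entry of the vector $\eta_{k+1,k}$ is a product of $k$ numbers each $\geq d_{\min}/2$, yielding the desired lower bound on the infinity norm. There is no serious obstacle here; the only thing to be careful about is that the pigeonhole is applied to a map from $k+1$ indices to $k$ indices (rather than the reverse), which is why the statement requires $k+1$ points $d_j$ against only $k$ points $\hat d_j$.
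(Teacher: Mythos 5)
Your proof is correct and follows the same approach as the paper: the paper's one-line argument ("since there are $k+1$ points $d_j$ and only $k$ points $\hat d_j$, some $d_{j_0}$ is at distance at least $d_{\min}/2$ from all $\hat d_j$") is exactly the pigeonhole-plus-triangle-inequality reasoning you spell out. Your version simply makes the implicit details explicit.
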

\begin{proof}
Because we have $k+1$ $d_j$'s and only $k$ $\hat d_j$'s, there must exist one $d_{j_0}$ so that 
\[
|d_{j_0}-\hat d_j|\geq \frac{d_{\min}}{2}, \quad j =1, \cdots, k.
\]
Then the estimate in the lemma follows. 
\end{proof}

\begin{lem}\label{lem:multiproductstability1}
	Let $d_j, \hat d_j\in \mathbb C, j=1,\cdots, k$ satisfy $|d_j|, |\hat d_j|\leq d$. Assume that 
	\begin{equation}\label{equ:satblemultiproductlemma1equ1}
		||\eta_{k,k}(d_1, \cdots, d_k, \hat d_1, \cdots, \hat d_k)||_{\infty}< \epsilon, 
	\end{equation}
	where $\eta_{k,k}(\cdots)$ is defined as in (\ref{equ:defineofeta}), and that 
	\begin{equation}\label{equ:satblemultiproductlemma1equ2}
		d_{\min} = \min_{p\neq q}|d_p-d_j| \geq 2\epsilon^{\frac{1}{k}}.
	\end{equation}
	Then after reordering $d_j$'s, we have
	\begin{equation}\label{equ:satblemultiproductlemma1equ4}
		\babs{\hat d_j -d_j}< \frac{d_{\min}}{2},  \quad j=1,\cdots,k,
	\end{equation}
	and moreover
	\begin{equation}\label{equ:satblemultiproductlemma1equ5}
		\babs{\hat d_j -d_j}\leq \Big(\frac{2}{d_{\min}}\Big)^{k-1}\epsilon, \quad j=1,\cdots, k.
	\end{equation}
\end{lem}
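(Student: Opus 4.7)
The plan is to extract, from the hypothesis $\|\eta_{k,k}\|_\infty < \epsilon$, a nearby $\hat d_p$ for each $d_j$, then use the separation assumption $d_{\min} \geq 2\epsilon^{1/k}$ to promote this pairing to a genuine permutation and finally to sharpen the coarse bound $\epsilon^{1/k}$ into the geometric bound $(2/d_{\min})^{k-1}\epsilon$.

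First I would apply a geometric-mean/minimum argument to the hypothesis. For each fixed $j$, we have $\prod_{p=1}^{k}|d_j - \hat d_p| < \epsilon$, so there exists an index $\pi(j)$ with $|d_j - \hat d_{\pi(j)}| < \epsilon^{1/k}$. By assumption \eqref{equ:satblemultiproductlemma1equ2}, $\epsilon^{1/k} \leq d_{\min}/2$, hence $|d_j - \hat d_{\pi(j)}| < d_{\min}/2$ for each $j$.

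Next I would show that $\pi:\{1,\dots,k\}\to\{1,\dots,k\}$ is a bijection. Suppose $\pi(j_1)=\pi(j_2)=p$ for some $j_1\neq j_2$; then the triangle inequality gives
\[
|d_{j_1}-d_{j_2}| \leq |d_{j_1}-\hat d_p| + |\hat d_p - d_{j_2}| < d_{\min},
\]
contradicting the definition of $d_{\min}$. Thus $\pi$ is injective, hence a permutation, and after reordering the $\hat d_j$'s (relabelling $\hat d_{\pi(j)}$ as $\hat d_j$) we obtain \eqref{equ:satblemultiproductlemma1equ4}.

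Finally I would bootstrap the bound. With the new labelling, for any $p\neq j$ the reverse triangle inequality yields
\[
|d_j - \hat d_p| \geq |d_j - d_p| - |d_p - \hat d_p| > d_{\min} - \tfrac{d_{\min}}{2} = \tfrac{d_{\min}}{2}.
\]
Inserting this into the full product $|d_j-\hat d_j|\cdot \prod_{p\neq j}|d_j-\hat d_p| < \epsilon$ from the hypothesis gives
\[
|d_j - \hat d_j| < \frac{\epsilon}{\prod_{p\neq j}|d_j - \hat d_p|} \leq \Big(\frac{2}{d_{\min}}\Big)^{k-1}\epsilon,
\]
which is \eqref{equ:satblemultiproductlemma1equ5}. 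The only delicate point is Step~2 (bijectivity of $\pi$); this is where the full strength of the separation hypothesis $d_{\min} \geq 2\epsilon^{1/k}$ is used, and without it the coarse pairing need not be one-to-one so the final estimate would break.
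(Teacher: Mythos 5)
Your proof is correct, and while it shares the paper's overall skeleton (pair each point with a nearby partner, upgrade the pairing to a bijection, then bootstrap the product bound into \eqref{equ:satblemultiproductlemma1equ5}), the way you establish the pairing is genuinely different and more elementary. The paper works from the $\hat d_p$ side: it first shows each $\hat d_p$ has a $d_j$ within $d_{\min}/2$ by a contradiction argument that factors $\eta_{k,k}$ through $\eta_{k,k-1}$ and invokes the pigeonhole bound of Lemma \ref{lem:multiproductlowerbound0}, and then rules out two $\hat d_p$'s sharing the same $d_{j_0}$ by a second contradiction involving $\eta_{k-1,k-2}$. You instead work from the $d_j$ side: the one-line geometric-mean observation that $\prod_p|d_j-\hat d_p|<\epsilon$ forces some factor below $\epsilon^{1/k}\leq d_{\min}/2$, and injectivity of the resulting map follows from a single triangle inequality. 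This bypasses Lemma \ref{lem:multiproductlowerbound0} and the $\eta$-factorizations entirely, and as a bonus yields the sharper intermediate estimate $|d_j-\hat d_{\pi(j)}|<\epsilon^{1/k}$ rather than merely $<d_{\min}/2$. The final sharpening step is the same in both arguments. One could quibble that you divide by $\prod_{p\neq j}|d_j-\hat d_p|$ in the last display, but you have already shown each such factor exceeds $d_{\min}/2>0$, so this is harmless; the conclusion you obtain is in fact the strict version of \eqref{equ:satblemultiproductlemma1equ5}.
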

\begin{proof}
See Appendix \ref{section:proofmultiproductstablem}. 
\end{proof}

\subsection{Lower-bound for the approximation problem (\ref{equ:vandermondeapprox})}\label{sec-vand-lowerbound}
In this section, we derive a lower-bound for the nonlinear approximation problem (\ref{equ:vandermondeapprox}). We first consider the special case when $v$ is a Vandermonde vector. 

\begin{thm}\label{thm:spaceapproxlowerbound0}
	Let $k\geq 1$ and $\hat d_1,\cdots, \hat d_{k}$ be $k$ distinct complex numbers with $|\hat d_j|\leq \hat d, 1\leq j\leq k$. Define $A := \big(\phi_{k}(\hat d_1), \cdots, \phi_k(\hat d_k)\big)$, where $\phi_k(\hat d_j)$'s are defined as in (\ref{equ:defineofphi}). Let $V$ be the $k$-dimensional space spanned by the column vectors of $A$, and let $V^{\perp}$ be the one-dimensional orthogonal complement of $V$ in $\mathbb C^{k+1}$. Let $P_{V^{\perp}}$ be the orthogonal projection onto $V^{\perp}$ in $\mathbb C^{k+1}$. Then we have 
	\[
	\min_{a\in \mathbb C^k}\btwonorm{Aa- \phi_{k}(x)} = \btwonorm{P_{V^{\perp}}(\phi_k(x))} =\babs{v^{*}\phi_{k}(x)}\geq \frac{1}{(1+\hat d)^k}\babs{\Pi_{j=1}^k (x- \hat  d_j)},
	\]
	where $v$ is a unit vector in $V^{\perp}$ and $v^{*}$ is its conjugate transpose.
\end{thm}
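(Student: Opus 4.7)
The plan is to reduce the assertion to a computation of determinants, then apply the two prior lemmas (\ref{lem:projectvolumeratiolem1} and \ref{vandemondevolumeratio2}) directly.

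First I would dispose of the two equalities. The equality $\min_{a}\btwonorm{Aa-\phi_k(x)}=\btwonorm{P_{V^\perp}(\phi_k(x))}$ is just the standard characterisation of least-squares approximation as orthogonal projection. The equality $\btwonorm{P_{V^\perp}(\phi_k(x))}=|v^{*}\phi_k(x)|$ follows because $V^{\perp}$ is one-dimensional: with $v$ a unit vector in $V^{\perp}$, the orthogonal projection onto $V^{\perp}$ acts by $w\mapsto (v^{*}w)v$, so its norm is $|v^{*}\phi_k(x)|$.

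Next, the main inequality. I would apply Lemma \ref{lem:projectvolumeratiolem1} with the same $A$ and with $v$ there replaced by $\phi_k(x)$, giving
\[
\btwonorm{P_{V^{\perp}}(\phi_k(x))}=\sqrt{\frac{\det(D^{*}D)}{\det(A^{*}A)}},\qquad D=(A,\phi_k(x)).
\]
Now $D=(\phi_k(\hat d_1),\ldots,\phi_k(\hat d_k),\phi_k(x))$ is a $(k+1)\times(k+1)$ square Vandermonde matrix whose determinant is given by the classical product formula
\[
\det D=\prod_{1\le i<j\le k}(\hat d_j-\hat d_i)\cdot\prod_{j=1}^{k}(x-\hat d_j),
\]
so $\sqrt{\det(D^{*}D)}=|\det D|$ equals the absolute value of the right-hand side.

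For the denominator, $A$ is exactly the matrix $V_k(k)$ (in the paper's notation) associated with the nodes $\hat d_1,\ldots,\hat d_k$. Lemma \ref{vandemondevolumeratio2} then yields
\[
\sqrt{\det(A^{*}A)}\le (1+\hat d)^{k}\sqrt{\det(V_{k-1}(k)^{*}V_{k-1}(k))}=(1+\hat d)^{k}\Bigl|\prod_{1\le i<j\le k}(\hat d_j-\hat d_i)\Bigr|,
\]
where for the last equality I use that $V_{k-1}(k)$ is the standard square $k\times k$ Vandermonde matrix, whose determinant is the Vandermonde product. Dividing the two displays, the $\prod_{i<j}|\hat d_j-\hat d_i|$ factors cancel and I obtain the claimed bound $\frac{1}{(1+\hat d)^{k}}\bigl|\prod_{j=1}^{k}(x-\hat d_j)\bigr|$.

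I do not expect a serious obstacle here: the structure is entirely algebraic, and both key inputs are already in the paper. The only mild subtlety is keeping the two Vandermonde matrices straight --- the $(k+1)\times k$ tall matrix $A=V_k(k)$ in the numerator bound versus the $k\times k$ square matrix $V_{k-1}(k)$ whose Vandermonde determinant cancels against the $\hat d_j$ factors coming from $\det D$ --- and making sure the node bounds $|\hat d_j|\le\hat d$ are the ones invoked when applying Lemma \ref{vandemondevolumeratio2}.
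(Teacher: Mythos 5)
Your argument is correct and follows essentially the same route as the paper's proof: Lemma \ref{lem:projectvolumeratiolem1} for the determinant ratio, the bound (\ref{equ:projectiondisestimate1}) from Lemma \ref{vandemondevolumeratio2} to pass from $\det(A^{*}A)$ to the square Vandermonde determinant, and the classical product formula for $\det D$. The only cosmetic difference is that you compute $\det D$ and $\det V_{k-1}(k)$ separately and cancel, while the paper writes the ratio directly; the substance is identical.
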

\begin{proof}
By Lemma \ref{lem:projectvolumeratiolem1}, it follows that
\[
\min_{a\in \mathbb C^k}\btwonorm{Aa- \phi_{k}(x)} = \sqrt{\frac{\text{det}(D^{*} D)}{\text{det}(A^{*}A)}},
\]
where $D=\big(\phi_{k}(\hat d_1),\cdots, \phi_{k}(\hat d_k), \phi_{k}(x)\big)$. Denote $\tilde{A} = \big(\phi_{k-1}(\hat d_1),\cdots, \phi_{k-1}(\hat d_k)\big)$. By (\ref{equ:projectiondisestimate1}), we have
\[
\sqrt{\frac{\text{det}(A^{*}A)}{\text{det}(\tilde{A}^{*}\tilde{A})}} \leq (1+\hat d)^k.
\]
Therefore, 
\[
\min_{a\in \mathbb C^k}\btwonorm{Aa- \phi_{k}(x)} \geq \frac{1}{(1+\hat d)^{k}}\sqrt{\frac{\text{det}(D^{*}D)}{\text{det}(\tilde{A}^{*}\tilde{A})}}.
\]
Note that $D$ and $\tilde{A}$ are square Vandermonde matrices. We can use the determinant formula to derive that 
\[
\min_{a\in \mathbb C^k}\btwonorm{Aa- \phi_{k}(x)} \geq \frac{1}{(1+\hat d)^{k}} \frac{|\Pi_{1\leq t<p\leq k} (\hat d_t - \hat d_p)\Pi_{q=1}^k(x-\hat d_q)|}{|\Pi_{1\leq t<p\leq k} (\hat d_t - \hat d_p)|} = \frac{1}{(1+\hat d)^k}|\Pi_{j=1}^k (x- \hat d_j)|.
\] 
This completes the proof of the theorem.
\end{proof}

\medskip

We now consider the approximation problem (\ref{equ:vandermondeapprox})
for the general case when $v$ is a linear combination of Vandermonde vectors. 

\begin{thm}\label{thm:spaceapproxlowerbound1}
	Let $k\geq 1$. Assume $(k+1)$ different complex numbers $d_j\in \mathbb C, j=1, \cdots, k+1$ with $|d_j|\leq d$ and $(k+1)$ $a_j\in \mathbb C$ with $|a_j|\geq m_{\min}$. Let $d_{\min}:=\min_{j\neq p}|d_j-d_p|$.  For $q\leq k$, let $\hat a(q)=(\hat a_1, \hat a_2, \cdots, \hat a_{q})^{\top}, a=(a_1, a_2, \cdots, a_{k+1})^{\top}$, and 
	\[
	\hat A(q) = \big(\phi_{2k}(\hat d_1),\ \cdots,\ \phi_{2k}(\hat d_q)\big),\  A = \big(\phi_{2k}(d_1),\ \cdots,\ \phi_{2k}(d_{k+1})\big),
	\]
	where $\phi_{2k}(z)$ is defined as in (\ref{equ:defineofphi}). Then
	\begin{align*}
		\min_{\hat a_p,\hat d_p\in \mathbb C, |\hat d_p|\leq \hat d, p=1,\cdots,q}||\hat A(q)\hat a(q)-Aa||_2\geq \frac{m_{\min}(d_{\min})^{2k}}{2^k(1+d)^{k}(1+\hat d)^k}.
	\end{align*}
\end{thm}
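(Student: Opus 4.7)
The plan is to reduce the theorem to the case $q=k$, extract a single ``leftover'' source via Lemma \ref{lem:multiproductlowerbound0}, express the residual as an orthogonal projection in Vandermonde space, and then evaluate that projection using Lemma \ref{lem:projectvolumeratiolem1} together with a sharpened form of Lemma \ref{vandemondevolumeratio2}.

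The reduction to $q=k$ is immediate: any $q$-term approximation with $q<k$ can be augmented by padding with extra terms $\hat a_p=0$ (and arbitrary $\hat d_p$ in the disk) to a $k$-term approximation of the same value, so the minimum over $q$-term candidates is at least the minimum over $k$-term candidates. Next, by Lemma \ref{lem:multiproductlowerbound0} applied to the $k+1$ points $\{d_j\}$ against the $k$ points $\{\hat d_p\}$, there exists $j_0$ (take $j_0=k+1$ after relabeling) with $\prod_{p=1}^{k}|d_{k+1}-\hat d_p|\geq (d_{\min}/2)^k$. Set $W=(\hat A(k),\phi_{2k}(d_1),\ldots,\phi_{2k}(d_k))$, a $(2k+1)\times 2k$ Vandermonde matrix on the $2k$ nodes $\xi_1,\ldots,\xi_{2k}\in\{\hat d_p,d_j\}$, and let $U$ denote its column span. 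Since $\hat A(k)\hat a-\sum_{j=1}^{k}a_j\phi_{2k}(d_j)$ lies in $U$, projecting onto $U^{\perp}$ gives
\[
\|\hat A(k)\hat a-Aa\|_2\;\geq\;|a_{k+1}|\,\|P_{U^\perp}\phi_{2k}(d_{k+1})\|_2\;\geq\;m_{\min}\,\|P_{U^\perp}\phi_{2k}(d_{k+1})\|_2.
\]

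To evaluate the projection, set $\Phi=(W,\phi_{2k}(d_{k+1}))$, a square $(2k+1)\times(2k+1)$ Vandermonde on the $2k+1$ nodes. Lemma \ref{lem:projectvolumeratiolem1} gives $\|P_{U^\perp}\phi_{2k}(d_{k+1})\|_2 = \sqrt{\det(\Phi^{*}\Phi)/\det(W^{*}W)}$, where the numerator equals $|\prod_{i<j}(\xi_j-\xi_i)|^2$ over all $2k+1$ nodes. For the denominator I would apply Lemma \ref{vandemondevolumeratio2} with its index $k$ replaced by $2k$, which (identifying $W$ with $V_{2k}(2k)$) gives the identity $\det(W^{*}W)=|\prod_{1\leq i<j\leq 2k}(\xi_j-\xi_i)|^2\sum_{l=0}^{2k}|e_l(\xi)|^2$ with $e_l$ the $l$-th elementary symmetric polynomial in the $2k$ nodes of $W$. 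The key sharpening is to bound $\sum_l|e_l|$ not by the coarse $(1+\max(d,\hat d))^{2k}$ of inequality (\ref{equ:projectiondisestimate1}), but by the factored generating identity $\sum_{l=0}^{2k}e_l(\xi)z^l=\prod_{i=1}^{2k}(1+\xi_i z)$ at $z=1$, giving $\sqrt{\sum_l|e_l|^2}\leq\sum_l|e_l(\xi)|\leq\prod_i(1+|\xi_i|)\leq(1+\hat d)^k(1+d)^k$, which uses the split of $W$'s nodes into $k$ $\hat d$'s and $k$ $d$'s. Canceling the Vandermonde factors one finds
\[
\|P_{U^\perp}\phi_{2k}(d_{k+1})\|_2 \;\geq\; \frac{\prod_{p=1}^{k}|d_{k+1}-\hat d_p|\,\prod_{j=1}^{k}|d_{k+1}-d_j|}{(1+\hat d)^k(1+d)^k} \;\geq\; \frac{(d_{\min}/2)^k\,d_{\min}^k}{(1+\hat d)^k(1+d)^k},
\]
completing the bound after multiplication by $m_{\min}$.

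The main obstacle is precisely the sharpening of $\det(W^{*}W)$: a naive approach combining Fischer's inequality with an iterated application of Lemma \ref{vandemondevolumeratio2} would give denominators of order $(1+\hat d)^{k(k+1)}(1+d)^{k(k+1)}$, far weaker than the desired $(1+d)^k(1+\hat d)^k$. The correct route applies Lemma \ref{vandemondevolumeratio2} once to $W$ viewed as a Vandermonde on $2k$ nodes, and then exploits the product factorization $\prod_i(1+\xi_i z)$, which respects the partition of nodes into $\{\hat d_p\}$ and $\{d_j\}$. Edge cases where nodes coincide (making $\Phi$ or $W$ rank-deficient) are handled by a standard continuity argument since both sides of the asserted inequality depend continuously on the $\hat d_p$'s.
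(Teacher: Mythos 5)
Your proof is correct, and it reaches the stated constant by a genuinely different route from the paper. The paper reduces to $q=k$ as you do, but then works with the $k+1$ shifted sub-blocks $\hat A_l, A_l$ ($l=0,\dots,k$) of the tall Vandermonde matrices, projects each onto the one-dimensional orthogonal complement of the span of $\hat A_0$ in $\mathbb C^{k+1}$, collects the resulting values $\beta_l$ into a linear system $B\hat\eta=\beta$, and inverts $B$ via the $\ell^\infty$ bound of Corollary \ref{lem:norminversevandermonde1}; there the factor $(1+d)^k$ comes from $\|B^{-1}\|_\infty$ and the factor $(1+\hat d)^k$ from the single-vector estimate of Theorem \ref{thm:spaceapproxlowerbound0}. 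You instead stay in the full space $\mathbb C^{2k+1}$, absorb $k$ of the true nodes into the approximating subspace alongside the $k$ fake nodes, and evaluate a single projection by the Gram-determinant ratio; the two factors $(1+d)^k(1+\hat d)^k$ then emerge together from the refined elementary-symmetric-function bound $\sum_l|e_l(\xi)|\le\prod_i(1+|\xi_i|)$, which is exactly the sharpening of (\ref{equ:projectiondisestimate1}) needed to avoid the lossy $(1+\max(d,\hat d))^{2k}$. Your argument is shorter and dispenses with both the matrix inversion and Theorem \ref{thm:spaceapproxlowerbound0}; what the paper's organization buys is that its intermediate machinery (the passage from $\|\beta\|_\infty$ to $\|\eta_{k+1,k}\|_\infty$) is reused almost verbatim for the stability estimate of Theorem \ref{spaceapproxlowerbound3}, which your single-shot projection does not produce. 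Your treatment of degenerate configurations (coincident or overlapping $\hat d_p$'s) by perturbation is sound, since the leftover node $d_{k+1}$ selected through Lemma \ref{lem:multiproductlowerbound0} is automatically separated from every $\hat d_p$, so only the rank of $W$, not of $\Phi$ relative to $W$, can degenerate.
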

\begin{proof}
\textbf{Step 1}. Note that for $q<k$, we have 
\[\min_{\hat a_p, \hat d_p\in \mathbb C, |\hat d_p|\leq d, p=1,\cdots,q}||\hat A(q)\hat a(q)-Aa||_2\geq \min_{\hat a_p,\hat d_p\in \mathbb C, |\hat d_p|\leq d, p=1,\cdots,k}||\hat A(k)\hat a(k)-Aa||_2.\]
Hence we need only to consider the case when $q=k$. It then suffices to show that for any given $\hat d_j\in \mathbb C, |\hat d_j|\leq \hat d, j=1, \cdots, k$, the following holds
\begin{equation}\label{equ:spaceapproxlowerboundequ-4}
	\min_{\hat a_p\in \mathbb C, p=1,\cdots,k}||\hat A(k)\hat a(k)-Aa||_2\geq \frac{m_{\min}(d_{\min})^{2k}}{2^k(1+d)^{k}(1+\hat d)^k}.
\end{equation}
So we fix $\hat d_1,\cdots,\hat d_k$ in our subsequent argument.\\
\textbf{Step 2}. For $l=0, \cdots, k$, we define the following partial matrices 
\[
\hat A_{l}=
\left(\begin{array}{ccc}
	\hat d_1^l&\cdots&\hat d_k^l\\
	\hat d_1^{l+1}&\cdots &\hat d_k^{l+1}\\
	\vdots &\vdots &\vdots\\ 
	\hat d_1^{l+k}&\cdots&\hat d_k^{l+k}
\end{array}
\right),
\quad 
A_{l}=
\left(\begin{array}{ccc}
	(d_1)^l&\cdots&(d_{k+1})^l\\
	(d_1)^{l+1}&\cdots &(d_{k+1})^{l+1}\\
	\vdots &\vdots &\vdots\\ 
	(d_1)^{l+k}&\cdots& (d_{k+1})^{l+k}
\end{array}
\right).
\]
It is clear that for all $l$, 
\begin{equation}\label{equ:spaceapproxlowerboundequ-3}
	\min_{\hat a(k)\in \mathbb C^{k}}||\hat A(k)\hat a(k)-Aa||_2 \geq \min_{\hat a\in \mathbb C^k}||\hat A_l \hat a- A_l a||_2.
\end{equation}
\textbf{Step 3}. For each $l$, observe that $\hat A_l = \hat A_0 \text{diag}(\hat d_1^l,\ \cdots,\ \hat d_k^l), A_l=A_0\text{diag}(d_1^l, \cdots, d_{k+1}^l)$, and thus
\begin{equation}\label{equ:spaceapproxlowerboundequ-2}
	\min_{\hat a\in \mathbb C^{k}}||\hat A_l\hat a-A_la||_2\geq \min_{\hat \alpha_l\in \mathbb C^{k}}||A_0\hat \alpha_l-A_0\alpha_l||_2,
\end{equation}
where $\alpha_l=\left(a_1(d_1)^l,\cdots,a_{k+1}(d_{k+1})^l\right)^{\top}$. Let $V$ be the space spanned by the column vectors of $A_0$. Then the dimension of $V$ is $k$, and the dimension of $V^\perp$, the orthogonal complement of $V$ in $\mathbb C^{k+1}$, is one.  Let 
$P_{V^{\perp}}$ be the orthogonal projection onto $V^{\perp}$. Note that $||P_{V^{\perp}}u||_2=|v^{*}u|$ for $u\in \mathbb{R}^{k+1}$, where $v$ is a unit vector in $V^{\perp}$ and $v^{*}$ is its conjugate transpose. We have
\begin{align}\label{equ:spaceapproxlowerboundequ-1}
	\min_{\hat \alpha_l\in \mathbb C^k}||\hat A_0\hat \alpha_l-A_0\alpha_l||_2=||P_{V^{\perp}}(A_0\alpha_l)||_2=|v^{*}A_0\alpha_l |=\Big|\sum_{j=1}^{k+1}a_j(d_j)^l v^{*} \phi_{k}(d_j)\Big| = |\beta_l|,
\end{align} 
where
\[
\beta_l = \sum_{j=1}^{k+1}a_j(d_j)^l v^{*} \phi_{k}(d_j), \quad \text{for $l=0, 1, \cdots, k.$}
\] 
\textbf{Step 4}. Denote $\beta = (\beta_0, \cdots, \beta_k)^{\top}$. We have $B\hat\eta=\beta$, where
\[B=\left(\begin{array}{cccc}
	a_1&a_2&\cdots&a_{k+1}\\
	a_1d_1&a_2d_2&\cdots&a_{k+1}d_{k+1}\\
	\vdots&\vdots&\vdots&\vdots\\
	a_1(d_1)^{k}&a_2(d_2)^{k}&\cdots&a_{k+1}(d_{k+1})^k
\end{array}\right),\quad \hat \eta=
\left(\begin{array}{c}
	v^{*} \phi_{k}(d_1)\\
	v^{*} \phi_{k}(d_2)\\
	\vdots\\
	v^{*} \phi_{k}(d_{k+1})
\end{array}\right).
\]
Corollary \ref{lem:norminversevandermonde1} yields
\begin{align*}
	||\hat \eta||_{\infty}=||B^{-1}\beta||_{\infty}\leq ||B^{-1}||_{\infty}||\beta||_{\infty}\leq \frac{(1+d)^{k}}{m_{\min}(d_{\min})^{k}}||\beta||_{\infty}.
\end{align*}
On the other hand, applying Theorem \ref{thm:spaceapproxlowerbound0} to each term $|v^{*} \phi_{k}(d_j)|$, $j=1, 2, \cdots k+1$,  we have 
\[
||\hat \eta||_{\infty}\geq \frac{1}{(1+\hat d)^k} ||\eta_{k+1, k}(d_1, \cdots, d_{k+1}, \hat d_1, \cdots, \hat d_{k} )||_{\infty},
\]
where $\eta_{k+1, k}(\cdots)$ is defined as in (\ref{equ:defineofeta}). Combining this inequality with Lemma \ref{lem:multiproductlowerbound0}, we get 
\[
||\hat \eta||_{\infty}\geq \frac{(d_{\min})^{k}}{2^k(1+\hat d)^k}.
\]
Then it follows that 
\[
||\beta||_{\infty} \geq \frac{m_{\min}(d_{\min})^{2k}}{2^k(1+d)^{k}(1+\hat d)^{k}}.
\]
Therefore, recalling (\ref{equ:spaceapproxlowerboundequ-3})--(\ref{equ:spaceapproxlowerboundequ-1}), we arrive at
\[
\min_{\hat a(k)\in \mathbb C^{k}}||\hat A(k)\hat a(k)-Aa||_2\geq \max_{0\leq l\leq k} \min_{\hat a\in \mathbb C^k}||\hat A_l\hat a-A_la||_2= \max_{0\leq l\leq k} |\beta_l|=||\beta||_{\infty}\geq \frac{m_{\min}(d_{\min})^{2k}}{2^k(1+d)^{k}(1+\hat d)^k}.
\]
This proves (\ref{equ:spaceapproxlowerboundequ-4}) and hence the theorem.
\end{proof}
\subsection{Stability of the approximation problem (\ref{equ:vandermondeapprox})}\label{sec-vand-stability}
In the section we present a stability result for the approximation problem (\ref{equ:vandermondeapprox}).

\begin{thm}\label{spaceapproxlowerbound3}
	Let $k\geq 1$. Assume $k$ different complex numbers $d_j\in \mathbb C, j=1, \cdots, k$ with $|d_j|\leq d$ and $k$ $a_j\in \mathbb C$ with $|a_j|\geq m_{\min}$. Let $d_{\min}:=\min_{p\neq q}|d_p-d_q|$. Assume that $\hat d_j\in \mathbb C, j=1, \cdots,k$ with $|\hat d_j|\leq d$ satisfy
	\[
	||\hat A\hat a-Aa||_2< \sigma, 
	\]
	where $\hat a=(\hat a_1,\cdots, \hat a_k)^{\top}$, $a = (a_1,\cdots, a_k)^{\top}$, and 
	\[
	\hat A = \big(\phi_{2k-1}(\hat d_1),\ \cdots,\ \phi_{2k-1}(\hat d_k)\big),\  A = \big(\phi_{2k-1}(d_1),\ \cdots,\ \phi_{2k-1}(d_{k})\big).
	\]
	Then 
	\[
	\Big|\Big|\eta_{k,k}(d_1,\cdots, d_k, \hat d_1, \cdots, \hat d_k)\Big|\Big|_{\infty}<\frac{(1+d)^{2k-1}}{ d_{\min}^{k-1}}\frac{\sigma}{m_{\min}}.
	\]
\end{thm}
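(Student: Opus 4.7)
The plan is to mirror the block-extraction argument used in the proof of Theorem \ref{thm:spaceapproxlowerbound1}, but with the window sizes adjusted to reflect that here both $\hat A$ and $A$ have exactly $k$ columns (rather than $k$ and $k+1$). Concretely, I will split the $2k$ rows of the residual $\hat A\hat a - A a$ into overlapping $(k+1)$-row slabs: for $l=0,1,\ldots,k-1$, let $\hat A_l$ and $A_l$ be the $(k+1)\times k$ sub-matrices formed by rows $l,l+1,\ldots,l+k$ of $\hat A$ and $A$ respectively. Clearly $\|\hat A\hat a - A a\|_2 \geq \|\hat A_l \hat a - A_l a\|_2$ for every such $l$. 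Factoring out the diagonal scalings $\hat A_l = \hat A_0\,\mathrm{diag}(\hat d_j^{\,l})$ and $A_l = A_0\,\mathrm{diag}(d_j^{\,l})$, I can rewrite the slab residual as $\|\hat A_0\hat\alpha_l - A_0\alpha_l\|_2$, where $\hat\alpha_l$ has entries $\hat a_j\hat d_j^{\,l}$ and $\alpha_l$ has entries $a_j d_j^{\,l}$.

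Next, since $\hat A_0$ is a $(k+1)\times k$ matrix of rank $k$ (because the $\hat d_j$'s are distinct), its column space $V\subset\mathbb{C}^{k+1}$ has exactly a one-dimensional orthogonal complement $V^\perp$. Letting $v$ be a unit vector in $V^\perp$, orthogonal projection kills the $\hat A_0\hat\alpha_l$ term, so $\|\hat A_0\hat\alpha_l - A_0\alpha_l\|_2 \geq |v^*A_0\alpha_l| = |\beta_l|$, where $\beta_l = \sum_{j=1}^{k}a_j d_j^{\,l}\, v^*\phi_k(d_j)$. Collecting the $k$ scalars $\beta_0,\ldots,\beta_{k-1}$ into $\beta$, I get the linear system $B\hat\eta = \beta$, where $B$ is the $k\times k$ matrix with $(l,j)$-entry $a_j d_j^{\,l}$ and $\hat\eta_j = v^*\phi_k(d_j)$. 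Since $B = V_{k-1}(k)\,\mathrm{diag}(a_1,\ldots,a_k)$, Corollary \ref{lem:norminversevandermonde1} together with $|a_j|\geq m_{\min}$ gives
\[
\|\hat\eta\|_\infty \leq \|B^{-1}\|_\infty\,\|\beta\|_\infty \leq \frac{(1+d)^{k-1}}{m_{\min}\,d_{\min}^{k-1}}\,\|\beta\|_\infty.
\]

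On the other hand, Theorem \ref{thm:spaceapproxlowerbound0} applied with the nodes $\hat d_1,\ldots,\hat d_k$ (with common bound $\hat d = d$) and $x=d_j$ lower-bounds each coordinate of $\hat\eta$ by $|\hat\eta_j| \geq \frac{1}{(1+d)^k}\prod_{p=1}^{k}|d_j - \hat d_p|$; this is precisely the $j$-th entry of $\eta_{k,k}(d_1,\ldots,d_k,\hat d_1,\ldots,\hat d_k)$. Combining the two bounds yields
\[
\|\eta_{k,k}(d_1,\ldots,d_k,\hat d_1,\ldots,\hat d_k)\|_\infty \leq (1+d)^k\,\|\hat\eta\|_\infty \leq \frac{(1+d)^{2k-1}}{m_{\min}\,d_{\min}^{k-1}}\,\|\beta\|_\infty.
\]
Finally, from $|\beta_l| = \|P_{V^\perp}(A_0\alpha_l)\|_2 \leq \|\hat A_0\hat\alpha_l - A_0\alpha_l\|_2 = \|\hat A_l\hat a - A_l a\|_2 \leq \|\hat A\hat a - A a\|_2 < \sigma$, I conclude $\|\beta\|_\infty < \sigma$, which delivers the stated estimate.

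The one step that takes a bit of care is the bookkeeping of block dimensions: the choice of $(k+1)\times k$ slabs (instead of $k\times k$, which would make $V^\perp$ trivial, or $(k+2)\times k$, which would give the wrong count of $\beta_l$'s) is exactly what simultaneously makes $V^\perp$ one-dimensional, forces $\hat A_0\hat\alpha_l$ to be annihilated by projection, and produces precisely $k$ slab indices so that the resulting system $B\hat\eta = \beta$ is square and invertible. Once this dimensional match is in place, every other step is a direct application of tools already proved earlier in the section.
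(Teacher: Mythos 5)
Your proposal is correct and follows essentially the same route as the paper's proof: the same overlapping $(k+1)\times k$ slabs $\hat A_l, A_l$ for $l=0,\dots,k-1$, the same projection onto the one-dimensional orthogonal complement of the column space of $\hat A_0$ to produce the scalars $\beta_l$, and the same combination of Corollary \ref{lem:norminversevandermonde1} (applied to $B=V_{k-1}(k)\,\mathrm{diag}(a_j)$) with Theorem \ref{thm:spaceapproxlowerbound0} to convert $\|\beta\|_\infty<\sigma$ into the stated bound on $\eta_{k,k}$. The dimensional bookkeeping you highlight is exactly the point the paper relies on as well.
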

\begin{proof}
Since $||\hat A\hat a-Aa||_2<\sigma$, we have 
\begin{equation*}\label{spaceapproxlowerbound3equ0}
	\min_{\hat \alpha \in \mathbb C^k}||\hat A\hat \alpha-Aa||_2<\sigma,
\end{equation*}
and hence
\begin{equation}\label{spaceapproxlowerbound3equ1}
	\max_{0\leq l\leq k-1}\min_{\hat \alpha \in \mathbb C^k}||\hat A_l \hat \alpha -A_la||_2\leq \min_{\hat \alpha \in \mathbb C^k}||\hat A\hat \alpha-Aa||_2<\sigma,
\end{equation}
where 
\[
\hat A_{l}=
\left(\begin{array}{ccc}
	\hat d_1^l&\cdots&\hat d_k^l\\
	\hat d_1^{l+1}&\cdots &\hat d_k^{l+1}\\
	\vdots &\vdots &\vdots\\ 
	\hat d_1^{l+k}&\cdots&\hat d_k^{l+k}
\end{array}
\right),
\quad 
A_{l}=
\left(\begin{array}{ccc}
	d_1^l&\cdots&d_{k}^l\\
	d_1^{l+1}&\cdots &d_{k}^{l+1}\\
	\vdots &\vdots &\vdots\\ 
	d_1^{l+k}&\cdots& d_{k}^{l+k}
\end{array}
\right).
\]
For each $l$, from the decomposition $\hat A_l = \hat A_0 \text{diag}(\hat d_1^l,\ \cdots,\ \hat d_k^l), A_l=A_0\text{diag}((d_1)^l, \cdots, (d_{k})^l)$, we get
\begin{equation}\label{spaceapproxlowerbound3equ2}
	\min_{\hat \alpha \in \mathbb C^{k}}||\hat A_l\hat \alpha-A_la||_2\geq \min_{\hat \alpha_l\in \mathbb C^{k}}||\hat A_0\hat \alpha_l-A_0\alpha_l||_2,
\end{equation}
where $\alpha_l=(a_1(d_1)^l,\cdots,a_{k}(d_{k})^l)^{\top}$. Let $V$ be the space spanned by the column vectors of $\hat A_0$. Then the dimension of $V$ is $k$, and $V^\perp$, the orthogonal complement of $V$ in $\mathbb C^{k+1}$ is of dimension one. We let $v$ be a unit vector in $V^{\perp}$ and let 
$P_{V^{\perp}}$ be the orthogonal projection onto $V^{\perp}$. Similarly to (\ref{equ:spaceapproxlowerboundequ-1}), we have
\begin{align}\label{spaceapproxlowerbound3equ3}
	\min_{\hat \alpha_l\in \mathbb C^k}||\hat A_0\hat \alpha_l-A_0\alpha_l||_2=||P_{V^{\perp}}(A_0\alpha_l)||_2=|v^{*}A_0 \alpha_l|=\Big|\sum_{j=1}^{k}a_j(d_j)^lv^{*}\phi_{k}(d_j)\Big|=|\beta_l|,
\end{align} 
where $\beta_l = \sum_{j=1}^{k}a_j(d_j)^lv^{*}\phi_{k}(d_j)$.  Let $\beta = (\beta_0,\cdots, \beta_{k-1})^{\top}$. Moreover, similarly to Step 4 in the proof of Theorem \ref{thm:spaceapproxlowerbound1}, we have 
\[
\Big|\Big|\eta_{k,k}(d_1,\cdots, d_k, \hat d_1, \cdots, \hat d_k)\Big|\Big|_{\infty} \leq \frac{(1+d)^{2k-1}}{m_{\min}(d_{\min})^{k-1}} ||\beta||_{\infty}.
\]
On the other hand,  (\ref{spaceapproxlowerbound3equ1})--(\ref{spaceapproxlowerbound3equ3}) indicate that $||\beta||_{\infty}<\sigma$. Hence, we obtain that
\[
\Big|\Big|\eta_{k,k}(d_1,\cdots, d_k,\hat  d_1, \cdots,\hat  d_k)\Big|\Big|_{\infty} \leq \frac{(1+d)^{2k-1}}{(d_{\min})^{k-1}}\frac{\sigma}{m_{\min}}.
\]
This completes the proof. 
\end{proof}

% \section{Discussions}
% \subsection{The extension to higher dimensions}

% \subsection{The results are optimal}

% \subsection{When we directly attains the lower bound}
% When the sources are located well that we can find a space where the distance between point sources are still preserved, then our algorithm still works very well.

\section{Conclusions and future works}
In this paper, we have improved the estimates of resolution limits in two-dimensional super-resolution problems. We also theoretically demonstrate the optimal performance of a sparsity-promoting algorithm. Leveraging the new techniques in the proof, we have proposed a coordinate-combination-based model order detection algorithm and a coordinate-combination-based MUSIC algorithm for DOA estimation in two dimensions. The superiority of the introduced algorithms were demonstrated both theoretically or numerically. 

Our work is also a start of many new topics. Firstly, one could extend the techniques to three- and $k$-dimensional spaces to improve the resolution estimates in higher dimensional super-resolution problems. Secondly, the idea of coordinate-combination could inspire new algorithms for two-dimensional DOA estimations in the case of multiple snapshots. These works will be presented in a near future.

%Define $L_{(r_1, r_2)}(\vect y_j, \vect y_p) = \vect r^\top (\vect \Phi_p - \vect \Phi_q)$ with $\vect r = (e^{i r_1}, e^{i r_2})^\top$.
%
%Consider $r_2 \in \{0, \Delta, \cdots, n \Delta\}$. We now prove that if $|L_{(0, q\Delta)}(\vect y_j, \vect y_p)|< \frac{\Delta}{2\pi}$ for some $0\leq q\leq n$. Then $|L_{(0, q'\Delta)}(\vect y_j, \vect y_p)|\geq \frac{\Delta}{2\pi}$ for $q'\neq q$. If $|L_{(0, q\Delta)}(\vect y_j, \vect y_p)|< \frac{\Delta}{2\pi}$, then we have  
%\[
%e^{i \Omega \vect y_{j,1}}- e^{i \Omega \vect y_{p,1}} + e^{iq\Delta}(e^{i\Omega \vect y_{j,2}}- e^{i\Omega \vect y_{p,2}}) = res
%\]
%with $|res|<\frac{\Delta}{2\pi}$ Thus for other $q'\neq q$, 
%\begin{align*}
%L_{(0, q'\Delta)}(\vect y_j, \vect y_p) = e^{iq\Delta}(e^{i\Omega \vect y_{j,2}}- e^{i\Omega \vect y_{p,2}}) 	
%\end{align*}

\appendix 	

\section{Proof of Lemma \ref{lem:multiproductstability1}}\label{section:proofmultiproductstablem}
\begin{proof}
\textbf{Step 1.} We claim that 
for each $\hat d_p, 1\leq p\leq k$, there exists one $d_j$ such that $|\hat d_p-d_j|<\frac{d_{\min}}{2}$.
By contradiction, suppose that there exists $p_0$ such that $|d_j - \hat d_{p_0}|\geq \frac{d_{\min}}{2}$ for all $1\leq j\leq k$. Observe that 
\begin{align*}
&\eta_{k,k}(d_1,\cdots,d_k, \hat d_1,\cdots, \hat d_k)\\
=&\text{diag}\left(|
d_1- \hat d_{p_0}|,\cdots,|d_{k}-\hat d_{p_0}|\right)\eta_{k,k-1}(d_1,\cdots,d_{k},\hat  d_1,\cdots,\hat d_{p_0-1}, \hat d_{p_0+1},\cdots,\hat d_k).
\end{align*}
We write $$\eta_{k,k}=\eta_{k,k}(d_1,\cdots,d_k, \hat d_1,\cdots, \hat d_k) \quad \mbox{and} \quad \eta_{k,k-1}=\eta_{k,k-1}(d_1,\cdots,d_{k}, \hat d_1,\cdots,\hat d_{p_0-1}, \hat d_{p_0+1},\cdots,\hat d_k).$$ Using Lemma \ref{lem:multiproductlowerbound0}, we have
\[
||\eta_{k,k}||_{\infty}\geq \frac{d_{\min}}{2}||\eta_{k, k-1}||_{\infty} \geq \Big(\frac{d_{\min}}{2}\Big)^{k}\geq \epsilon,
\]	
where we have used (\ref{equ:satblemultiproductlemma1equ2}) in the last inequality above. This contradicts (\ref{equ:satblemultiproductlemma1equ1}) and hence proves our claim.\\
\textbf{Step 2.} We claim that for each $d_j, 1\leq j\leq k$, there exists one and only one $\hat d_p$ such that $$|d_j - \hat d_p|< \frac{d_{\min}}{2}.$$ It suffices to show that for each $d_j, 1\leq j\leq k$, there is only one $\hat d_p$ such that $|d_j-\hat d_p|<\frac{d_{\min}}{2}$. By contradiction, suppose that there exist $p_1,p_2,$ and $j_0$ such that $|d_{j_0}-\hat d_{p_1}|<\frac{d_{\min}}{2}, |d_{j_0}-\hat d_{p_2}|<\frac{d_{\min}}{2}$. Then for all $j \neq j_0$, we have 
\begin{equation}\label{equ:satblemultiproductlemma1equ3}
	\Big|(d_j- \hat d_{p_1}) (d_j-\hat d_{p_2})\Big|\geq \frac{(d_{\min})^2}{4}.
\end{equation}
Similarly to the argument in Step 1, we separate the factors involving $\hat d_{p_1}, \hat d_{p_2}, d_{j_0}$ from $\eta_{k,k}$ and  consider
\[
\eta_{k-1,k-2}=	\eta_{k-1,k-2}(d_1,\cdots,d_{j_0-1},d_{j_0+1},\cdots,d_k, \hat d_1,\cdots, \hat d_{p_1-1}, \hat d_{p_1+1}, \cdots, \hat d_{p_2-1}, \hat d_{p_2+1},\cdots, \hat d_k).
\]
Note that the components of $\eta_{k-1,k-2}$ differ from those of $\eta_{k,k}$ only by the factors $|(d_j-\hat d_{p_1})(d_{j}-\hat d_{p_2})|$ for $j=1,\cdots,j_0-1,j_0+1,\cdots,k$. We can show that
$$	
||\eta_{k,k}||_{\infty}\geq \frac{(d_{\min})^2}{4} ||\eta_{k-1,k-2}||_{\infty} \geq \epsilon, 
$$
where we have used Lemma \ref{lem:multiproductlowerbound0} and (\ref{equ:satblemultiproductlemma1equ2}) for establishing the last inequality above. This contradicts (\ref{equ:satblemultiproductlemma1equ1}) and hence proves our claim. \\
\textbf{Step 3.} By the result in Step 2,  we can reorder $\hat d_j$'s to get
\[
|\hat d_j -d_j|< \frac{d_{\min}}{2}, \quad j=1,\cdots,k.
\]
We now prove (\ref{equ:satblemultiproductlemma1equ5}). It is clear that $|\hat d_{p} - d_{j}|> \frac{d_{\min}}{2}, p\neq j$. Thus
\begin{equation} \label{eq-222}
	|(d_j-\hat d_1)\cdots(d_j- \hat d_k)|> |d_{j}-d_j|(\frac{d_{\min}}{2})^{k-1}, \quad j=1, 2, \cdots, k. 
\end{equation}  
Further, we get 
\[
|d_{j} -\hat d_j|< \Big(\frac{2}{d_{\min}}\Big)^{k-1} ||\eta_{k,k}||_{\infty}\leq \Big(\frac{2}{d_{\min}}\Big)^{k-1} \epsilon, \quad j=1, 2, \cdots, k.
\]
This completes the proof of the lemma. 
\end{proof}
\bibliographystyle{plain}
\bibliography{references_final}	

\end{document}